\numberwithin{equation}{section}
\newtheorem{theorem}{Theorem}
\newtheorem{corollary}{Corollary}
\newtheorem{proposition}{Proposition}
\newtheorem{lemma}{Lemma}
\newtheorem{definition}{Definition}
\newcommand{\avg}[1]{\langle#1\rangle}
\newcommand{\Avg}[1]{\left\langle#1\right\rangle}
\newcommand{\bs}[1]{\boldsymbol{#1}}
\newcommand{\bk}[1]{\left(#1\right)}
\newcommand{\Bk}[1]{\left[#1\right]}
\newcommand{\BK}[1]{\left\{#1\right\}}
\newcommand{\mc}[1]{\mathcal #1}
\newcommand{\gce}[1]{#1_*}
\newcommand{\HS}{\textrm{HS}}
\DeclareMathOperator*{\argmin}{arg\,min}
\DeclareMathOperator{\expect}{\mathbb E}
\DeclareMathOperator{\MSE}{MSE}
\newcommand{\guta}{Gu\c{t}\u{a}}
\begin{document}

\title{Operational meanings of a generalized conditional
  expectation in quantum metrology}

\author{Mankei Tsang}
\email{mankei@nus.edu.sg}
\orcid{0000-0001-7173-1239}

\homepage{https://blog.nus.edu.sg/mankei/}
\affiliation{Department of Electrical and Computer Engineering,
  National University of Singapore, 4 Engineering Drive 3, Singapore
  117583}

\affiliation{Department of Physics, National University of Singapore,
  2 Science Drive 3, Singapore 117551}


\maketitle

\begin{abstract}
  A unifying formalism of generalized conditional expectations (GCEs)
  for quantum mechanics has recently emerged, but its physical
  implications regarding the retrodiction of a quantum observable
  remain controversial. To address the controversy, here I offer
  operational meanings for a version of the GCEs in the context of
  quantum parameter estimation.  When a quantum sensor is corrupted by
  decoherence, the GCE is found to relate the operator-valued optimal
  estimators before and after the decoherence. Furthermore, the error
  increase, or regret, caused by the decoherence is shown to be equal
  to a divergence between the two estimators. The real weak value as a
  special case of the GCE plays the same role in suboptimal
  estimation---its divergence from the optimal estimator is precisely
  the regret for not using the optimal measurement.  For an
  application of the GCE, I show that it enables the use of dynamic
  programming for designing a controller that minimizes the estimation
  error. For the frequentist setting, I show that the GCE leads to a
  quantum Rao-Blackwell theorem, which offers significant implications
  for quantum metrology and thermal-light sensing in particular. These
  results give the GCE and the associated divergence a natural,
  useful, and incontrovertible role in quantum decision and control
  theory.
\end{abstract}

\maketitle

\section{Introduction}
The conditional expectation is an essential concept in classical
probability and statistics \cite{parth05}. Given some observed data in
an experiment, the conditional expectation of a hidden random variable
is the best approximation of the hidden variable in a least-squares
sense and thus plays a central role in Bayesian estimation theory
\cite{parth05,berger}. Another important application is in the
Rao-Blackwell theorem \cite{rao45,blackwell47}, which exploits the
variance reduction property of the conditional expectation to improve
an estimator and has found widespread uses in classical statistics
\cite{lehmann98,thompson12,robert}.

Many attempts have been made over the past few decades to generalize
the concept of conditional expectation for quantum mechanics
\cite{umegaki54,personick71,belavkin73,accardi82,petz,aav,dressel15,hayashi,ohki15,ohki18,tsang22b}.
Umegaki's version for von Neumann algebra may be the earliest
\cite{umegaki54}. His axiomatic definition is so restrictive, however,
that his conditional expectation does not exist in many situations
\cite{holevo,petz}; this existence problem has led Holevo to remark
that ``conditional expectations play a less important part in quantum
than in classical probability'' \cite{holevo}.  In quantum estimation
theory, Personick \cite{personick71} and Belavkin and Grishanin
\cite{belavkin73} proposed an operator-valued estimator that is
optimal for Bayesian parameter estimation and can also be regarded as
a quantum conditional expectation. On the other hand, Accardi and
Cecchini proposed yet another conditional expectation for von Neumann
algebra \cite{accardi82}, which became instrumental in Petz's work on
quantum sufficient channels \cite{petz}. Many other investigations of
quantum conditional expectations can be found in the literature on
weak values \cite{aav,dressel15}, quantum filtering
\cite{belavkin_qnd,bouten,wiseman_milburn}, quantum retrodiction
\cite{barnett00,barnett21}, and quantum smoothing
\cite{yanagisawa,smooth,smooth_pra1,smooth_pra2,guevara,chantasri21,ohki15,ohki18,tsang22b}. In
recent years, it has been recognized \cite{ohki15,ohki18,tsang22b}
that many of these quantum conditional expectations can be unified
under a mathematical formalism of generalized conditional expectations
(GCEs) \cite{hayashi}. The GCE formalism can also be rigorously
connected to the concepts of quantum states over time and generalized
Bayes rules \cite{leifer13,horsman17}, as shown by Parzygnat and
Fullwood \cite{parzygnat23}.

Despite the mathematical progress, the GCEs have provoked fierce
debates regarding their physical meaning and usefulness, especially
when it comes to the weak values
\cite{ferrie_combes,ferrie_combes_coin,vaidman14,jordan14,gough19,gough20,james21}.
The debates center on two issues: whether it makes any sense to
estimate the value of a quantum observable in the past (retrodiction)
and whether the GCEs offer any use in quantum metrology, where quantum
sensors are used to estimate classical parameters.  This work
addresses both issues by demonstrating how a certain version of the
GCEs---of which the real weak value is a special case---can play
fundamental roles in quantum parameter estimation in both Bayesian and
frequentist settings.

When a quantum sensor suffers from decoherence, I show that the GCE
relates the two Personick estimators before and after the decoherence.
Moreover, the error increase due to the decoherence, henceforth called
the regret, is shown to be equal to a divergence measure between the
two estimators.  By regarding a suboptimal measurement as a
decoherence process, I show that the weak value is a special case of
the GCE and its divergence from the Personick estimator is precisely
the regret due to the measurement suboptimality. For the frequentist
setting, I also propose a quantum Rao-Blackwell theorem based on the
GCE.

These fundamental results lead to many significant consequences in
quantum metrology. To wit, the Markovian nature of the GCE is shown to
enable the use of dynamic programming \cite{bertsekas} for optimizing
a measurement protocol, while
Corollaries~\ref{cor_mono}--\ref{cor_therm} in this work reveal the
monotonicity of the Bayesian error, the optimality of von Neumann
measurements in Bayesian and frequentist settings, the optimality of
symmetric estimators for symmetric states, the optimality of
direct-sum estimators for direct-sum states, and the optimality of
photon counting for certain thermal states. A key feature of these
optimality results is that they are direct statements about the
mean-square errors and are valid for both biased and unbiased
estimators, unlike many results based on Cram\'er-Rao-type bounds,
which require heavy assumptions about the estimators and the density
operators.

This paper is organized as follows. To set the stage and make the
paper self-contained, Sec.~\ref{sec_review} reviews the concept of
GCEs, emphasizing their significance in minimizing a divergence
quantity between two operators at different times
\cite{tsang22b}. Section~\ref{sec_prop} presents some fundamental
properties of the GCEs that are key to their applications to quantum
metrology, including a chain rule (Theorem~\ref{thm_chain}) that gives
the GCEs a Markovian property for a sequence of channels and a
Pythagorean theorem (Theorem~\ref{thm_pyth}) that gives the divergence
an additive property. Sections~\ref{sec_bayes} and \ref{sec_qrbt}
present the core results of this work, namely, the applications of a
version of the GCEs to quantum parameter estimation. This GCE follows
a particular operator ordering based on the Jordan product and is
shown to play a natural role in quantum estimation theory.

Section~\ref{sec_bayes} studies the role of the GCE in Bayesian
quantum parameter estimation, a topic that has received renewed
interest in recent years \cite{twc,macieszczak14,rubio19}.  Within
Sec.~\ref{sec_bayes}, Sec.~\ref{sec_bayes_gen} presents the general
relations between the Personick estimators for a sensor under
decoherence, Sec.~\ref{sec_dp} shows how they enable the use of
dynamic programming in quantum sensor measurement design, and
Sec.~\ref{sec_weak} discusses the special case of the real weak value.

Section~\ref{sec_qrbt} switches to the frequentist setting and
presents the quantum Rao-Blackwell theorem, Theorem~\ref{thm_qrbt}, in
Sec.~\ref{sec_qrbt_gen}.  Sections~\ref{sec_tensor}--\ref{sec_sum}
present some significant consequences of the quantum theorem for
quantum metrology, while Sec.~\ref{sec_therm} discusses an application
of the theorem to thermal-light sensing.


Section~\ref{sec_con} is the conclusion, listing some open problems.
Appendix~\ref{app_predict} discusses the complementary concept of
  quantum prediction. Appendix~\ref{app_ce} reviews the classical
  conditional expectation to give the quantum formalism a more
  familiar context.  Appendix~\ref{app_gauss} gives an explicit
formula for the GCE for Gaussian systems. Appendix~\ref{app_vn}
  defines the von Neumann measurement. Appendix~\ref{app_dp} presents
  the dynamic-programming algorithm.  Appendix~\ref{app_rb} justifies
  the name of Theorem~\ref{thm_qrbt} by deriving the classical
  Rao-Blackwell theorem from it. Appendix~\ref{app_compare}
discusses the differences and relations between the Bayesian and
frequentist settings. Appendix~\ref{app_sinha} compares this work with
some prior works. Appendix~\ref{app_U} offers an alternative
derivation of the quantum U-statistics, first introduced by \guta{}
and Butucea \cite{guta10}, using Theorem~\ref{thm_qrbt}.
Appendix~\ref{app_proofs} contains the more technical proofs.

\section{\label{sec_review}Review of generalized conditional
  expectations}
This section follows Ref.~\cite{tsang22b} and Chap.~6 in
Ref.~\cite{hayashi}. Let $\mc O(\mc H)$ be the space of bounded
operators on a Hilbert space $\mc H$ and $\rho \in \mc O(\mc H)$ be a
density operator.  Define an inner product between two operators
$A,B \in \mc O(\mc H)$ and a norm as
\begin{align}
\Avg{B,A}_\rho &\equiv \trace B^\dagger \mc E_\rho A,
&
\norm{A}_\rho &\equiv \sqrt{\avg{A,A}_\rho},
\end{align}
where $\mc E_\rho:\mc O(\mc H)\to \mc O(\mc H)$ is a linear,
self-adjoint, and positive-semidefinite map with respect to the
Hilbert-Schmidt inner product
\begin{align}
\Avg{B,A}_{\HS} \equiv \trace B^\dagger A.
\end{align}
The weighted inner product $\avg{\cdot,\cdot}_\rho$ is a
generalization of the inner product between two random variables in
classical probability theory \cite{parth05}. Some desirable properties
of $\mc E$ are\footnote{ Equation~(\ref{ax3}) without the commutation
  condition is proposed in Ref.~\cite{hayashi} and repeated in
  Ref.~\cite{tsang22b}, but it turns out to be false for many operator
  products, including the Jordan product, as pointed out by
  Ref.~\cite{parzygnat23}. The results in Ref.~\cite{tsang22b} remain
  correct if the commutation condition in Eq.~(\ref{ax3}) is imposed.}
\begin{align}
\mc E_\rho A &= \rho A \textrm{ if $\rho,A$ commute},
\label{ax1}
\\
\mc E_\rho(U^\dagger A U) &= U^\dagger \bk{\mc E_{U\rho U^\dagger}A} U,
\label{ax2}
\\
\mc E_{\rho_1\otimes \rho_2}(A_1 \otimes A_2)
&= (\mc E_{\rho_1} A_1) \otimes (\mc E_{\rho_2}A_2)
\nonumber\\
\textrm{if $\rho_1,A_1$ commute} & 
\textrm{ or $\rho_2,A_2$ commute,}
\label{ax3}
\\
\norm{A_1\otimes I_2}_{\rho} &\le \norm{A_1}_{\trace_2\rho},
\label{ax4}
\end{align}
where $A$ is any operator on $\mc H$, $U$ is any unitary operator on
$\mc H$, $\rho_j$ is any density operator on $\mc H_j$, $\mc H_j$ is
any Hilbert space, $A_j$ is any operator on $\mc H_j$, $I_j$ is the
identity operator on $\mc H_j$, $\rho$ in Eq.~(\ref{ax4}) is any
density operator on $\mc H_1\otimes \mc H_2$, and $\trace_j$ denotes
the partial trace with respect to $\mc H_j$.  Examples of $\mc E$ that
satisfy Eqs.~(\ref{ax1})--(\ref{ax4}) include
\begin{align}
\mc E_\rho A &= \frac{1}{2}\bk{\rho A + A\rho},
\label{jordan}
\\
\mc E_\rho A &= \rho A,\\
\mc E_\rho A &= \sqrt{\rho} A \sqrt{\rho}.
\label{root}
\end{align}
In the following, I fix $\mc E$ to be a map that satisfies
Eqs.~(\ref{ax1})--(\ref{ax4}).

Let $L_2(\rho)$ be the completion of $\mc O(\mc H)$ with respect to
the norm $\norm{\cdot}_\rho$, such that it becomes a weighted Hilbert
space for the operators. Each element of $L_2(\rho)$ is then an
equivalence class of operators with zero distance between them. If
$\mc H$ is infinite-dimensional, $\mc O(\mc H)$ may not be complete
and $L_2(\rho)$ may include unbounded operators as well
\cite{holevo11}.  The infinite-dimensional case is much more
complicated to treat with rigor, so I consider only finite-dimensional
Hilbert spaces in the following for simplicity, and assume that the
results still hold for a couple of the infinite-dimensional problems
studied later in Appendix~\ref{app_gauss} and Sec.~\ref{sec_therm}.

\begin{definition}
\label{def_div}
Let $\sigma$ be a density operator on $\mc H_1$ and
$\mc F:\mc O(\mc H_1) \to \mc O(\mc H_2)$ be a completely positive,
trace preserving (CPTP) map that models a quantum channel.  Then the
divergence between an operator $A \in L_2(\sigma)$ and another
operator $B \in L_2(\mc F\sigma)$ is defined as
\begin{align}
D_{\sigma,\mc F}(A,B) &\equiv \norm{A}_\sigma^2 - 
2 \Re \Avg{\mc F^* B,A}_\sigma
+\norm{B}_{\mc F\sigma}^2,
\label{D}
\end{align}
where $\Re$ denotes the real part and $\mc F^*$ denotes the
Hilbert-Schmidt adjoint of $\mc F$.
\end{definition}
This divergence, introduced in Ref.~\cite{tsang22b}, can be related to
the more usual definition of distance in a larger Hilbert space by
considering the Stinespring representation
\begin{align}
\mc F \sigma = \trace_{10} U(\sigma\otimes\tau) U^\dagger,
\label{stinespring}
\end{align}
where $\tau$ is a density operator on $\mc H_2\otimes\mc H_0$,
$\mc H_0$ is some auxiliary Hilbert space, $U$ is a unitary operator
on $\mc H_1\otimes\mc H_2\otimes \mc H_0$ that models the evolution
from time $t$ to time $T \ge t$, and $\trace_{10}$ is the partial
  trace over $\mc H_1$ and $\mc H_0$.  Let $\rho = \sigma\otimes\tau$
and define the Heisenberg pictures of $A$ and $B$ as
\begin{align}
A_t &\equiv A\otimes I_2\otimes I_0,
&
B_T &\equiv U^\dagger (I_1\otimes B\otimes I_0)U.
\label{hei}
\end{align}
Then it can be shown that
\begin{align}
D_{\sigma,\mc F}(A,B) &\ge \norm{A_t - B_T}_{\rho}^2,
\label{D_dist}
\end{align}
and the divergence is nonnegative.  Furthermore, if the $\mc E$ map
obeys the stricter equality condition in Eq.~(\ref{ax4}), then the
equality in Eq.~(\ref{D_dist}) holds, and $D$ is exactly the squared
distance in the larger Hilbert space.

\begin{definition}
\label{def_gce}
Given a density operator $\sigma$, a CPTP map $\mc F$, and an
  $\mc E$ map that satisfies Eqs.~(\ref{ax1})--(\ref{ax4}), the GCE
$\gce{\mc F} :L_2(\sigma) \to L_2(\mc F\sigma)$ of $A \in L_2(\sigma)$
is defined as 
\begin{align}
\gce{\mc F} A &\equiv \argmin_{B \in
      L_2(\mc F\sigma)} D_{\sigma,\mc F}(A,B),
\label{gce_def}
\end{align}
which leads to 
\begin{align}
\Avg{c,\gce{\mc F} A}_{\mc F\sigma}
&= \Avg{\mc F^* c, A}_{\sigma}
\quad
\forall c \in L_2(\mc F\sigma).
\label{gce}
\end{align}
More explicitly, $\gce{\mc F} A$ is an equivalence class of operators
that satisfy
\begin{align}
\mc E_{\mc F\sigma} \gce{\mc F} A &= \mc F \mc E_\sigma A.
\label{gce_exp}
\end{align}
\end{definition}
Equation~(\ref{gce}) can be derived by assuming the ansatz
$B = \gce{\mc F} A + \epsilon c$ with $\epsilon \in \mathbb R$,
$c \in L_2(\mc F\sigma)$, and minimizing $D$ with respect to
$\epsilon$. Given an $A$, the existence and uniqueness of
$\gce{\mc F} A$ as an element of $L_2(\mc F\sigma)$ can be proved by
viewing Eq.~(\ref{gce}) as a linear functional of $c$ and applying the
Riesz representation theorem \cite{debnath05}.
Equation~(\ref{gce_exp}) can also be derived independently from a
state-over-time formalism \cite{parzygnat23}.  With the GCE, the
minimum divergence becomes
\begin{align}
D_{\sigma,\mc F}(A,\gce{\mc F} A)
&= \min_{B \in L_2(\mc F\sigma)}D_{\sigma,\mc F}(A,B)
\nonumber\\
&= \norm{A}_\sigma^2 - \norm{\gce{\mc F} A}_{\mc F\sigma}^2.
\label{Dmin}
\end{align}

Note that the GCE map $\gce{\mc F}$ depends implicitly on the $\mc E$
map and the prior state $\sigma$; the choice of $\mc E$ and $\sigma$
should be clear from the context in the following and, when necessary,
$\sigma$ is stated explicitly in the superscript as
$\gce{\mc F}^\sigma$. Note also that Chap.~6 in Ref.~\cite{hayashi}
writes $\gce{\mc F}^\sigma$ as $\mc F_{\sigma,x}$, where $x$ denotes
the $\mc E$ map being used, while Ref.~\cite{tsang22b} writes
$\gce{\mc F}^\sigma$ as $\mc F_\sigma$.  Appendix~\ref{app_predict}
presents more interesting formulas concerning $\mc F^*$ and
$\gce{\mc F}$ that justify the new notations, while
Appendix~\ref{app_ce} presents a brief and elementary review of the
classical conditional expectation to give the quantum formalism a more
familiar context.

Some examples are in order. Consider the unitary channel
\begin{align}
\mc F\sigma &= U \sigma U^\dagger,
\label{unitary}
\end{align}
where $U$ is a unitary operator on $\mc H_1$. A solution to any GCE
is
\begin{align}
\gce{\mc F} A &= U A U^\dagger,
\label{gce_unitary}
\end{align}
leading to $D_{\sigma,\mc F}(A,\gce{\mc F} A) = 0$.
Equation~(\ref{gce_unitary}) is called the Heisenberg representation
in quantum computing \cite{gottesman99}, and the GCEs can be regarded
as generalizations of the Heisenberg representation for open systems.

With the root product given by Eq.~(\ref{root}), the GCE becomes the
Accardi-Cecchini GCE \cite{accardi82,petz}, and its Hilbert-Schmidt
adjoint $(\gce{\mc F})^*$ is known as the Petz recovery map,
which is useful in quantum information theory \cite{wilde17}.

Appendix~\ref{app_gauss} presents another example where $\sigma$ is a
Gaussian state, $\mc F$ is a Gaussian channel \cite{holevo19}, and $A$
is a quadrature operator. Then the GCE in terms of the Jordan product
given by Eq.~(\ref{jordan}) and the associated divergence turn out to
have the same formulas as the classical conditional expectation and
its mean-square error for the usual linear Gaussian model
\cite{anderson}.

\section{\label{sec_prop}Fundamental properties}
With Eqs.~(\ref{gce})--(\ref{Dmin}), it is straightforward to prove
the following crucial properties of the GCE:
\begin{theorem}[Chain rule\footnote{I follow Ref.~\cite{accardi82} to
    call this property a chain rule. Note that Ref.~\cite{hayashi}
    calls it associativity, while
    Refs.~\cite{parzygnat23,parzygnat23a} call it compositionality.};
  see Eq.~(6.22) in Ref.~\cite{hayashi}]
\label{thm_chain}
Let $\mc G:\mc O(\mc H_2) \to \mc O(\mc H_3)$ be another CPTP map.
Then the GCE of the composite map $\mc G\mc F$ is given by
\begin{align}
\gce{(\mc G \mc F)}^\sigma A  &= 
\gce{\mc G}^{\mc F\sigma} \gce{\mc F}^\sigma A,
\end{align}
which can be abbreviated as
\begin{align}
\gce{(\mc G \mc F)}  &= \gce{\mc G} \gce{\mc F}.
\end{align}
In other words, the GCE for a chain of CPTP maps is given by a chain
of the GCEs associated with the individual CPTP maps.
\end{theorem}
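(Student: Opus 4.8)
The plan is to prove the chain rule directly from the defining relation~(\ref{gce}) of the GCE, which characterizes $\gce{\mc F}A$ uniquely (as an element of the relevant weighted Hilbert space) by the identity $\langle c, \gce{\mc F}A\rangle_{\mc F\sigma} = \langle \mc F^* c, A\rangle_\sigma$ for all $c \in L_2(\mc F\sigma)$. Since the GCE of $\mc G\mc F$ is the unique element $X \in L_2(\mc G\mc F\sigma)$ satisfying $\langle d, X\rangle_{\mc G\mc F\sigma} = \langle (\mc G\mc F)^* d, A\rangle_\sigma$ for all $d \in L_2(\mc G\mc F\sigma)$, it suffices to verify that the candidate $X = \gce{\mc G}^{\mc F\sigma}\gce{\mc F}^\sigma A$ satisfies this same identity. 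Uniqueness (via Riesz, as invoked in the excerpt after Definition~\ref{def_gce}) then finishes the argument.

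The key computation is a short chain of rewrites. First I would note that $(\mc G\mc F)^* = \mc F^* \mc G^*$ for Hilbert-Schmidt adjoints of CPTP maps. Then, starting from the right-hand side with an arbitrary $d \in L_2(\mc G\mc F\sigma)$:
\begin{align}
\Avg{d, \gce{\mc G}^{\mc F\sigma}\gce{\mc F}^\sigma A}_{\mc G\mc F\sigma}
&= \Avg{\mc G^* d, \gce{\mc F}^\sigma A}_{\mc F\sigma}
= \Avg{\mc F^*\mc G^* d, A}_\sigma
= \Avg{(\mc G\mc F)^* d, A}_\sigma,
\end{align}
where the first equality applies~(\ref{gce}) to the GCE $\gce{\mc G}^{\mc F\sigma}$ with the test operator $c = \mc G^* d \in L_2(\mc F\sigma)$, and the second applies~(\ref{gce}) to $\gce{\mc F}^\sigma$ with test operator $\mc F^*\mc G^* d \in L_2(\sigma)$. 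This is exactly the characterizing identity for $\gce{(\mc G\mc F)}^\sigma A$, so the two operators coincide as elements of $L_2(\mc G\mc F\sigma)$.

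The main point requiring a little care—rather than a genuine obstacle—is the bookkeeping of domains and codomains: one must check that $\mc G^* d$ indeed lands in $L_2(\mc F\sigma)$ and $\mc F^*\mc G^* d$ in $L_2(\sigma)$ so that the inner products and the applications of~(\ref{gce}) are legitimate, and that $\gce{\mc F}^\sigma A \in L_2(\mc F\sigma)$ so that $\gce{\mc G}^{\mc F\sigma}$ can act on it. These are consequences of the boundedness of $\mc G^*$ and $\mc F^*$ between the respective weighted spaces (which in finite dimensions is automatic) together with Definition~\ref{def_gce}. Alternatively, and perhaps more transparently, one can bypass~(\ref{gce}) entirely and work from the explicit form~(\ref{gce_exp}): apply $\mc E_{\mc G\mc F\sigma}$ to the candidate, use $\mc E_{\mc G\mc F\sigma}\gce{\mc G}^{\mc F\sigma}(\cdot) = \mc G\,\mc E_{\mc F\sigma}(\cdot)$ and then $\mc E_{\mc F\sigma}\gce{\mc F}^\sigma A = \mc F\,\mc E_\sigma A$, obtaining $\mc E_{\mc G\mc F\sigma}\gce{\mc G}^{\mc F\sigma}\gce{\mc F}^\sigma A = \mc G\mc F\,\mc E_\sigma A$, which is precisely~(\ref{gce_exp}) for the composite map. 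I would present the first (inner-product) version as the main proof since it is cleanest, and mention the second as a remark.
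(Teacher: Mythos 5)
Your proof is correct and follows essentially the same route the paper intends: the paper states the chain rule as a straightforward consequence of Eqs.~(\ref{gce})--(\ref{gce_exp}) (citing Hayashi), and both your inner-product verification via $(\mc G\mc F)^*=\mc F^*\mc G^*$ with uniqueness from Riesz, and your alternative one-line check via Eq.~(\ref{gce_exp}), are exactly that argument spelled out. One trivial nit: in your prose the ``test operator'' for the first equality is $d$ itself (with $\mc G^*d$ appearing on the right-hand side), and for the second it is $\mc G^*d$; the displayed chain of equalities is nonetheless correct as written.
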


\begin{theorem}[Pythagorean theorem]
\label{thm_pyth}
  Given the two CPTP maps $\mc F$ and $\mc G$, the minimum divergences
  obey
\begin{align}
D_{\sigma,\mc G\mc F}(A,\gce{(\mc G\mc F)} A)
&= 
D_{\sigma,\mc F}(A,\gce{\mc F} A)
\nonumber\\&\quad
+ D_{\mc F\sigma,\mc G}(\gce{\mc F} A,\gce{\mc G}\gce{\mc F} A).
\end{align}
\end{theorem}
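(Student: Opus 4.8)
The plan is to reduce the whole statement to the closed-form expression for the minimum divergence in Eq.~(\ref{Dmin}) and then invoke the chain rule of Theorem~\ref{thm_chain}; the identity then falls out of a one-line telescoping cancellation. First I would apply Eq.~(\ref{Dmin}) to the composite channel $\mc G\mc F$ with prior state $\sigma$, writing
\[
D_{\sigma,\mc G\mc F}(A,\gce{(\mc G\mc F)} A) = \norm{A}_\sigma^2 - \norm{\gce{(\mc G\mc F)} A}_{\mc G\mc F\sigma}^2 ,
\]
which is legitimate because $\gce{(\mc G\mc F)} A \in L_2(\mc G\mc F\sigma)$, so the second norm is well defined.

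Next I would expand the two summands on the right-hand side of the claimed identity in the same way. Equation~(\ref{Dmin}) for $\mc F$ acting on $\sigma$ gives $D_{\sigma,\mc F}(A,\gce{\mc F} A) = \norm{A}_\sigma^2 - \norm{\gce{\mc F} A}_{\mc F\sigma}^2$, and Equation~(\ref{Dmin}) applied to $\mc G$ with prior state $\mc F\sigma$ and ``data'' operator $\gce{\mc F}A \in L_2(\mc F\sigma)$ gives $D_{\mc F\sigma,\mc G}(\gce{\mc F} A,\gce{\mc G}\gce{\mc F} A) = \norm{\gce{\mc F} A}_{\mc F\sigma}^2 - \norm{\gce{\mc G}\gce{\mc F} A}_{\mc G\mc F\sigma}^2$. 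Adding these, the two copies of $\norm{\gce{\mc F} A}_{\mc F\sigma}^2$ cancel and the sum collapses to $\norm{A}_\sigma^2 - \norm{\gce{\mc G}\gce{\mc F} A}_{\mc G\mc F\sigma}^2$. Finally, the chain rule of Theorem~\ref{thm_chain} identifies $\gce{\mc G}\gce{\mc F} A$ with $\gce{(\mc G\mc F)} A$ in $L_2(\mc G\mc F\sigma)$, so this last expression equals the right-hand side of the first display, namely $D_{\sigma,\mc G\mc F}(A,\gce{(\mc G\mc F)} A)$, completing the proof. (As an aside, one could alternatively expand all three divergences straight from Definition~\ref{def_div} using Eq.~(\ref{gce}) and the adjoint relation $(\mc G\mc F)^* = \mc F^*\mc G^*$, but routing through Eq.~(\ref{Dmin}) is shorter.)

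I do not expect a genuine obstacle here: once Eq.~(\ref{Dmin}) and the chain rule are available, the argument is a bookkeeping exercise. The only points needing care are keeping track of which prior state each application of Eq.~(\ref{Dmin}) uses---$\sigma$ for $\mc F$ and for $\mc G\mc F$, but $\mc F\sigma$ for the inner $\mc G$---observing that $\mc G(\mc F\sigma)$ and $(\mc G\mc F)\sigma$ denote the same operator so that the ambient weighted Hilbert space $L_2(\mc G\mc F\sigma)$ is unambiguous, and applying the chain rule with the matching superscripts, $\gce{(\mc G\mc F)}^\sigma = \gce{\mc G}^{\mc F\sigma}\gce{\mc F}^\sigma$.
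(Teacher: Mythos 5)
Your proposal is correct and follows exactly the paper's own (one-line) proof, which also just combines Eq.~(\ref{Dmin}) with the chain rule of Theorem~\ref{thm_chain}; you have merely spelled out the telescoping cancellation explicitly. No gaps.
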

\begin{proof}
  Use Eq.~(\ref{Dmin}) and Theorem~\ref{thm_chain}.
\end{proof}
Figure~\ref{GCE_cat} offers some diagrams that illustrate the
theorems.

\begin{figure}[htbp!]
\centerline{\includegraphics[width=0.48\textwidth]{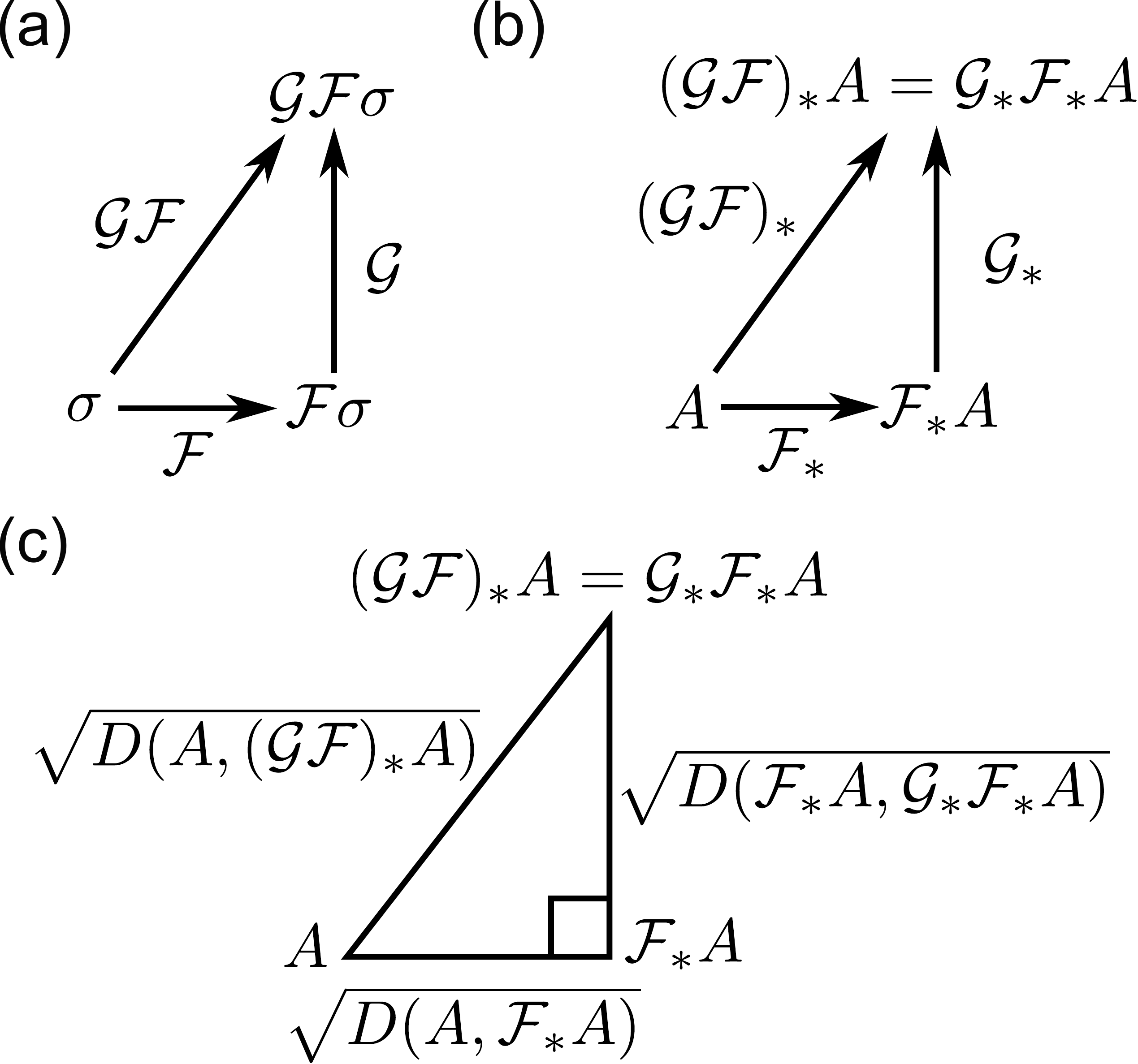}}
\caption{\label{GCE_cat} (a) A diagram depicting the map of a density
  operator $\sigma$ through the CPTP maps $\mc F$ and then $\mc G$.
  (b) A diagram depicting the map of an observable $A$ through the GCE
  $\gce{(\mc G\mc F)}$, or equivalently through the two GCEs
  $\gce{\mc F}$ and then $\gce{\mc G}$, as per
  Theorem~\ref{thm_chain}. (c) A diagram depicting the root
  divergences between the operators as lengths of the sides of a right
  triangle, as per Theorem~\ref{thm_pyth}.  The subscripts of $D$ are
  omitted for brevity.}
\end{figure}

Before moving on, I list two more properties of the GCEs---their
physical significance for generalizing the Rao-Blackwell theorem
\cite{lehmann98} will be explained in Sec.~\ref{sec_qrbt}.
\begin{lemma}[Law of total expectation]
\label{lem_tot}
For any $A \in L_2(\sigma)$,
\begin{align}
\trace \sigma A &= \Avg{I_1,A}_\sigma = \Avg{I_2,\gce{\mc F} A}_{\mc F\sigma}
= \trace (\mc F\sigma) (\gce{\mc F} A).
\label{mean}
\end{align}
\end{lemma}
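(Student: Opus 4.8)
The plan is to prove the four expressions in Eq.~(\ref{mean}) equal by a chain of three elementary observations, using only the axioms of $\mc E$, the defining identity Eq.~(\ref{gce}) of the GCE, and the trace preservation of $\mc F$. Nothing here is deep; the point is just to assemble these ingredients in the right order.

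First I would record the identity $\mc E_\sigma I_1 = \sigma$: since $I_1$ commutes with $\sigma$, Eq.~(\ref{ax1}) applies and gives $\mc E_\sigma I_1 = \sigma I_1 = \sigma$. Combined with the self-adjointness of $\mc E_\sigma$ in the Hilbert--Schmidt inner product and $\sigma^\dagger = \sigma$, this gives $\Avg{I_1,A}_\sigma = \trace \mc E_\sigma A = \Avg{I_1,\mc E_\sigma A}_{\HS} = \Avg{\mc E_\sigma I_1, A}_{\HS}$, which equals $\Avg{\sigma,A}_{\HS} = \trace \sigma A$. This is the first equality. Running the same computation with $\sigma$ replaced by $\mc F\sigma$, $A$ replaced by $\gce{\mc F}A$, and $I_1$ replaced by $I_2$ yields $\Avg{I_2,\gce{\mc F}A}_{\mc F\sigma} = \trace(\mc F\sigma)(\gce{\mc F}A)$, which is the last equality.

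The middle equality is the only one with content. Setting $c = I_2$ in Eq.~(\ref{gce}) gives $\Avg{I_2,\gce{\mc F}A}_{\mc F\sigma} = \Avg{\mc F^* I_2, A}_\sigma$, so it suffices to show that $\mc F^*$ is unital, $\mc F^* I_2 = I_1$. This is precisely the dual of trace preservation: $\trace \mc F X = \trace X$ for all $X \in \mc O(\mc H_1)$ reads $\Avg{\mc F^* I_2, X}_{\HS} = \Avg{I_1, X}_{\HS}$, which forces $\mc F^* I_2 = I_1$. Substituting back gives $\Avg{I_2,\gce{\mc F}A}_{\mc F\sigma} = \Avg{I_1,A}_\sigma$. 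As an alternative route to the last two equalities at once, one can instead combine $\trace(\mc F\sigma)(\gce{\mc F}A) = \trace \mc E_{\mc F\sigma}\gce{\mc F}A$ with the explicit GCE identity Eq.~(\ref{gce_exp}), $\mc E_{\mc F\sigma}\gce{\mc F}A = \mc F\mc E_\sigma A$, and trace preservation, reading off $\trace \mc F\mc E_\sigma A = \trace \mc E_\sigma A = \trace \sigma A$ in one line.

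I do not expect a genuine obstacle; the statement is bookkeeping. The only subtlety worth flagging is domain-related: $A$ is an element of the completion $L_2(\sigma)$ and $\gce{\mc F}A$ an element of $L_2(\mc F\sigma)$, so one should note that $I_1 \in L_2(\sigma)$ and $I_2 \in L_2(\mc F\sigma)$ (immediate in the finite-dimensional setting adopted here), that the pairings above are continuous in the $\norm{\cdot}_\sigma$- and $\norm{\cdot}_{\mc F\sigma}$-norms, and hence that the instance $c = I_2$ of Eq.~(\ref{gce}) is legitimate; if one wants to be fully careful, prove the identities first for bounded $A$ and then pass to the limit.
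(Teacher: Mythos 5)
Your proof is correct and follows essentially the same route as the paper's: the first and last equalities via $\mc E_\sigma I_1=\sigma$ (Eq.~(\ref{ax1})) and the self-adjointness of $\mc E$, and the middle equality by setting $c=I_2$ in Eq.~(\ref{gce}) together with the unital property $\mc F^*I_2=I_1$. Your added justification of unitality from trace preservation and the alternative route via Eq.~(\ref{gce_exp}) are fine but do not change the argument.
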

\begin{lemma}
\label{lem_var}
Let $a$ be any complex number. Then
\begin{align}
\norm{A - a I_1}^2_\sigma
&= \norm{\gce{\mc F} A - a I_2}^2_{\mc F\sigma}
+ D_{\sigma,\mc F}(A,\gce{\mc F} A).
\label{RB}
\end{align}
\end{lemma}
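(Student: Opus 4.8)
Proof proposal for Lemma~\ref{lem_var}. The plan is to establish Eq.~(\ref{RB}) by a direct expansion of the two squared norms it contains, leaning on three facts that are already available: (i) the normalizations $\norm{I_1}_\sigma^2=\trace\sigma=1$ and $\norm{I_2}_{\mc F\sigma}^2=\trace(\mc F\sigma)=1$, which follow from Eq.~(\ref{ax1}) applied with the identity operator (which commutes with every $\rho$) together with the fact that $\mc F$ is trace preserving, so $\mc F\sigma$ is again a density operator; (ii) the law of total expectation, Lemma~\ref{lem_tot}, which gives $\Avg{I_1,A}_\sigma=\Avg{I_2,\gce{\mc F}A}_{\mc F\sigma}$; and (iii) the minimum-divergence identity $D_{\sigma,\mc F}(A,\gce{\mc F}A)=\norm{A}_\sigma^2-\norm{\gce{\mc F}A}_{\mc F\sigma}^2$ from Eq.~(\ref{Dmin}).

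First I would expand $\norm{A-aI_1}_\sigma^2$ using the conjugate-symmetric sesquilinear structure of $\avg{\cdot,\cdot}_\sigma$ (linear in the second slot, conjugate-linear in the first). Writing $m\equiv\Avg{I_1,A}_\sigma$, this yields $\norm{A-aI_1}_\sigma^2=\norm{A}_\sigma^2-2\Re(\bar a m)+|a|^2$, where fact (i) was used on the last term. The identical computation on $\mc H_2$, with $\mc F\sigma$ as weight and $\gce{\mc F}A$ in place of $A$, gives $\norm{\gce{\mc F}A-aI_2}_{\mc F\sigma}^2=\norm{\gce{\mc F}A}_{\mc F\sigma}^2-2\Re(\bar a m')+|a|^2$ with $m'\equiv\Avg{I_2,\gce{\mc F}A}_{\mc F\sigma}$. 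By fact (ii), $m=m'$, so subtracting the two expressions cancels both the cross term $2\Re(\bar a m)$ and the $|a|^2$ term, leaving $\norm{A-aI_1}_\sigma^2-\norm{\gce{\mc F}A-aI_2}_{\mc F\sigma}^2=\norm{A}_\sigma^2-\norm{\gce{\mc F}A}_{\mc F\sigma}^2$, which by fact (iii) is exactly $D_{\sigma,\mc F}(A,\gce{\mc F}A)$. Rearranging gives Eq.~(\ref{RB}).

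An essentially equivalent but more structural route is also available, which I would mention but not use as the main argument: since the defining relation Eq.~(\ref{gce}) is linear in $A$ and since Eq.~(\ref{gce_exp}) combined with Eq.~(\ref{ax1}) forces $\gce{\mc F}I_1=I_2$, one has $\gce{\mc F}(A-aI_1)=\gce{\mc F}A-aI_2$; applying Eq.~(\ref{Dmin}) to the shifted operator $A-aI_1$ then reduces the claim to the invariance of $D_{\sigma,\mc F}$ under shifting both arguments by the same multiple of the identity, which is a short calculation using $\mc F^*I_2=I_1$ (trace preservation) and $\gce{\mc F}I_1=I_2$.

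I do not expect a genuine obstacle here; the work is essentially bookkeeping. The points to be careful about are keeping the sesquilinear convention consistent so that the cross terms combine into a single real part and nothing spurious survives when $a$ is complex and $A$ need not be Hermitian, and being explicit that $\norm{I_j}^2=1$ genuinely uses both Eq.~(\ref{ax1}) and the trace-preserving hypothesis on $\mc F$. As throughout Sec.~\ref{sec_review}, all operators and norms are to be read as equivalence classes in the relevant $L_2$ spaces, which does not affect any step.
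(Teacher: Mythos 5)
Your proposal is correct and follows essentially the same route as the paper's own proof: expand both squared norms, note that the cross terms coincide by Lemma~\ref{lem_tot} and the $|a|^2$ terms cancel by normalization, then invoke Eq.~(\ref{Dmin}). The alternative "shift-invariance" route you sketch is a fine variant but adds nothing beyond the direct expansion.
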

See Appendix~\ref{app_proofs} for the proofs of Lemmas~\ref{lem_tot}
and \ref{lem_var}.

A map $\gce{\mc F}$ that satisfies Lemma~\ref{lem_tot} is also called
a coarse graining \cite{petz}. Whereas Petz's definition requires a
coarse graining to be completely positive, the GCEs here need not
be. If $a$ in Lemma~\ref{lem_var} is set as the mean given by
Eq.~(\ref{mean}), then Lemma~\ref{lem_var} says that the generalized
variance of $\gce{\mc F} A$ given by
$\norm{\gce{\mc F} A - a I_2}_{\mc F\sigma}^2$ cannot exceed that of
$A$.

The mathematics of GCEs would be uncontroversial if not for its
physical implication: By defining a divergence between two operators
at different times, a retrodiction of a hidden quantum observable $A$
can be given a risk measure and therefore a meaning in the spirit of
decision theory \cite{berger}. In other words, after a channel $\mc F$
is applied, one can seek an observable $B$ that is the closest to $A$
if the divergence is regarded as a squared distance, and
$\gce{\mc F} A$ is the answer. It remains an open and reasonable
question, however, why the divergence between two operators is an
important quantity. If $A_t$ at time $t$ does not commute with $B_T$
at a later time in the Heisenberg picture, where $A_t$ and $B_T$ are
defined by Eqs.~(\ref{hei}), then Belavkin's nondemolition principle
for their simultaneous measurability is violated
\cite{belavkin_qnd,bouten,gough20}, no classical observer can access
the precise values of both, and the divergence does not seem to have
any obvious meaning to the classical world. Notably, Gough claims in
Ref.~\cite{gough20} that a retrodiction that violates the
nondemolition principle is ``misapplying Bayes theorem,'' ``not
possible,'' and ``unwarranted.''  Reference~\cite{gough19}, the
preprint version of Ref.~\cite{gough20}, goes even further in claiming
that someone who does not follow the principle may obtain ``wholly
meaningless'' answers and is ``in a state of sin.''  In
Ref.~\cite{james21}, James also claims that the principle should be
observed for a quantum conditional expectation to make sense.  To show
that a retrodiction can make sense beyond the nondemolition principle,
the next sections offer natural scenarios in quantum metrology that
will give operational meanings to a GCE and the associated divergence.

\section{\label{sec_bayes}Bayesian quantum parameter estimation}

\subsection{\label{sec_bayes_gen}General results}
Consider the typical setup of Bayesian quantum parameter estimation
\cite{personick71} depicted in Fig.~\ref{bayes_metrology}(a).  Let $X$
be a hidden classical random parameter with a countable parameter
space $\mc X$ and a prior probability distribution
$P_X:\mc X \to [0,1]$. A quantum sensor is coupled to $X$, such that
its density operator conditioned on $X = x$ is
$\rho_x \in \mc O(\mc H_2)$. A classical observer measures the quantum
sensor, as modeled by a positive operator-valued measure (POVM)
$M:\Sigma_{\mc Y} \to \mc O(\mc H_2)$ on a Borel space
$(\mc Y,\Sigma_{\mc Y})$, where $\Sigma_{\mc Y}$ is the Borel
sigma-algebra of $\mc Y$ \cite{holevo11}.  The observer uses the
outcome $y \in \mc Y$ to estimate the value of a real random variable
$a:\mc X \to \mathbb R$. The problem can be framed in the GCE
formalism by writing
\begin{align}
\sigma &= \sum_x P_X(x)\ket{x}\bra{x},
\quad
A = \sum_x  a(x) \ket{x}\bra{x},
\label{bayes1}
\\
\mc F\sigma &= \sum_x  \rho_x \bra{x}\sigma\ket{x} = \sum_x \rho_x P_X(x),
\label{bayes2}
\end{align}
where $\{\ket{x}:x \in \mc X\}$ is an orthonormal basis of $\mc H_1$
and the classical random variable $a(X)$ is framed as the hidden
  observable $A$ discussed in Secs.~\ref{sec_review} and
  \ref{sec_prop}.  $\mc F$ here is called a classical-quantum channel
and has a natural generalization in the infinite-dimensional case
\cite{holevo19}.

\begin{figure}[htbp!]
\centerline{\includegraphics[width=0.48\textwidth]{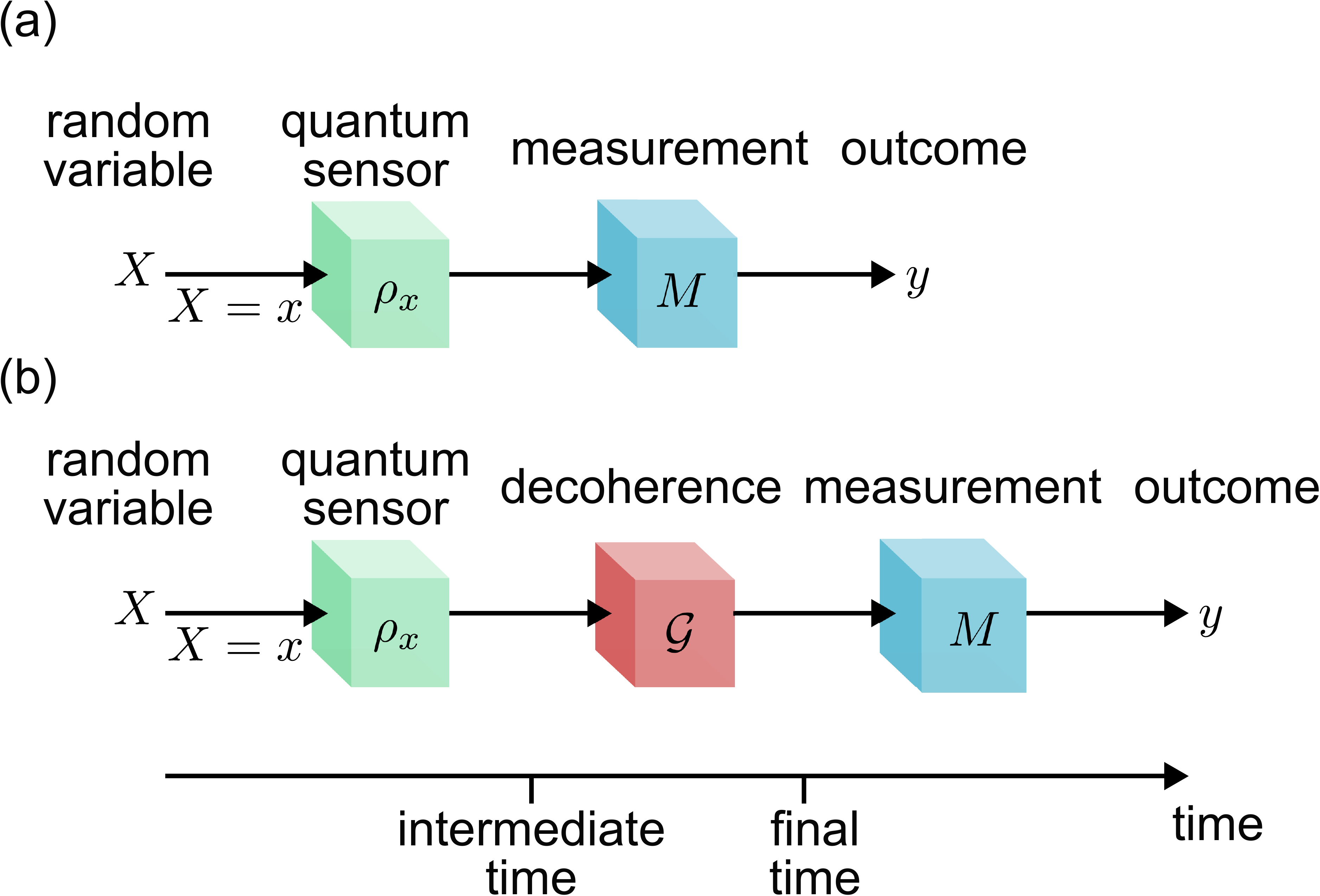}}
\caption{\label{bayes_metrology}Some scenarios of Bayesian quantum
  parameter estimation. See the main text for the definitions of the
  symbols.}
\end{figure}

In the following, I consider only Hermitian operators (observables)
and assume $\mc E$ to be the Jordan product given by
Eq.~(\ref{jordan}), such that all the operator Hilbert spaces are
real, the equalities in Eqs.~(\ref{ax4}) and (\ref{D_dist}) hold, and
the GCE is in fact a projection in the larger Hilbert space
\cite{hayashi}. 

Suppose that a von Neumann measurement of an observable $B$ on
$\mc H_2$ is performed, as defined in Appendix~\ref{app_vn}, and
the outcome is used as the estimator. I call such a $B$ an
  operator-valued estimator.  The mean-square estimation error
averaged over the prior is given by
\begin{align}
  \sum_{x} P_X(x)\int [b - a(x)]^2 \trace \Pi(db) \rho_x
  &= D_{\sigma,\mc F}(A,B),
\label{MSE_bayes}
\end{align}
where $\Pi$ is the projection-valued measure of $B$.
Equation~(\ref{MSE_bayes}) is precisely the divergence in
Definition~\ref{def_div}. According to the seminal work of Personick
\cite{personick71}, the optimal operator-valued estimator is the GCE
$\gce{\mc F} A$, and the minimum error, hereafter called the Bayesian
error, is $D_{\sigma,\mc F}(A,\gce{\mc F} A)$. It can also be shown
that the von Neumann measurement of $\gce{\mc F} A$ remains optimal
even if POVMs are considered (see Sec.~VIII~1(d) in
Ref.~\cite{helstrom}, Appendix~A in Ref.~\cite{macieszczak14}, or
Corollary~\ref{cor_vn} below).

Now suppose that a complication occurs in the experiment, as depicted
by Fig.~\ref{bayes_metrology}(b): Before the measurement can be
performed, the sensor is further corrupted by decoherence, as modeled
by another CPTP map $\mc G$. The error of an operator-valued
  estimator $B'$ is now
\begin{align}
\sum_x P_X(x) 
\int [b - a(x)]^2 \trace \Pi'(db) \mc G \rho_x
&= D_{\sigma,\mc G\mc F}(A,B'),
\label{MSE_bayes2}
\end{align}
where $\Pi'$ is the projection-valued measure of $B'$. The Personick
estimator after $\mc G$ is then $B'=\gce{(\mc G\mc F)} A$, and the
Bayesian error becomes
$D_{\sigma,\mc G\mc F}(A,\gce{(\mc G\mc F)} A)$.  A fundamental fact
is as follows.
\begin{corollary}[Monotonicity of the Bayesian error]
\label{cor_mono}
The Bayesian error cannot decrease under decoherence, viz.,
\begin{align}
D_{\sigma,\mc G\mc F}(A,\gce{(\mc G\mc F)} A)
\ge 
D_{\sigma,\mc F}(A,\gce{\mc F} A)
\label{mono}
\end{align}
for the estimation problem modeled by
Eqs.~(\ref{bayes1})--(\ref{MSE_bayes2}).
\end{corollary}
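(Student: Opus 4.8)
The plan is to read this off from the structural results already in hand, so the proof is short. First I would invoke the Pythagorean theorem (Theorem~\ref{thm_pyth}) for the composition $\mc G\mc F$, which with the identifications in Eqs.~(\ref{bayes1})--(\ref{bayes2}) gives
\begin{align}
D_{\sigma,\mc G\mc F}(A,\gce{(\mc G\mc F)} A)
&= D_{\sigma,\mc F}(A,\gce{\mc F} A) \nonumber \\
&\quad + D_{\mc F\sigma,\mc G}(\gce{\mc F} A,\gce{\mc G}\gce{\mc F} A).
\end{align}
The left-hand side is the Bayesian error after decoherence, the first term on the right is the Bayesian error without decoherence, and so Eq.~(\ref{mono}) is equivalent to the assertion that the remaining term $D_{\mc F\sigma,\mc G}(\gce{\mc F} A,\gce{\mc G}\gce{\mc F} A)$ is nonnegative. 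But that term is itself a divergence in the sense of Definition~\ref{def_div} (with prior state $\mc F\sigma$ and channel $\mc G$), and the discussion following Eq.~(\ref{stinespring}) already establishes that every such divergence is nonnegative, being bounded below by a squared norm $\norm{A_t - B_T}_\rho^2$ in the Stinespring dilation. Dropping the nonnegative term yields the claim.

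Alternatively, and perhaps more transparently, I would argue directly from the variance-reduction property. By Eq.~(\ref{Dmin}) both Bayesian errors contain the common term $\norm{A}_\sigma^2$, so
\begin{align}
&D_{\sigma,\mc F}(A,\gce{\mc F} A) - D_{\sigma,\mc G\mc F}(A,\gce{(\mc G\mc F)} A) \nonumber \\
&\qquad = \norm{\gce{(\mc G\mc F)} A}_{\mc G\mc F\sigma}^2 - \norm{\gce{\mc F} A}_{\mc F\sigma}^2 .
\end{align}
The chain rule (Theorem~\ref{thm_chain}) rewrites $\gce{(\mc G\mc F)} A = \gce{\mc G}\gce{\mc F} A$, and Lemma~\ref{lem_var} with $a = 0$, applied to the operator $\gce{\mc F} A \in L_2(\mc F\sigma)$ under the channel $\mc G$, gives $\norm{\gce{\mc G}\gce{\mc F} A}_{\mc G\mc F\sigma}^2 \le \norm{\gce{\mc F} A}_{\mc F\sigma}^2$; hence the displayed difference is $\le 0$, which is Eq.~(\ref{mono}).

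There is essentially no obstacle here: the statement is labelled a corollary precisely because all the substantive work---the chain rule, the Pythagorean identity, and the nonnegativity of the divergence via the dilation---is done earlier. The only bookkeeping point worth checking is that $A$, $\gce{\mc F} A$, and $\gce{(\mc G\mc F)} A$ each lie in the relevant weighted Hilbert space ($L_2(\sigma)$, $L_2(\mc F\sigma)$, and $L_2(\mc G\mc F\sigma)$, respectively) so that the divergences and norms are well defined; this is guaranteed by the Riesz-representation existence argument for the GCE following Eq.~(\ref{gce}), together with the finite-dimensionality assumption. As a bonus, both arguments identify the slack exactly---the gap in Eq.~(\ref{mono}) equals $D_{\mc F\sigma,\mc G}(\gce{\mc F} A,\gce{\mc G}\gce{\mc F} A)$, the divergence between the Personick estimators before and after the decoherence $\mc G$---which is the theme developed in the surrounding section.
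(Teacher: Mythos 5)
Your first argument is exactly the paper's proof: Corollary~\ref{cor_mono} is proved there by combining Theorem~\ref{thm_pyth} with the nonnegativity of the divergence, precisely as you do, and your identification of the gap as $D_{\mc F\sigma,\mc G}(\gce{\mc F} A,\gce{\mc G}\gce{\mc F} A)$ matches Eq.~(\ref{penalty}). Your alternative route via Eq.~(\ref{Dmin}), the chain rule, and Lemma~\ref{lem_var} is also correct but is essentially a re-derivation of the same Pythagorean identity, so there is nothing substantively different to flag.
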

\begin{proof}
  Use Theorem~\ref{thm_pyth} and the nonnegativity of $D$.
\end{proof}
The scenario so far is standard and uncontroversial, as $A$ is
effectively a classical random variable. Mathematically, $A_t$ and
$(\gce{\mc F} A)_T$ in the Heisenberg picture commute (see Sec.~IV~F
in Ref.~\cite{tsang22b}) and thus satisfy the nondemolition principle;
so do $A_t$ and $[\gce{(\mc G\mc F)} A]_T$. Physically, the principle
implies that another classical observer can, in theory, access the
precise value of $A$ in each trial, the estimates can be compared with
the true values by the classical observers after the trials, and $D$
is their expected error. The monotonicity given by
Corollary~\ref{cor_mono} is a noteworthy result, but unsurprising.

More can be said about the error increase, hereafter called the regret
(to borrow a term from decision theory \cite{berger}). First of all,
the chain rule in Theorem~\ref{thm_chain} gives an operational meaning
to the GCE $\gce{\mc G}$ as the map that relates the intermediate
Personick estimator $\gce{\mc F} A$ to the final
$\gce{(\mc G\mc F)} A = \gce{\mc G}\gce{\mc F} A$.  In other words,
the final Personick estimator is equivalent to a retrodiction of the
intermediate $\gce{\mc F} A$, which is a quantum observable.  Second,
the Pythagorean theorem in Theorem~\ref{thm_pyth} means that the
regret caused by the decoherence is precisely the divergence between
the intermediate and final estimators:
\begin{align}
D_{\sigma,\mc G\mc F}(A,\gce{(\mc G\mc F)} A)
- D_{\sigma,\mc F}(A,\gce{\mc F} A)
\nonumber\\
= D_{\mc F\sigma,\mc G}(\gce{\mc F} A,\gce{\mc G}\gce{\mc F} A).
\label{penalty}
\end{align}
The two divergences on the left-hand side have a firm
decision-theoretic meaning as estimation errors because $A$ is
classical.  It follows that, even though the divergence on the
right-hand side is between two quantum observables, it also has a firm
decision-theoretic meaning as the regret---for being unable to perform
the optimal measurement and having to suffer from the decoherence.  As
the regret concerns the performances of the two estimators in separate
experiments, the estimators need not obey the nondemolition
  principle, which is a condition on two observables to be
  simultaneously measurable in the same experiment.

I stress that the regret is not a contrived concept invented here
solely to give an operational meaning to the divergence---its
classical version is an established concept in information theory and
Bayesian learning \cite{weissman10,wu11,mismatch,xu22}.

\subsection{\label{sec_dp}Dynamic programming}
When the decoherence is modeled by a chain of CPTP maps
$\mc G = \mc F_N\dots \mc F_2$, the final error is the sum of
all the incremental regrets along the way, viz.,
\begin{align}
D_{\sigma,\mc G_N}(A,\mc G_{N*} A)
&= \sum_{n=1}^N D_n,
\label{add}
\\
\mc G_n &\equiv \mc F_n\dots \mc F_2 \mc F_1,
\label{Gn}
\\
D_n &\equiv D_{\sigma_n,\mc F_n}(A_n,A_{n+1}),
\\
\sigma_{n+1} &\equiv \mc G_n\sigma = \mc F_n \sigma_n,
\quad
\sigma_1 = \sigma,
\label{sigma_n}
\\
A_{n+1} &\equiv \mc G_{n*} A = 
\mc F_{n*} A_n,
\quad
A_1 = A,
\label{A_n}
\end{align}
where $\mc F_1 = \mc F$ for the parameter estimation problem, so
even the error at the first step
$D_1 = D_{\sigma,\mc F}(A,\gce{\mc F} A)$ can be regarded as a
regret. Every $D_n$, bar $D_1$, is a divergence between a
quantum observable $A_n$ and its estimator $A_{n+1}$ that may
not commute in the Heisenberg picture.

Suppose that the experimenter can choose the channels
$(\mc F_1,\dots,\mc F_N)$ from a set of options and would like to find
the optimal choice that minimizes the final error. One example
  is the use of a programmable photonic circuit \cite{bogaerts20} to
  measure light for sensing or imaging. The Markovian nature of
Eqs.~(\ref{sigma_n})--(\ref{A_n}) and the additive nature of the final
error given by Eq.~(\ref{add})---which originate from
Theorems~\ref{thm_chain} and \ref{thm_pyth}---are precisely the
conditions that make this optimal control problem amenable to dynamic
programming \cite{bertsekas}, an algorithm that can reduce the
computational complexity substantially \cite{cormen}. To be specific,
let the system state (in the context of control theory) at time $n$ be
$s_n \equiv (\sigma_n,A_n)$. Then Eqs.~(\ref{add})--(\ref{A_n}) imply
that the state dynamics and the final error can be expressed as
\begin{align}
s_{n+1} &= f(s_{n},\mc F_n),
\label{markov}
\\
D_{\sigma,\mc G_N}&= \sum_{n=1}^N g(s_{n},\mc F_n)
\label{Dsum}
\end{align}
in terms of some functions $f$ and $g$.  Equations~(\ref{markov}) and
(\ref{Dsum}) are now in the form of a Markov decision process that is
amenable to dynamic programming for computing the optimal maps
$(\mc F_1,\dots,\mc F_N)$ among the set of options to minimize
the final error \cite{bertsekas}; Appendix~\ref{app_dp} describes
  the algorithm for the reader's information.  As dynamic programming
is a cornerstone of control theory, there exist a plethora of exact or
approximate methods to implement it, such as neural networks under the
guise of reinforcement learning \cite{bertsekas_rl}.

\subsection{\label{sec_weak}Weak value}
To elaborate on the operational meaning for the weak value, which is a
GCE of a quantum observable given a prior state and a measurement
outcome \cite{ohki15,ohki18,dressel15,tsang22b}, let us return to the
scenario depicted by Fig.~\ref{bayes_metrology}(a). Suppose, for
mathematical simplicity, that the outcome space $\mc Y$ is
countable. The measurement can be framed as a $\mc G$ map given by
\begin{align}
\mc G \tau = \sum_y [\trace M(y) \tau]\ket{y}\bra{y},
\label{G_measure}
\end{align}
where $\{\ket{y}:y \in \mc Y\}$ is an orthonormal basis of $\mc H_3$
and $M$ is the POVM of a measurement that may not be optimal. An
estimator $b:\mc Y \to \mathbb R$ as a function of the measurement
outcome can be framed as the observable
\begin{align}
B = \sum_y b(y)\ket{y}\bra{y}.
\label{B}
\end{align}
The GCE then leads to the optimal estimator
\begin{align}
B &= \gce{\mc G}\gce{\mc F} A,
\\
b(y) &= \frac{\trace M(y) \mc E_{\mc F\sigma}\gce{\mc F} A}
{\trace M(y) \mc F\sigma}
= \frac{\Re \trace M(y) (\gce{\mc F} A) (\mc F\sigma) }
{\trace M(y) \mc F\sigma}.
\label{wv}
\end{align}
The last expression in Eq.~(\ref{wv}) is precisely the definition
  of the real weak value of $\gce{\mc F} A$ given a prior
  state $\mc F\sigma$ and a measurement outcome $y$ (see, for example,
  Eq.~(3.13) in Ref.~\cite{wiseman02}, Eq.~(10) in Ref.~\cite{hall04},
  or Eq.~(5) in Ref.~\cite{dressel15}). Moreover, the divergence
between the ideal $\gce{\mc F} A$ and the $B$ associated with
the weak value is precisely the regret caused by the suboptimality of
the measurement $M$, as per Theorem~\ref{thm_pyth}.  Hence, regardless
of how anomalous the weak value may seem, it does have an operational
role in parameter estimation, and its divergence from the ideal
$\gce{\mc F} A$ has a concrete decision-theoretic meaning as
the regret for not using the optimal measurement.

The preceding discussion also serves as a rough proof of the following
corollary, which is proved by different methods in Sec.~VIII~1(d) of
Ref.~\cite{helstrom} and Appendix~A of Ref.~\cite{macieszczak14}.
\begin{corollary}
\label{cor_vn}
No POVM can improve upon the Bayesian error
$D_{\sigma,\mc F}(A,\gce{\mc F} A)$ achieved by a von Neumann
measurement of $\gce{\mc F} A$.
\end{corollary}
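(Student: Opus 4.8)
The plan is to show that for \emph{any} POVM $M$ on $\mc H_2$ together with \emph{any} estimator function $b:\mc Y\to\mathbb R$, the resulting mean-square error is at least $D_{\sigma,\mc F}(A,\gce{\mc F} A)$. I would frame the POVM-plus-estimator pair exactly as in Sec.~\ref{sec_weak}: let $\mc G$ be the CPTP map of Eq.~(\ref{G_measure}) that sends $\mc F\sigma$ to the classical register state $\mc G\mc F\sigma=\sum_y[\trace M(y)\mc F\sigma]\ket{y}\bra{y}$, and let $B=\sum_y b(y)\ket{y}\bra{y}$ be the operator-valued estimator on $\mc H_3$ built from $b$. The point of the classical-register construction is that a von Neumann (projective) measurement of $B$ on $\mc H_3$ reproduces exactly the statistics of measuring $M$ on $\mc H_2$ and post-processing the outcome by $b$; hence by Eq.~(\ref{MSE_bayes}) applied to the composite channel $\mc G\mc F$, the mean-square error of this scheme equals $D_{\sigma,\mc G\mc F}(A,B)$.

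From there the argument is immediate. By Definition~\ref{def_gce}, $\gce{(\mc G\mc F)}A$ minimizes $D_{\sigma,\mc G\mc F}(A,\cdot)$ over all operators in $L_2(\mc G\mc F\sigma)$, so $D_{\sigma,\mc G\mc F}(A,B)\ge D_{\sigma,\mc G\mc F}(A,\gce{(\mc G\mc F)}A)$. Then Corollary~\ref{cor_mono} (equivalently, Theorem~\ref{thm_pyth} plus nonnegativity of $D$) gives $D_{\sigma,\mc G\mc F}(A,\gce{(\mc G\mc F)}A)\ge D_{\sigma,\mc F}(A,\gce{\mc F}A)$. Chaining these, the error of any POVM-based estimator is at least $D_{\sigma,\mc F}(A,\gce{\mc F}A)$, which is the Bayesian error attained by the von Neumann measurement of $\gce{\mc F}A$ (by Personick's result quoted after Eq.~(\ref{MSE_bayes})). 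This proves the corollary.

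The main obstacle—more a point requiring care than a genuine difficulty—is the first step: verifying that an arbitrary POVM $M$ on $\mc H_2$, followed by the deterministic post-processing $b$, is faithfully represented by the projective measurement of the operator $B$ on the classical register $\mc H_3$ produced by the channel $\mc G$ of Eq.~(\ref{G_measure}). One must check that $\mc G$ is genuinely CPTP (it is, being a measure-and-prepare channel), that the outcome probabilities $\trace\Pi_B(\{b(y)\})\,\mc G\mc F\sigma$ collapse correctly onto $\sum_{y:b(y)=\text{given value}}\trace M(y)\mc F\sigma$ even when $b$ is not injective, and that Eq.~(\ref{MSE_bayes}) indeed identifies the resulting averaged squared error with $D_{\sigma,\mc G\mc F}(A,B)$. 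A subtlety worth a remark is that the estimator $b$ need not be a ``nice'' function—any measurable $b$ works, and any POVM (not just a von Neumann measurement) on $\mc H_2$ is covered, because the suboptimality of $M$ and of $b$ together is absorbed into the single channel $\mc G$.

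One should also note that this gives a self-contained proof distinct from the Lagrange-duality arguments of Ref.~\cite{helstrom} and Ref.~\cite{macieszczak14}: here optimality over POVMs follows purely from the data-processing-type monotonicity of the divergence (Corollary~\ref{cor_mono}) together with the variational characterization of the GCE, with no need to exhibit or manipulate the optimality conditions for the POVM elements directly.
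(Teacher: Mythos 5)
Your argument is the monotonicity route that the paper itself sketches in the text immediately surrounding the corollary (``may be regarded as a consequence of monotonicity\dots''), and it is sound in the setting where it applies: for a POVM with a \emph{countable} outcome set, framing $(M,b)$ as the classical-register channel $\mc G$ of Eq.~(\ref{G_measure}) with $B=\sum_y b(y)\ket{y}\bra{y}$, identifying the error with $D_{\sigma,\mc G\mc F}(A,B)$, and chaining the variational definition of $\gce{(\mc G\mc F)}A$ with Corollary~\ref{cor_mono} does give the bound. The gap is precisely the point you wave through in your ``subtlety worth a remark'': the corollary is stated for an arbitrary POVM $M:\Sigma_{\mc Y}\to\mc O(\mc H_2)$ on a Borel space, and for an uncountable (or even countably infinite) $\mc Y$ the register construction of Eq.~(\ref{G_measure}) is not available inside the paper's framework --- there is no finite-dimensional $\mc H_3$ with an orthonormal basis indexed by the outcomes, and the whole $L_2$/GCE machinery (existence of $\gce{(\mc G\mc F)}A$, Eq.~(\ref{MSE_bayes2}), Corollary~\ref{cor_mono}) has only been set up for finite-dimensional spaces. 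The paper flags exactly this: a general POVM can be framed as a quantum-classical channel only with substantially heavier machinery (Theorem~2 of Ref.~\cite{barchielli06}), which is why your route is presented only as a ``rough proof.''

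The paper's actual proof avoids the classical register altogether: it Naimark-extends $M$ to a projection-valued measure $\Pi'$ on $\mc H_2\otimes\mc H_0$, regards $(M,b)$ as a von Neumann measurement of $B=\int b(y)\Pi'(dy)$ on the extended space (Appendix~\ref{app_vn}, which handles general Borel outcome spaces), applies Lemma~\ref{lem_sep} and the quantum Rao-Blackwell theorem to obtain a von Neumann measurement on $\mc H_2$ whose local error dominates (Corollary~\ref{cor_vn2}), and then averages over the prior and invokes the Personick optimality of $\gce{\mc F}A$ among observables on $\mc H_2$. To repair your proposal, either restrict the claim explicitly to POVMs with finitely (or countably) many outcomes --- in which case your argument is a correct and genuinely different, arguably simpler, derivation from monotonicity --- or switch to the Naimark-extension route to obtain the full generality asserted in the statement.
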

Corollary~\ref{cor_vn} may be regarded as a consequence of
monotonicity, since any measurement with a countable set of outcomes
can be framed as a CPTP $\mc G$ map given by Eq.~(\ref{G_measure}),
and by Corollary~\ref{cor_mono}, the error cannot decrease. A POVM
with a more general outcome space can still be framed as a
quantum-classical channel; see, for example, Theorem~2 in
Ref.~\cite{barchielli06}, but it requires a mathematical framework far
more complex than what is necessary for this work. An easier proof for
general POVMs, to be presented in Appendix~\ref{app_proofs}, is to use
a later result in Sec.~\ref{sec_qrbt}.

Note that the optimality of the weak value here does not contradict
Ref.~\cite{ferrie_combes}, which shows that weak-value amplification,
a procedure that involves postselection (i.e., discarding some of the
outcomes), is suboptimal for metrology. Here, the weak value given by
Eq.~(\ref{wv}) is used directly as an estimator with any measurement
outcome, and no postselection is involved. Note also that the
optimality is in the specific context of finding the best estimator
after a given measurement; it does not mean that any measurement
method that is heuristically inspired by the weak-value concept, such
as weak-value amplification, can be optimal.  In fact, by virtue of
Corollary~\ref{cor_vn}, such methods can never outperform the optimal
von Neumann measurement.

\section{\label{sec_qrbt}A quantum Rao-Blackwell theorem}
\subsection{\label{sec_qrbt_gen}General result}
In classical frequentist statistics, the Rao-Blackwell theorem is
among the most useful applications of the conditional expectation
\cite{lehmann98,thompson12,robert}. Here I outline a quantum
generalization.  Suppose that the quantum sensor is modeled by a
family of density operators
$\{\rho_x:x \in \mc X\} \subset \mc O(\mc H_2)$, where the unknown
parameter $x$ is now deterministic and there is no longer any need to
assume a countable parameter space $\mc X$. A parameter of interest
$a:\mc X \to \mathbb R$ is to be estimated by a Hermitian
operator-valued estimator $B \in L_2(\rho_x)$, which need not be
unbiased or optimal in any sense. The local mean-square error (MSE)
upon a von Neumann measurement of $B$, as a function of $x \in \mc X$
and without being averaged over any prior, is given by
\begin{align}
\MSE_x &\equiv 
\int [b-a(x)]^2 \trace \Pi(db)\rho_x = \norm{B - a(x)I_2}_{\rho_x}^2,
\label{MSE}
\end{align}
where $\Pi$ is the projection-valued measure of $B$ and the Jordan
product is again assumed for $\mc E$. Without the unbiased condition
$\trace \rho_x B = a(x)$, the quantum Cram\'er-Rao bounds commonly
used in quantum metrology \cite{helstrom,hayashi} do not apply to
$\MSE_x$. There is no longer any simple optimality criterion for an
estimator in this general setting, but one can still construct a
\emph{partial order} of preference between estimators. Following
classical statistics (see p.~48 in Ref.~\cite{lehmann98}), I say that
an operator-valued estimator $B'$ dominates another estimator $B$ if
the error $\MSE_x'$ of the former never exceeds the error $\MSE_x$ of
the latter for all $x \in \mc X$ and can go strictly lower for some
$x$. I call an estimator admissible if it is dominated by none.

In classical frequentist statistics, there can be many admissible
estimators for one problem with no clear winner among them, and it may
be hard to even prove that a given estimator is admissible. The
Rao-Blackwell theorem is then a valuable tool for improving estimators
or for proofs regarding admissibility. A quantum version of the
theorem can be similarly useful for quantum estimation problems.

To state the quantum theorem, suppose that the quantum sensor goes
through a channel modeled by a CPTP map
$\mc G:\mc O(\mc H_2)\to \mc O(\mc H_3)$ and the GCE $\gce{\mc G}B$ in
terms of $\rho_x$ and the Jordan product is used as an estimator. The
error becomes
\begin{align}
\MSE_x' &= \norm{\gce{\mc G} B - a(x)I_3}_{\mc G\rho_x}^2.
\label{MSE2}
\end{align}
Lemma~\ref{lem_var} can now be used to prove the following.

\begin{theorem}[Quantum Rao-Blackwell theorem]
\label{thm_qrbt}
Let $\{\rho_x:x \in \mc X\}$ be a family of density operators,
$a:\mc X \to \mathbb R$ be an unknown parameter, $B$ be a Hermitian
operator-valued estimator, and $\MSE_x$ be the local error at
$x \in \mc X$. If a channel $\mc G$ is applied and the GCE
  $\gce{\mc G}B$ in terms of $\rho_x$ and the Jordan product
  does not depend on $x$, then the error $\MSE_x'$ of
$\gce{\mc G}B$ as an estimator is lower by the amount
\begin{align}
\MSE_x - \MSE_x' &=  D_{\rho_x,\mc G}(B,\gce{\mc G}B).
\label{qrbt}
\end{align}
\end{theorem}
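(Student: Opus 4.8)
The plan is to recognize Theorem~\ref{thm_qrbt} as essentially a pointwise instance of Lemma~\ref{lem_var}. Fix $x\in\mc X$ and apply Lemma~\ref{lem_var} with the substitutions $\sigma\to\rho_x$, $\mc F\to\mc G$, $A\to B$, and the complex constant taken to be $a(x)\in\mathbb R\subset\mathbb C$; this is admissible because, by the setup of Sec.~\ref{sec_qrbt}, $B\in L_2(\rho_x)$ for every $x$ and $\mc E$ is the Jordan product. Lemma~\ref{lem_var} then yields
\begin{align}
\norm{B-a(x)I_2}_{\rho_x}^2
&= \norm{\gce{\mc G}^{\rho_x}B - a(x)I_3}_{\mc G\rho_x}^2
\nonumber\\
&\quad + D_{\rho_x,\mc G}(B,\gce{\mc G}^{\rho_x}B).
\end{align}
By Eq.~(\ref{MSE}) the left-hand side is $\MSE_x$, so the whole proof reduces to identifying the two terms on the right with $\MSE_x'$ and the advertised divergence.

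That identification is where the $x$-independence hypothesis does the real work. As an element of $L_2(\mc G\rho_x)$ the object $\gce{\mc G}^{\rho_x}B$ is only an equivalence class, and the ambient space itself varies with $x$; the phrase ``the GCE $\gce{\mc G}B$ does not depend on $x$'' is to be read as the existence of one Hermitian operator $\gce{\mc G}B$ on $\mc H_3$ that represents $\gce{\mc G}^{\rho_x}B$ for every $x\in\mc X$. (A Hermitian representative can always be chosen, since $\mc E_{\mc G\rho_x}\gce{\mc G}^{\rho_x}B=\mc G\mc E_{\rho_x}B$ is Hermitian for the Jordan product, so the anti-Hermitian part of any representative is $\norm{\cdot}_{\mc G\rho_x}$-null.) Using this common operator as the now $x$-agnostic estimator, Eq.~(\ref{MSE2}) identifies the first right-hand term as $\MSE_x'$ and the remaining term as $D_{\rho_x,\mc G}(B,\gce{\mc G}B)$; rearranging the display gives Eq.~(\ref{qrbt}).

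Since the algebra is a one-line appeal to Lemma~\ref{lem_var}, the hard part is not computational but interpretive, and it is the step I would present with the most care: the whole point of a Rao--Blackwell statement is that one improves $B$ by passing to $\gce{\mc G}B$ \emph{without knowing} $x$, so the $x$-independence is not a cosmetic assumption but precisely the condition that makes $\gce{\mc G}B$ a single, physically implementable operator-valued estimator and hence makes the conclusion operationally meaningful. Once Eq.~(\ref{qrbt}) is established, the nonnegativity of $D$ noted below Eq.~(\ref{D_dist}) immediately gives $\MSE_x'\le\MSE_x$ for all $x$, with strict inequality whenever $B$ and $\gce{\mc G}B$ differ in $L_2(\rho_x)$---that is, $\gce{\mc G}B$ dominates $B$ in the partial order defined above, which is the usable consequence of the theorem.
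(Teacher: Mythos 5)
Your proposal is correct and matches the paper's own proof, which is exactly the one-line appeal to Lemma~\ref{lem_var} with $\sigma\to\rho_x$, $\mc F\to\mc G$, $A\to B$, $a\to a(x)$, followed by identifying the terms with Eqs.~(\ref{MSE}) and (\ref{MSE2}). Your additional remarks on choosing a single Hermitian $x$-independent representative of the GCE equivalence class are a sound (and welcome) elaboration of the hypothesis, not a departure from the paper's argument.
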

\begin{proof}
Subtract Eq.~(\ref{MSE2}) from Eq.~(\ref{MSE}) and apply Lemma~\ref{lem_var}.
\end{proof}
For $\mc G$ to be realizable and $\gce{\mc G}B$ to be a valid
estimator, both cannot depend on the unknown $x$. When there are many
operator solutions to $\gce{\mc G}B$ that satisfy
Definition~\ref{def_gce}, any of the solutions can be the estimator in
Theorem~\ref{thm_qrbt} as long as it does not depend on $x$.

To demonstrate that Theorem~\ref{thm_qrbt} is indeed a quantum
  generalization of the Rao-Blackwell theorem, Appendix~\ref{app_rb}
  derives the classical theorem from Theorem~\ref{thm_qrbt}. As also
  shown in Appendix~\ref{app_rb}, a parameter-independent conditional
expectation in classical statistics can be obtained by conditioning on
a sufficient statistic.  The conditional expectation can then be used
to improve an estimator in a process called Rao-Blackwellization
\cite{lehmann98}.  Roughly speaking, Rao-Blackwellization works by
averaging the estimator with respect to unnecessary parts of the data,
thereby reducing its variance. A quantum Rao-Blackwellization, enabled
by Theorem~\ref{thm_qrbt}, can be similarly useful for improving a
quantum measurement if one can find a channel $\mc G$ that satisfies
the constant GCE condition and gives a large divergence between $B$
and $\gce{\mc G}B$. As long as
  $D_{\rho_x,\mc G}(B,\gce{\mc G}B) > 0$ for some $x$, the
  Rao-Blackwell estimator $\gce{\mc G}B$ dominates the original
  estimator $B$. The improvement stems from two basic facts about the
GCE: $\gce{\mc G}B$ maintains the same bias as that of $B$ by virtue
of Lemma~\ref{lem_tot}, while the variance of $\gce{\mc G}B$ cannot
exceed that of $B$ by virtue of Lemma~\ref{lem_var}. Roughly speaking,
the quantum Rao-Blackwell theorem works in the same way as the
classical case by averaging the estimator with respect to unnecessary
degrees of freedom via the GCE, thereby reducing its variance.

For the confused readers who wonder how a channel increases the error
in the Bayesian setting because of monotonicity but reduces the error
in the frequentist setting because of the Rao-Blackwell theorem,
Appendix~\ref{app_compare} offers a clarification.

It is noteworthy that Sinha also proposed some quantum Rao-Blackwell
theorems recently \cite{sinha22}, although his versions impose
stringent conditions on the commutativity of the operators.  Another
relevant prior work is Ref.~\cite{luczak15} by \L{}uczak, which
studies a concept of sufficiency in von Neumann algebra for
minimum-variance unbiased estimation but also makes some stringent
assumptions. These prior works, while seminal and mathematically
impressive, have questionable relevance to quantum metrology and are
discussed in more detail in Appendix~\ref{app_sinha}.

Given the close relation between the Rao-Blackwell theorem and the
concept of sufficient statistics in the classical case, it is natural
to wonder if a similar relation exists between the quantum
Rao-Blackwell theorem here and the concept of sufficient channels
defined by Petz \cite{petz}. One equivalent condition for a channel
$\mc G$ to be sufficient in Petz's definition is that the
Accardi-Cecchini GCE $\gce{\mc G}$ in terms of the root product
given by Eq.~(\ref{root}) does not depend on $x$. The GCE here, on the
other hand, is in terms of the Jordan product so that it can be
related to the parameter estimation error. The relation between Petz's
sufficiency and the constant GCE condition desired here is thus
nontrivial.

A trivial example that makes any GCE constant and the channel
sufficient in any sense is the unitary channel given by
Eqs.~(\ref{unitary}) and (\ref{gce_unitary}), as long as the unitary
operator there does not depend on $x$. Applying Theorem~\ref{thm_qrbt}
to the unitary channel gives no error reduction, however. In the
following, I offer more useful examples that both satisfy Petz's
sufficiency and give the desired constant GCE condition.

\subsection{\label{sec_tensor}A sufficient channel for tensor-product states}
\begin{lemma}
\label{lem_sep}
Let 
\begin{align}
\rho_x &= \sigma_x \otimes \tau,
&
\mc G\rho_x &= \trace_0  \rho_x = \sigma_x,
\label{G_sep}
\end{align}
where $\sigma_x$ is a density operator on $\mc H_1$ and $\tau$ is an
auxiliary density operator on $\mc H_0$. A solution to any GCE is
\begin{align}
\gce{\mc G}B = \trace_0[(I_1 \otimes \tau) B ],
\label{gce_sep}
\end{align}
which does not depend on $x$ if $\tau$ does not.
\end{lemma}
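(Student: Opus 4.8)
The plan is to verify directly that the operator defined in Eq.~(\ref{gce_sep}) satisfies the characterizing property of the GCE, either the variational/inner-product condition Eq.~(\ref{gce}) or—since the Jordan product is assumed throughout this section, but the claim is for \emph{any} $\mc E$—the explicit condition Eq.~(\ref{gce_exp}), namely $\mc E_{\mc G\rho_x}(\gce{\mc G}B) = \mc G\,\mc E_{\rho_x}B$. I would work with Eq.~(\ref{gce_exp}) because $\mc G = \trace_0$ is simple and the right-hand side is concrete. Here $\mc G\rho_x = \sigma_x$, so the left-hand side is $\mc E_{\sigma_x}\bigl(\trace_0[(I_1\otimes\tau)B]\bigr)$, and the right-hand side is $\trace_0\bigl(\mc E_{\sigma_x\otimes\tau}B\bigr)$. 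The goal is to show these coincide for every Hermitian $B$ on $\mc H_1\otimes\mc H_0$.

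The key technical step is to reduce to product operators and use the factorization axiom Eq.~(\ref{ax3}). First I would take $B = B_1\otimes B_0$. Then the right-hand side becomes $\trace_0\bigl[\mc E_{\sigma_x\otimes\tau}(B_1\otimes B_0)\bigr]$; applying Eq.~(\ref{ax3})—which is legitimate because $\tau,B_0$ commute after we further reduce $B_0$ to a spectral component, or because $B_0$ may be taken to commute with $\tau$ by choosing an eigenbasis of $\tau$—this equals $\trace_0\bigl[(\mc E_{\sigma_x}B_1)\otimes(\mc E_\tau B_0)\bigr] = (\mc E_{\sigma_x}B_1)\,\trace(\mc E_\tau B_0) = (\mc E_{\sigma_x}B_1)\,\trace(\tau B_0)$, where the last equality uses that $\mc E_\tau$ is trace-preserving in the sense $\trace \mc E_\tau B_0 = \trace \tau B_0$ (which follows from $\mc E$ being self-adjoint with $\mc E_\tau I = \tau$, or directly from Eq.~(\ref{ax1}) applied to $I$). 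On the other side, $\trace_0[(I_1\otimes\tau)(B_1\otimes B_0)] = B_1\,\trace(\tau B_0)$, and then $\mc E_{\sigma_x}$ of that is $(\mc E_{\sigma_x}B_1)\,\trace(\tau B_0)$, matching. Linearity in $B$ then extends the identity to all $B$, and uniqueness of the GCE as an element of $L_2(\mc G\rho_x)$ (guaranteed by the Riesz argument in Definition~\ref{def_gce}) finishes the identification. Finally, $x$-independence is immediate: Eq.~(\ref{gce_sep}) involves $x$ nowhere once $\tau$ is fixed.

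The main obstacle is the bookkeeping around Eq.~(\ref{ax3}): that axiom carries the commutation proviso ``$\rho_1,A_1$ commute or $\rho_2,A_2$ commute,'' so I cannot blindly apply it to an arbitrary $B_1\otimes B_0$. The clean fix is to expand $B_0$ (equivalently $\tau$, the operator paired with it in the relevant tensor slot) in its spectral decomposition, reducing to the case where $B_0$ is a rank-one projector onto an eigenvector of $\tau$—so $\tau$ and $B_0$ commute and the proviso is met—and then reassemble by linearity; alternatively one can expand $B_0$ in the eigenbasis of $\tau$ and handle off-diagonal terms by noting their partial-trace-against-$\tau$ vanishes on both sides. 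A secondary subtlety is that the lemma asserts Eq.~(\ref{gce_sep}) is ``a solution to any GCE,'' i.e.\ for any $\mc E$ satisfying Eqs.~(\ref{ax1})--(\ref{ax4}), not just the Jordan product; but the argument above only uses Eq.~(\ref{ax1}) (for trace-preservation) and Eq.~(\ref{ax3}), so it goes through at that level of generality without modification.
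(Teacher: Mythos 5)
Your overall strategy---verifying the explicit condition $\mc E_{\mc G\rho_x}(\gce{\mc G}B)=\mc G\,\mc E_{\rho_x}B$ by decomposing $B$ into product operators---is different from the paper's, and it contains a genuine gap exactly at the point you flag as a ``fix.'' After expanding $B_0$ in the eigenbasis $\{\ket{i}\}$ of $\tau$, the diagonal pieces $B_1\otimes\ket{i}\bra{i}$ are indeed handled by Eq.~(\ref{ax3}) plus Eq.~(\ref{ax1}). But for an off-diagonal piece $B_1\otimes\ket{i}\bra{j}$ with $i\neq j$, the quantity you must control on the right-hand side is $\trace_0\!\bk{\mc E_{\sigma_x\otimes\tau}(B_1\otimes\ket{i}\bra{j})}$, and its vanishing does \emph{not} follow from Eqs.~(\ref{ax1}) and (\ref{ax3}): those axioms say nothing about how $\mc E_{\sigma_x\otimes\tau}$ acts when the second tensor factor fails to commute with $\tau$ (this is precisely the operator-ordering ambiguity the commutation proviso in Eq.~(\ref{ax3}) was added to avoid). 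Your phrase ``their partial-trace-against-$\tau$ vanishes on both sides'' is only justified on the left-hand side, where the term is literally $B_1\trace(\tau\ket{i}\bra{j})=0$; on the right-hand side it is an unproven claim, so your concluding assertion that the argument uses only Eqs.~(\ref{ax1}) and (\ref{ax3}) is not correct. The gap is repairable within your route---for instance, apply Eq.~(\ref{ax2}) with unitaries $I_1\otimes V$, $V$ a phase unitary diagonal in the $\tau$ eigenbasis, to show that $\trace_0\mc E_{\sigma_x\otimes\tau}(B_1\otimes\ket{i}\bra{j})$ picks up an arbitrary phase $e^{i(\theta_j-\theta_i)}$ and hence vanishes---but that extra ingredient is where the real work lies and it is missing as written.

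The paper avoids this issue entirely by working with the adjoint characterization Eq.~(\ref{gce}) instead of Eq.~(\ref{gce_exp}): for a test operator $c$ on $\mc H_1$ one has $\mc G^*c=c\otimes I_0$, and the self-adjointness of $\mc E$ with respect to the Hilbert-Schmidt inner product lets one move $\mc E_{\sigma_x\otimes\tau}$ onto $c\otimes I_0$ rather than onto $B$. Since the second factor there is the identity, which commutes with $\tau$, Eq.~(\ref{ax3}) applies with no decomposition of $B$ at all, giving $\mc E_{\sigma_x\otimes\tau}(c\otimes I_0)=(\mc E_{\sigma_x}c)\otimes\tau$ and then Eq.~(\ref{gce_sep}) after unwinding the traces. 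If you adopt that one move---test against $c\otimes I_0$ and use self-adjointness---your verification goes through for any $\mc E$ satisfying Eqs.~(\ref{ax1})--(\ref{ax4}) without any spectral bookkeeping.
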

See Appendix~\ref{app_proofs} for the proof.

A sufficient channel may be understood intuitively as a channel that
retains all information in the quantum sensor about $x$. Then it makes
sense that the channel in Lemma~\ref{lem_sep} is sufficient, as it
simply amounts to discarding an independent ancilla that carries no
information about $x$.  A significant implication of the lemma is a
more general version of Corollary~\ref{cor_vn} for the local error as
follows.
\begin{corollary}
\label{cor_vn2}
Given any POVM $M:\Sigma_{\mc Y} \to \mc O(\mc H_2)$, any estimator
$b:\mc Y \to \mathbb R$, and the resulting local error $\MSE_x$, there
exists a von Neumann measurement that can perform at least as well for
all $x \in \mc X$.
\end{corollary}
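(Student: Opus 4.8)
The plan is to realize the given POVM measurement as a von Neumann measurement on an enlarged Hilbert space, to recognize the discarding of the adjoined ancilla as a sufficient channel of the type treated in Lemma~\ref{lem_sep}, and then to invoke Theorem~\ref{thm_qrbt} to transfer the estimator back to $\mc H_2$ without increasing the local error at any $x$.

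First I would apply Naimark's dilation theorem: there exist an auxiliary Hilbert space $\mc H_0$, a density operator $\tau$ on $\mc H_0$, a unitary $V$ on $\mc H_2\otimes\mc H_0$, and a projection-valued measure $\Pi:\Sigma_{\mc Y}\to\mc O(\mc H_2\otimes\mc H_0)$ such that $\trace M(E)\rho_x = \trace[(\rho_x\otimes\tau)V^\dagger\Pi(E)V]$ for all $E \in \Sigma_{\mc Y}$ and all $x\in\mc X$. I would then define the Hermitian operator-valued estimator on the dilated space,
\begin{align}
\tilde B &\equiv V^\dagger\Bk{\int b(y)\,\Pi(dy)}V.
\end{align}
Because $\Pi$ is projection-valued, $(\tilde B - a(x)(I_2\otimes I_0))^2 = V^\dagger\Bk{\int [b(y)-a(x)]^2\,\Pi(dy)}V$, and using the identity $\norm{C}_\rho^2 = \trace C^2\rho$ for Hermitian $C$ under the Jordan product one finds $\MSE_x = \norm{\tilde B - a(x)(I_2\otimes I_0)}_{\rho_x\otimes\tau}^2$; by Eq.~(\ref{MSE}) this is exactly the local error of a von Neumann measurement of $\tilde B$ on the state $\rho_x\otimes\tau$.

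Next I would invoke Lemma~\ref{lem_sep}, with $\mc H_1$ and $\mc H_0$ there identified with $\mc H_2$ and $\mc H_0$ here, applied to the product states $\rho_x\otimes\tau$ and the partial-trace channel $\mc G = \trace_0$. The lemma gives $\gce{\mc G}\tilde B = \trace_0[(I_2\otimes\tau)\tilde B]$, a Hermitian operator on $\mc H_2$ that does not depend on $x$ since $\tau$ does not. Hence Theorem~\ref{thm_qrbt} applies to the family $\{\rho_x\otimes\tau\}$ with estimator $\tilde B$ and yields
\begin{align}
\MSE_x - \MSE_x' &= D_{\rho_x\otimes\tau,\mc G}(\tilde B,\gce{\mc G}\tilde B) \ge 0,
\end{align}
where, using $\mc G(\rho_x\otimes\tau) = \rho_x$, the quantity $\MSE_x' = \norm{\gce{\mc G}\tilde B - a(x)I_2}_{\rho_x}^2$ is by Eq.~(\ref{MSE}) precisely the local error of a von Neumann measurement of the observable $\gce{\mc G}\tilde B$ on the original sensor state $\rho_x$. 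That measurement is the one asserted by the corollary.

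The main obstacle I expect is purely technical rather than conceptual: for a general Borel outcome space $\mc Y$ the dilation space may be infinite-dimensional and $\tilde B$ may be unbounded, so one must verify that $\tilde B \in L_2(\rho_x\otimes\tau)$ and $\gce{\mc G}\tilde B \in L_2(\rho_x)$ for every $x$ before Lemma~\ref{lem_sep} and Theorem~\ref{thm_qrbt} can be quoted verbatim, and one should confirm that the spectral (von Neumann) measurements of $\tilde B$ and of $\gce{\mc G}\tilde B$ are well defined and integrable against the quadratic loss. In the finite-dimensional setting adopted throughout the paper, or for a countable $\mc Y$ as in Sec.~\ref{sec_weak}, these points are immediate and the argument above goes through without modification.
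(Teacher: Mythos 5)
Your proposal is correct and follows essentially the same route as the paper's proof: Naimark dilation of the POVM, framing the measurement plus data processing as a von Neumann measurement of an observable $B$ on the enlarged space $\mc H_2\otimes\mc H_0$ (the paper absorbs your unitary $V$ into the dilated projection-valued measure and delegates the $\MSE_x$ identity to Appendix~\ref{app_vn}), then applying Lemma~\ref{lem_sep} for the $x$-independent GCE $\trace_0[(I\otimes\tau)B]$ and Theorem~\ref{thm_qrbt} to conclude $\MSE_x \ge \MSE_x'$ for all $x$. Your closing caveat about infinite-dimensional dilations and unboundedness is consistent with the paper's stated restriction to finite dimensions, so nothing essential is missing.
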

\begin{proof}
Write the Naimark extension of the POVM as
\begin{align}
\trace M(\cdot) \sigma_x &= \trace  \Pi'(\cdot)(\sigma_x \otimes \tau),
\end{align}
where $\sigma_x$ and $\tau$ are defined in Lemma~\ref{lem_sep} and
$\Pi'$ is a projection-valued measure on $\mc H_1\otimes \mc H_0$.
As shown in Appendix~\ref{app_vn}, the measurement and the data
processing by $b$ can be framed as a von Neumann measurement of
$B = \int b(y) \Pi'(dy)$ on the larger Hilbert space, such that its
error $\MSE_x$ with respect to $\rho_x = \sigma_x\otimes\tau$ is given
by Eq.~(\ref{MSE}).  Now assume the channel in Lemma~\ref{lem_sep}.  A
solution to $\gce{\mc G}B$ is given by Eq.~(\ref{gce_sep}), which does
not depend on $x$. It follows from Theorem~\ref{thm_qrbt} that the
error $\MSE_x'$ achieved by a von Neumann measurement of
$\gce{\mc G}B$ is at least as good as $\MSE_x$ for all $x \in \mc X$.
\end{proof}
Note that Corollary~\ref{cor_vn2} is more general than
Corollary~\ref{cor_vn}, since the former applies to the local errors
for all parameter values, not just the average errors in the Bayesian
case.  A proof of Corollary~\ref{cor_vn} using Corollary~\ref{cor_vn2}
is presented in Appendix~\ref{app_proofs}.

The corollaries imply that, in seeking an admissible estimator for
estimating a real scalar parameter under a mean-square-error
criterion, it is sufficient to consider only von Neumann measurements,
and randomization via an independent ancilla is not helpful in both
Bayesian and frequentist settings. For example, optical amplification
has been suggested to improve astronomical measurements
\cite{kellerer16}, but since optical amplification must involve an
independent ancilla \cite{haus,caves82}, Corollary~\ref{cor_vn2}
implies that there always exists a von Neumann measurement that
performs at least as well.

The corollaries are reminiscent of a well known result saying that a
von Neumann measurement of the so-called symmetric logarithmic
derivative (SLD) operator can saturate the quantum Cram\'er-Rao bound
(see Sec.~6.4 in Ref.~\cite{hayashi}). Note, however, that the bound
assumes unbiased estimators and the differentiability of $\rho_x$,
while the SLD measurement may be a function of the unknown parameter
and thus unrealizable. The corollaries here, on the other hand, are
much more general and conclusive, as they apply to arbitrary
estimators and arbitrary families of density operators, while the von
Neumann measurements they offer are all parameter-independent.

Of course, one is often forced to use an ancilla in practice, such as
the optical probe in atomic metrology \cite{itano93} or optomechanics
\cite{braginsky}.  Then the divergence offers a measure of regret in
both Bayesian and frequentist settings through Theorems~\ref{thm_pyth}
and \ref{thm_qrbt}. Consider atomic metrology for an example
\cite{itano93}.  Let $\rho_x$ be the parameter-dependent density
operator of the atoms on $\mc H_1$ and $\tau$ be the initial state of
an optical probe on $\mc H_0$.  Then the state after the optical
probing can be expressed as $U(\rho_x \otimes \tau)U^\dagger$, where
$U$ is a unitary operator on $\mc H_1\otimes\mc H_0$ that models the
atom-light interaction.  If an optical measurement, modeled by a
projection-valued measure $\Pi_0$ on $\mc H_0$, is performed and the
estimator in terms of the outcome $y$ is $b(y)$, then the $B$
observable in Lemma~\ref{lem_sep} and Corollary~\ref{cor_vn2} can be
expressed as
\begin{align}
B &= \int b(y) \Pi'(dy),
&
\Pi'(\cdot) &= U^\dagger \Bk{I_1\otimes \Pi_0(\cdot)} U,
\end{align}
and $\MSE_x$ is the error of this indirect measurement of the atoms.
The GCE $\mc G_*B$, on the other hand, is an atomic observable on
$\mc H_1$, and $\MSE_x'$ is the error of the direct atomic measurement
of $\mc G_*B$.  Then
$D_{\rho_x,\mc G}(B,\gce{\mc G}B) =\MSE_x - \MSE_x'$ can be regarded
as the regret due to the indirectness of the optical measurement. The
Bayesian setting can be studied similarly.

\subsection{\label{sec_sym}A sufficient channel for symmetric states}
Let $\{U_z: z \in \mc Z\}$ be a set of unitary operators on $\mc H_2$,
and suppose that $\rho_x$ is invariant to all of them, viz.,
\begin{align}
U_z \rho_x U_z^\dagger &= \rho_x \quad
\forall z \in \mc Z.
\label{rho_invar}
\end{align}
Examples include the symmetric states that are invariant to any
permutation of a tensor-powered Hilbert space---to be discussed
later---and optical states with random phases that are invariant to
any phase modulation.  $\rho_x$ is also invariant to the random
unitary channel
\begin{align}
\mc G \rho_x &= \int \mu(dz) U_z \rho_x U_z^\dagger = \rho_x
\label{ru}
\end{align}
for any probability measure $\mu$ on $(\mc Z,\Sigma_{\mc Z})$.
$\mc G$ is then a sufficient channel in Petz's sense, since another
equivalent condition for Petz's sufficiency is the existence of an
$x$-independent CPTP map that recovers $\rho_x$ from $\mc G \rho_x$
\cite{petz}. It is straightforward to compute the GCEs.
\begin{lemma}
\label{lem_invar}
Given Eqs.~(\ref{rho_invar}) and (\ref{ru}), a solution to any GCE is
\begin{align}
\gce{\mc G} B &= \int \mu(dz) U_z B U_z^\dagger,
\label{gce_invar}
\end{align}
which does not depend on $x$ if $\{U_z\}$ and $\mu$ do not.
\end{lemma}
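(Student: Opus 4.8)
The plan is to verify that the candidate operator $\gce{\mc G} B = \int \mu(dz)\, U_z B U_z^\dagger$ satisfies the defining characterization of the GCE in Eq.~(\ref{gce_exp}), namely $\mc E_{\mc G\rho_x}\gce{\mc G}B = \mc G\,\mc E_{\rho_x}B$, and then to observe that, since the right-hand side is the Jordan product with a fixed $\rho_x$ while the candidate is manifestly built only from $B$, $\{U_z\}$, and $\mu$, the resulting operator is $x$-independent whenever $\{U_z\}$ and $\mu$ are. Because the uniqueness statement in Definition~\ref{def_gce} is only up to the equivalence class in $L_2(\mc G\rho_x)$, it suffices to exhibit one operator solution, so I do not need to worry about which representative I land on.

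First I would recall that here $\mc G$ is the random unitary channel $\mc G\tau = \int\mu(dz)\, U_z \tau U_z^\dagger$, which is self-adjoint with respect to the Hilbert–Schmidt inner product (each $U_z(\cdot)U_z^\dagger$ is self-adjoint and the convex mixture preserves this), so $\mc G^* = \mc G$; in particular $\mc G\rho_x = \rho_x$ by Eq.~(\ref{ru}). Next I would compute the left-hand side of Eq.~(\ref{gce_exp}) with $\mc E$ the Jordan product: $\mc E_{\mc G\rho_x}\gce{\mc G}B = \tfrac12\bigl(\rho_x \int\mu(dz) U_z B U_z^\dagger + \int\mu(dz) U_z B U_z^\dagger\, \rho_x\bigr)$. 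Using invariance $\rho_x = U_z \rho_x U_z^\dagger$, hence $\rho_x U_z = U_z \rho_x$ for every $z$, I can pull $\rho_x$ inside the integral and past $U_z$, obtaining $\int\mu(dz)\, U_z\, \tfrac12(\rho_x B + B\rho_x)\, U_z^\dagger = \int\mu(dz)\, U_z (\mc E_{\rho_x}B) U_z^\dagger = \mc G(\mc E_{\rho_x}B)$, which is exactly the right-hand side. This establishes that the candidate is a valid GCE. The $x$-independence claim is then immediate from the closed form, since no part of $\int\mu(dz)\, U_z B U_z^\dagger$ references $\rho_x$.

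The only genuinely delicate point—really the sole obstacle—is justifying the commutation of $\rho_x$ with the integral and with each $U_z$ inside the integral rigorously in the (possibly infinite-dimensional or measure-theoretic) setting: one needs the integral $\int\mu(dz)\, U_z B U_z^\dagger$ to converge in the appropriate (weak, or $L_2(\rho_x)$) sense and the map $B\mapsto \mc E_{\rho_x}B$ to commute with it, which follows from linearity and continuity of $\mc E_\rho$ together with $\rho_x U_z = U_z\rho_x$; in the finite-dimensional case assumed throughout Sec.~\ref{sec_review} this is routine, and I would simply invoke that restriction, remarking that the formula is expected to persist in the infinite-dimensional examples (random phases) by the same caveat as elsewhere in the paper. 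One should also note that $\gce{\mc G}B$ is Hermitian whenever $B$ is, since $U_z B U_z^\dagger$ is Hermitian and the mixture preserves this, so it is a legitimate operator-valued estimator for use in Theorem~\ref{thm_qrbt}.
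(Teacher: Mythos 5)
Your core computation is correct, and your strategy---exhibiting the candidate operator and verifying the explicit characterization Eq.~(\ref{gce_exp})---is the same as the paper's. There is, however, a scope gap: the lemma asserts that $\int\mu(dz)\,U_zBU_z^\dagger$ solves \emph{any} GCE, i.e.\ for every $\mc E$ map satisfying Eqs.~(\ref{ax1})--(\ref{ax4}), whereas your verification writes out $\mc E$ as the Jordan product and relies on commuting $\rho_x$ through $U_z$ inside that explicit product. That argument covers the Jordan case (and, by the same commutation observation applied to $\sqrt{\rho_x}$, the products $\rho A$ and $\sqrt{\rho}A\sqrt{\rho}$), but it does not reach a general admissible $\mc E$, for which no explicit product form is available. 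The paper's proof avoids the product form entirely by invoking the covariance axiom Eq.~(\ref{ax2}) together with the invariance Eq.~(\ref{rho_invar}): $U_z(\mc E_{\rho_x}B)U_z^\dagger=\mc E_{U_z\rho_x U_z^\dagger}(U_zBU_z^\dagger)=\mc E_{\rho_x}(U_zBU_z^\dagger)$, whence
\begin{align}
\mc G\,\mc E_{\rho_x}B=\int\mu(dz)\,\mc E_{\rho_x}\bk{U_zBU_z^\dagger}
=\mc E_{\rho_x}\int\mu(dz)\,U_zBU_z^\dagger
=\mc E_{\mc G\rho_x}\gce{\mc G}B,
\end{align}
which is Eq.~(\ref{gce_exp}) for every admissible $\mc E$. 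Replacing your Jordan-product manipulation by this single appeal to Eq.~(\ref{ax2}) closes the gap; your remarks on $x$-independence, Hermiticity, and the finite-dimensional justification for exchanging $\mc E_{\rho_x}$ with the integral are fine and match the paper's treatment.

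One incidental error: your claim that $\mc G$ is self-adjoint in the Hilbert--Schmidt inner product is false in general---the HS adjoint of $A\mapsto U_zAU_z^\dagger$ is $A\mapsto U_z^\dagger AU_z$, so $\mc G^*=\int\mu(dz)\,U_z^\dagger(\cdot)U_z$, which equals $\mc G$ only if the ensemble is suitably closed under inverses. Fortunately you never use $\mc G^*=\mc G$: the fact you actually need, $\mc G\rho_x=\rho_x$, is Eq.~(\ref{ru}) itself.
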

See Appendix~\ref{app_proofs} for the proof.

\begin{corollary}
\label{cor_invar}
Given a family of states that are invariant to a set of unitaries
$\{U_z\}$, any estimator $B \in L_2(\rho_x)$, and the resulting local
error $\MSE_x$, there exists an averaged estimator given by
Eq.~(\ref{gce_invar}) that performs at least as well as $B$ for all
$x \in \mc X$.
\end{corollary}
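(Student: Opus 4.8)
The plan is to obtain Corollary~\ref{cor_invar} as a direct application of the quantum Rao-Blackwell theorem (Theorem~\ref{thm_qrbt}) to the random unitary channel $\mc G$ of Eq.~(\ref{ru}). First I would fix, once and for all, a probability measure $\mu$ on $(\mc Z,\Sigma_{\mc Z})$ that does not depend on the unknown $x$ (any such $\mu$ works; when $\{U_z\}$ carries a group structure one may take the Haar measure). Since $\mc G$ is a convex mixture of the unitary channels $\tau \mapsto U_z\tau U_z^\dagger$, it is CPTP, and since $\rho_x$ is invariant under every $U_z$ by Eq.~(\ref{rho_invar}) we have $\mc G\rho_x = \rho_x$, so the post-channel system is again $\mc H_2$. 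By Lemma~\ref{lem_invar}, the operator $\gce{\mc G}B = \int \mu(dz)\, U_z B U_z^\dagger$ is a solution of the GCE equation of Definition~\ref{def_gce} for this $\mc G$ with the state $\rho_x$; it is Hermitian, being a probabilistic average of the Hermitian operators $U_z B U_z^\dagger$, it lies in $L_2(\mc G\rho_x)=L_2(\rho_x)$, and it does not depend on $x$ because neither $\{U_z\}$ nor $\mu$ does. Hence $\gce{\mc G}B$ is a legitimate and realizable operator-valued estimator on $\mc H_2$.

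With these ingredients in place, Theorem~\ref{thm_qrbt} applies verbatim: writing $\MSE_x$ for the error of a von Neumann measurement of $B$ and $\MSE_x'$ for that of $\gce{\mc G}B$ (both with respect to $\rho_x$, as in Eqs.~(\ref{MSE}) and (\ref{MSE2}) specialized to $\mc G\rho_x=\rho_x$), one gets $\MSE_x - \MSE_x' = D_{\rho_x,\mc G}(B,\gce{\mc G}B)$. The right-hand side is nonnegative, as established after Eq.~(\ref{D_dist}) from the properties of the $\mc E$ map (here the Jordan product), so $\MSE_x' \le \MSE_x$ for every $x \in \mc X$, which is precisely the assertion that the averaged estimator of Eq.~(\ref{gce_invar}) performs at least as well as $B$.

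I do not anticipate a substantial obstacle; the work lies entirely in verifying that the hypotheses of Theorem~\ref{thm_qrbt} are genuinely met. The one point deserving explicit attention is that Eq.~(\ref{gce_invar}) really is an operator solution of the GCE defining condition for this $\mc G$ together with its $x$-independence --- both supplied by Lemma~\ref{lem_invar}. If $\mc Z$ is infinite one should also confirm that the integral in Eq.~(\ref{gce_invar}) converges in $L_2(\rho_x)$; this holds because $\norm{U_z B U_z^\dagger}_{\rho_x} = \norm{B}_{\rho_x}$ is constant in $z$ by the unitary covariance Eq.~(\ref{ax2}) combined with Eq.~(\ref{rho_invar}), but under the paper's standing finite-dimensional assumption even this is automatic.
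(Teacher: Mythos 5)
Your proposal is correct and follows exactly the paper's route: the paper proves Corollary~\ref{cor_invar} by invoking Lemma~\ref{lem_invar} (which supplies the $x$-independent GCE solution $\int\mu(dz)\,U_z B U_z^\dagger$ for the random unitary channel) and then applying Theorem~\ref{thm_qrbt}, with nonnegativity of the divergence giving $\MSE_x' \le \MSE_x$. Your additional checks (Hermiticity of the average, $\mc G\rho_x = \rho_x$, convergence in the infinite-dimensional case) are sensible but not a departure from the paper's argument.
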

\begin{proof}
Use Lemma~\ref{lem_invar} and Theorem~\ref{thm_qrbt}.
\end{proof}

If $\mc Z$ is a group and $\{U_z\}$ is a projective unitary
representation of the group that satisfies
$U_{z'}U_z = \omega(z',z) U_{z' z}$ for a complex scalar $\omega$ with
$|\omega| = 1$ \cite{holevo11}, then the left Haar measure $\tilde\mu$
on the group \cite{parth05} plays a special role, as the GCE with
respect to it, written as
\begin{align}
\gce{\tilde{\mc G}} B &\equiv \int \tilde\mu(dz) U_z B U_z^\dagger,
\label{gce_ultimate}
\end{align}
is invariant to any subsequent GCE for any random unitary channel, in
the sense that
\begin{align}
\gce{\mc G} \gce{\tilde{\mc G}} B &= \gce{\tilde{\mc G}} B
\end{align}
for any $\mu$. The left Haar measure is thus the ultimate choice that
gives the highest error reduction in the context of
Corollary~\ref{cor_invar}.

For a concrete example, let $\mc H_2 = \mc H_1^{\otimes n}$,
$\pi \in S_n$ be a permutation function of
$(1,\dots,n)$, and $S_n$ be the permutation group.
Define each unitary by
\cite{watrous}
\begin{align}
U_\pi(\ket{\psi_1}\otimes \dots \otimes \ket{\psi_n}) &= 
\ket*{\psi_{\pi^{-1}1}} \otimes \dots\otimes \ket*{\psi_{\pi^{-1}n}}
\label{U_perm}
\end{align}
for any $\{\ket{\psi_j} \in \mc H_1:j = 1,\dots,n\}$.  An operator
invariant to all the permutation unitaries is called
symmetric. Physically, a symmetric density operator corresponds to $n$
indistinguishable systems. A common example is
$\rho_x = \sigma_x^{\otimes n}$, where $\sigma_x$ is a density
operator on $\mc H_1$. The Haar measure is simply
$\tilde\mu(\pi) = 1/n!$, and the corresponding GCE is
\begin{align}
\gce{\tilde{\mc G}} B &= \frac{1}{n!}\sum_\pi U_\pi B U_\pi^\dagger,
\label{haar_perm}
\end{align}
which is a symmetrization. Furthermore, if one assumes
\begin{align}
B &= C\otimes I_1^{\otimes(n-m)},
&
C &\in \mc O(\mc H_1^{\otimes m}),
\label{B_perm}
\end{align}
then Eq.~(\ref{haar_perm}) leads to the quantum U-statistics
introduced by \guta{} and Butucea \cite{guta10}, as shown in
Appendix~\ref{app_U}.  The U-statistic is an unbiased estimator of
$a(x) = \trace \rho_x B = \trace \rho_x \gce{\mc G}B$.  The
simplest example is when $m = 1$ and
\begin{align}
\gce{\tilde{\mc G}} B &= \frac{1}{n}\sum_{l=1}^{n} 
I_1^{\otimes (l-1)}\otimes C \otimes I_1^{\otimes(n-l)},
\end{align}
which lowers the variance of $B$ by a factor of $n$ if
$\rho_x = \sigma_x^{\otimes n}$.

The derivation of the classical U-statistics by Rao-Blackwellization
is well known \cite{vaart}, and Corollary~\ref{cor_invar} is indeed
the appropriate quantum generalization.

\subsection{\label{sec_sum}A sufficient channel for direct-sum states}
Suppose now that $\BK{\rho_x:x \in \mc X}$ is a family of density
operators on a direct sum of Hilbert spaces given by
\begin{align}
\mc H = \bigoplus_{n \in \mc N} \mc H_n,
\label{H_sum}
\end{align}
and each $\rho_x$ is given by the direct sum
\begin{align}
\rho_x &= \bigoplus_{n \in \mc N} \sigma_x^{(n)},
\label{rho_sum}
\end{align}
where each $\sigma_x^{(n)}$ is a positive-semidefinite operator on
$\mc H_n$. A prominent example in optics is the multimode thermal
state, which will be discussed in Sec.~\ref{sec_therm}.  Let
$\Pi_n:\mc H\to \mc H_n$ be the projection operator onto $\mc H_n$.
Suppose that the Hilbert-space decomposition given by
Eq.~(\ref{H_sum}) is parameter-independent, such that all
$\{\Pi_n:n \in \mc N\}$ do not depend on $x$. Then the channel
\begin{align}
\mc G \rho_x &= \bigoplus_n \Pi_n \rho_x \Pi_n = \rho_x
\label{G_project}
\end{align}
is sufficient in Petz's sense. To compute the
GCEs with respect to Eqs.~(\ref{rho_sum}) and (\ref{G_project}), I
impose two more properties on the $\mc E$ map given by
\begin{align}
\mc E_{\sigma^{(1)}\oplus\sigma^{(2)}}(A_1 \oplus A_2) &= 
\bk{\mc E_{\sigma^{(1)}} A_1} \oplus \bk{\mc E_{\sigma^{(2)}} A_2},
\label{E_sum}
\\
\Pi_1 \bk{\mc E_{\sigma^{(1)}\oplus\sigma^{(2)}} A}\Pi_1 &= \mc E_{\sigma^{(1)}}(\Pi_1 A\Pi_1)
\label{E_project}
\end{align}
for any $A_j \in \mc O(\mc H_j)$, any
$A \in \mc O(\mc H_1\oplus \mc H_2)$, and any density operator on
$\mc H_1\oplus \mc H_2$ in the form of
$\sigma^{(1)}\oplus\sigma^{(2)}$. These properties are satisfied by
the products given by Eqs.~(\ref{jordan})--(\ref{root}) at least.
Then the GCE has the following solution.
\begin{lemma}
\label{lem_sum}
Given Eqs.~(\ref{rho_sum}) and (\ref{G_project}) and assuming a GCE in
terms of an $\mc E$ map that satisfies Eqs.~(\ref{E_sum}) and
(\ref{E_project}), a solution to the GCE is
\begin{align}
\gce{\mc G} B &= \bigoplus_n \Pi_n B \Pi_n,
\label{gce_sum}
\end{align}
which does not depend on $x$ if the projectors $\{\Pi_n\}$ do not.
\end{lemma}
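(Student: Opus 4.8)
The plan is to verify directly that the candidate $B' \equiv \bigoplus_n \Pi_n B \Pi_n$ satisfies the explicit characterization of the GCE in Eq.~(\ref{gce_exp}), which in the present notation ($\sigma\to\rho_x$, $\mc F\to\mc G$, $A\to B$) reads $\mc E_{\mc G\rho_x}(\gce{\mc G}B) = \mc G(\mc E_{\rho_x}B)$. Since $B'$ is a bounded operator it lies in $L_2(\mc G\rho_x)$, so by Definition~\ref{def_gce} it is enough to check that $B'$ obeys this operator equation; parameter independence will then be read off at the end.

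First I would record two immediate consequences of Eqs.~(\ref{rho_sum}) and (\ref{G_project}): the fixed-point property $\mc G\rho_x = \rho_x = \bigoplus_n \sigma_x^{(n)}$, and the fact that $\mc G$ acts on an arbitrary operator $C$ on $\mc H$ as the pinching $\mc G C = \bigoplus_n \Pi_n C \Pi_n$. Then, for the left-hand side of Eq.~(\ref{gce_exp}), I would apply Eq.~(\ref{E_sum}) (extended blockwise over $\mc N$) to the direct sum $B'$ against the block-diagonal state $\bigoplus_n \sigma_x^{(n)}$, obtaining $\mc E_{\mc G\rho_x}B' = \bigoplus_n \mc E_{\sigma_x^{(n)}}(\Pi_n B\Pi_n)$. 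For the right-hand side I would write $\mc G(\mc E_{\rho_x}B) = \bigoplus_n \Pi_n(\mc E_{\rho_x}B)\Pi_n$ and use Eq.~(\ref{E_project}) (again blockwise over $\mc N$) to identify the $n$-th block as precisely $\mc E_{\sigma_x^{(n)}}(\Pi_n B\Pi_n)$. The two sides then agree block by block, establishing Eq.~(\ref{gce_exp}) for $B'$. Parameter independence is immediate: $B$ is a fixed estimator and the $\{\Pi_n\}$ are assumed $x$-independent, so $B'$ does not depend on $x$.

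The one delicate point I anticipate is the passage from the two-summand identities Eqs.~(\ref{E_sum}) and (\ref{E_project}) to the full direct sum indexed by $\mc N$. For finite $\mc N$ this is a routine induction, splitting off one block at a time and using that $\Pi_n B\Pi_n$ and $\sigma_x^{(n)}$ are supported on $\mc H_n$; for the infinite-dimensional case relevant to the thermal-state application of Sec.~\ref{sec_therm} one also needs a limiting argument together with the standing $L_2$-closure caveat of Sec.~\ref{sec_review}, which I would invoke rather than belabor. I would also remark that $\mc G$ is genuinely CPTP (a pinching channel) and is self-adjoint in the Hilbert-Schmidt inner product, $\mc G^* = \mc G$, so an equivalent route would be to check Eq.~(\ref{gce}) in the form $\Avg{c,B'}_{\rho_x} = \Avg{\mc G c,B}_{\rho_x}$ for all $c$; but the route via Eq.~(\ref{gce_exp}) above is the most economical.
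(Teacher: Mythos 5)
Your proposal is correct and follows essentially the same route as the paper's own proof: verify the characterization $\mc E_{\mc G\rho_x}\gce{\mc G}B = \mc G\mc E_{\rho_x}B$ of Eq.~(\ref{gce_exp}), using Eq.~(\ref{E_sum}) for the left-hand side and the pinching form of $\mc G$ together with Eq.~(\ref{E_project}) for the right-hand side, with block-by-block agreement. Your remarks on extending the two-summand identities to general $\mc N$ and on the infinite-dimensional caveat are reasonable additions that the paper leaves implicit, but they do not change the argument.
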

See Appendix~\ref{app_proofs} for the proof.

The quantum Rao-Blackwell theorem can now be applied to
Eqs.~(\ref{rho_sum}) and (\ref{G_project}) to prove the following.
\begin{corollary}
\label{cor_sum}
Assume that the Hilbert space can be decomposed as Eq.~(\ref{H_sum})
and the projectors $\{\Pi_n:\mc H \to \mc H_n\}$ do not depend on the
unknown parameter $x$.  Given a density-operator family in the form of
a direct sum as per Eq.~(\ref{rho_sum}), any estimator
$B \in L_2(\rho_x)$, and the resulting local error $\MSE_x$, there
exists an estimator $\gce{\mc G} B$ given by Eq.~(\ref{gce_sum}),
also in the form of a direct sum, that performs at least as well as
$B$ for all $x \in \mc X$.
\end{corollary}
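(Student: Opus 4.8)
The plan is to apply the quantum Rao-Blackwell theorem (Theorem~\ref{thm_qrbt}) directly with the pinching channel $\mc G$ of Eq.~(\ref{G_project}). First I would check that this $\mc G$ qualifies as an admissible channel: the map $\mc G\rho = \bigoplus_n \Pi_n \rho \Pi_n$ associated with the parameter-independent decomposition $\mc H = \bigoplus_n \mc H_n$ of Eq.~(\ref{H_sum}) is completely positive and trace preserving, hence a legitimate channel realizable without knowledge of $x$. Moreover, because each $\rho_x$ is already block diagonal as in Eq.~(\ref{rho_sum}), one has $\mc G\rho_x = \rho_x$, so the post-channel state lives on the same Hilbert space and the error after $\mc G$ is $\MSE_x' = \norm{\gce{\mc G} B - a(x) I}_{\rho_x}^2$, exactly as in Eq.~(\ref{MSE2}).

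Next I would invoke Lemma~\ref{lem_sum}, with the $\mc E$ map taken to be the Jordan product (which satisfies Eqs.~(\ref{E_sum})--(\ref{E_project})): a solution to the GCE is the block-diagonal operator $\gce{\mc G} B = \bigoplus_n \Pi_n B \Pi_n$. This operator is Hermitian whenever $B$ is, is again of direct-sum form as the statement claims, and—crucially—does not depend on $x$ precisely because the projectors $\{\Pi_n\}$ are assumed $x$-independent. That $x$-independence is exactly the constant-GCE hypothesis demanded by Theorem~\ref{thm_qrbt}, so I would take $B' = \gce{\mc G}B$ as the candidate estimator and note that a von Neumann measurement of it is a realizable procedure.

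The conclusion is then immediate: Theorem~\ref{thm_qrbt} gives $\MSE_x - \MSE_x' = D_{\rho_x,\mc G}(B,\gce{\mc G}B)$ for every $x \in \mc X$, and since the divergence is nonnegative (as noted after Eq.~(\ref{D_dist})), we obtain $\MSE_x' \le \MSE_x$ for all $x$, i.e., the von Neumann measurement of the block-diagonal $\gce{\mc G}B$ performs at least as well as the original measurement of $B$ everywhere, which is the asserted domination; strict improvement at a given $x$ holds precisely when $D_{\rho_x,\mc G}(B,\gce{\mc G}B) > 0$. I expect no real obstacle here, since the corollary is essentially a one-line consequence of Theorem~\ref{thm_qrbt} once the GCE formula is available; the only genuine work sits in Lemma~\ref{lem_sum}—verifying that $\bigoplus_n \Pi_n B \Pi_n$ actually satisfies the defining identity $\mc E_{\mc G\rho_x}\gce{\mc G}B = \mc G\mc E_{\rho_x} B$ of Eq.~(\ref{gce_exp}), which is where Eqs.~(\ref{E_sum})--(\ref{E_project}) are used, and that proof is deferred to Appendix~\ref{app_proofs}. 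A secondary subtlety worth flagging is that $\mc N$ may be countably infinite (as in the multimode thermal-state application of Sec.~\ref{sec_therm}), so the direct sums and the $L_2(\rho_x)$ memberships should be read in the appropriate completed sense, though this does not affect the finite-dimensional reasoning underlying the formal statement.
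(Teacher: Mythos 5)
Your proposal is correct and follows essentially the same route as the paper, whose proof is the one-liner ``Use Lemma~\ref{lem_sum} and Theorem~\ref{thm_qrbt}'': the pinching channel of Eq.~(\ref{G_project}) with $x$-independent projectors yields, via Lemma~\ref{lem_sum}, a constant GCE $\gce{\mc G}B = \bigoplus_n \Pi_n B \Pi_n$, and Theorem~\ref{thm_qrbt} together with the nonnegativity of the divergence gives $\MSE_x' \le \MSE_x$ for all $x$. Your added remarks (CPTP-ness of the pinching, Hermiticity, and the infinite-dimensional caveat relevant to Sec.~\ref{sec_therm}) are accurate but not needed beyond what the paper already assumes.
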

\begin{proof}
Use Lemma~\ref{lem_sum} and Theorem~\ref{thm_qrbt}.
\end{proof}
An example in optics is now in order.

\subsection{\label{sec_therm}Thermal-light sensing}
A multimode thermal optical state can be expressed as \cite{mandel}
\begin{align}
\rho_x &= \int d^{2J}\alpha \Phi_x(\alpha) \ket{\alpha}\bra{\alpha},
\\
\alpha &\equiv \mqty(\alpha_1 \\ \vdots \\ \alpha_J) \in \mathbb C^J,
\quad
\hat a_j\ket{\alpha} = \alpha_j\ket{\alpha},
\\
\Phi_x(\alpha) &= \frac{1}{\det(\pi\Gamma_x)}
\exp(-\alpha^\dagger \Gamma_x^{-1}\alpha),
\end{align}
where $\ket{\alpha}$ is a coherent state, $\hat a_j$ is the
annihilation operator for the $j$th mode,
$d^{2J}\alpha \equiv \prod_{j=1}^J d(\Re\alpha_j) d(\Im\alpha_j)$, and
$\Gamma_x$ is the positive-definite mutual coherence matrix. In
thermal-light sensing and imaging problems
\cite{stellar,nair_tsang,tnl,tsang19a,nair_tsang16,lupo}, $\Gamma_x$
is assumed to depend on the unknown parameter $x$.

Let $\mc H_n$ be the $n$-photon Hilbert space. Define a pure Fock
state with photon numbers $\bs m = (m_1,\dots,m_J) \in \mathbb N_0^J$
as
\begin{align}
\ket{\bs m} &\equiv 
\Bk{\prod_{j} \frac{(\hat a_j^\dagger)^{m_j}}{\sqrt{m_j!}}}\ket{\bs 0},
\end{align}
where $\ket{\bs 0}$ denotes the vacuum state. Let
$\norm{\bs m} \equiv \sum_j m_j$ be the total photon number. Then
$\{\ket{\bs m}: \norm{\bs m} = n\}$ is an orthonormal basis of
$\mc H_n$. In terms of the Fock basis, each matrix element of $\rho_x$
is given by
\begin{align}
\bra{\bs m}\rho_x\ket{\bs l} &= 
\int d^{2J}\alpha \Phi_x(\alpha) 
\prod_j e^{-|\alpha_j|^2}
\frac{\alpha_j^{m_j} (\alpha_j^*)^{l_j}}{\sqrt{m_j!l_j!}}.
\end{align}
The Gaussian moment theorem (see Eq.~(1.6-33) in Ref.~\cite{mandel})
implies that
\begin{align}
\bra{\bs m}\rho_x\ket{\bs l} &= 0
\textrm{ if } \norm{\bs m} \neq \norm{\bs l},
\end{align}
meaning that $\rho_x$ can be decomposed in the direct-sum form as
\begin{align}
\rho &= \bigoplus_{n=0}^\infty \sigma_x^{(n)},
\label{therm_sum}
\\
\sigma_x^{(n)} &= \sum_{\bs m,\bs l: \norm{\bs m} = \norm{\bs l} = n}
\bra{\bs m}\rho_x\ket{\bs l} \ket{\bs m}\bra{\bs l},
\end{align}
where each $\sigma_x^{(n)}$ is an operator on $\mc H_n$. Then
$\trace\sigma_x^{(n)}$ is the probability of having $n$ photons in
total and $\sigma_x^{(n)}/\trace\sigma_x^{(n)}$ is the conditional
$n$-photon state. The projectors can be written as
\begin{align}
\Pi_n &= \sum_{\bs m: \norm{\bs m} = n}\ket{\bs m}\bra{\bs m}.
\end{align}
Ignoring the mathematical complications due to the
infinite-dimensional Hilbert space, Corollary~\ref{cor_sum} can now be
applied to Eq.~(\ref{therm_sum}).

If an estimator is constructed from a photon-counting measurement with
respect to any set of optical modes, it can be expressed in a Fock
basis, which commutes with all the projectors $\{\Pi_n\}$. It follows
that the estimator is already in the direct-sum form given by
Eq.~(\ref{gce_sum}) and Corollary~\ref{cor_sum} offers no
improvement. On the other hand, notice that Eq.~(\ref{gce_sum}) must
commute with each projector $\Pi_n$, viz.,
\begin{align}
\Bk{\gce{\mc G} B, \Pi_n} &= 0 \quad \forall n \in \mc N.
\end{align}
If an estimator does not commute with all $\{\Pi_n\}$, such as one
obtained from homodyne detection, then the estimator does not have the
direct-sum form and has the potential to be improved by the quantum
Rao-Blackwellization. 

To introduce a more specific example, diagonalize $\Gamma_x$ in terms
of a diagonal matrix $D_x$ and a unitary matrix $V_x$ as
\begin{align}
\Gamma_x &= V_x D_x V_x^\dagger,
&
D_{jk,x} &= \lambda_{j,x}\delta_{jk},
\end{align}
where $\delta_{jk}$ is the Kronecker delta and each $\lambda_{j,x}$ is
an eigenvalue of $\Gamma_x$. I call $\{\lambda_{j,x}:j = 1,\dots,J\}$
the spectrum of the thermal state. With the change of variable
\begin{align}
\beta = V_x^\dagger\alpha,
\end{align}
$\Phi_x(\alpha) = \Phi_x(V_x\beta)$ becomes separable in terms of $\beta$.
Define also a unitary operator $\hat U_x$ by
\begin{align}
\hat U_x^\dagger \hat a_j \hat U_x = \sum_k V_{jk,x} \hat a_k \equiv \hat g_{j,x},
\label{eigenmodes}
\end{align}
such that $\ket{\alpha} = \ket{V_x\beta} = \hat U_x \ket{\beta}$. $\rho_x$
can then be expressed as
\begin{align}
\rho_x &= \sum_{\bs m} p_x(\bs m)\ket{\bs m,g}\bra{\bs m,g},
\label{rho_sep}
\end{align}
where 
\begin{align}
p_x(\bs m) &= \prod_j \frac{1}{1+\lambda_{j,x}}
\bk{\frac{\lambda_{j,x}}{1+\lambda_{j,x}}}^{m_j}
\end{align}
is separable into a product of Bose-Einstein distributions and
\begin{align}
\ket{\bs m,g} &\equiv \hat U_x\ket{\bs m} =
\Bk{\prod_j \frac{(\hat g_{j,x}^\dagger)^{m_j}}{\sqrt{m_j!}}}\ket{\bs 0}
\end{align}
is a Fock state with respect to the optical modes defined by
Eq.~(\ref{eigenmodes}). I call these optical modes the eigenmodes of
the thermal state. Now suppose that only the spectrum
$\{\lambda_{j,x}\}$ depends on the unknown parameter $x$, while $V$,
$U$, and thus $\{\hat g_j\}$ do not, meaning that the eigenmodes are
fixed. This assumption applies to the thermometry problem studied in
Ref.~\cite{nair_tsang} but does not apply to the
stellar-interferometry problem studied in Ref.~\cite{stellar} or the
subdiffraction-imaging problem studied in
Refs.~\cite{tnl,tsang19a,nair_tsang16,lupo}, because the eigenmodes
in the latter two cases vary with $x$.  With fixed eigenmodes, I can
define a more fine-grained $x$-independent projector as
\begin{align}
\Pi_{\bs m} &= \ket{\bs m,g}\bra{\bs m,g},
\label{project_U}
\end{align}
and apply Corollary~\ref{cor_sum} to Eq.~(\ref{rho_sep}). Plugging
Eq.~(\ref{project_U}) into Eq.~(\ref{gce_sum}) leads to the
Rao-Blackwell estimator
\begin{align}
\gce{\mc G} B &= \sum_{\bs m} \bra{\bs m,g}B\ket{\bs m,g}
\ket{\bs m,g}\bra{\bs m,g},
\label{gce_fock}
\end{align}
which can be implemented by counting the photons in the eigenmodes and
using $\bra{\bs m,g}B\ket{\bs m,g}$ as the estimator.
Equation~(\ref{gce_fock}) then implies the following corollary.
\begin{corollary}
\label{cor_therm}
Suppose that a real scalar parameter $a(x)$ of the spectrum
$\{\lambda_{j,x}\}$ of a thermal-state family is to be estimated and
the eigenmodes are parameter-independent.  Given any measurement, any
estimator, and the resulting local error $\MSE_x$, there exists an
estimator with eigenmode photon counting that performs at least as
well for all $x \in \mc X$.
\end{corollary}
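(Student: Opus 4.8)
The plan is to reach the claimed measurement by applying two earlier reductions in succession: first discard the ancilla of a Naimark dilation via Corollary~\ref{cor_vn2}, and then dephase in the eigenmode Fock basis via Corollary~\ref{cor_sum}.

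I would begin with an arbitrary POVM $M$ and an arbitrary estimator $b:\mc Y\to\mathbb R$, with local error $\MSE_x$. Following the proof of Corollary~\ref{cor_vn2}, I pass to a Naimark dilation $\trace M(\cdot)\rho_x=\trace\Pi'(\cdot)(\rho_x\otimes\tau)$ with an $x$-independent ancilla state $\tau$, set $B=\int b(y)\,\Pi'(dy)$, and apply Theorem~\ref{thm_qrbt} to the ancilla-discarding channel of Lemma~\ref{lem_sep}. This produces the operator-valued estimator $B'\equiv\trace_0[(I\otimes\tau)B]$ on the optical Hilbert space, with the Naimark ancilla traced out, whose local error $\MSE_x'$ satisfies $\MSE_x'\le\MSE_x$ for all $x$. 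The point to keep track of is that $B'$ is parameter-independent: both $B$, built from the fixed $M$ and $b$, and the dilation ancilla $\tau$, which carries no information about $x$, do not depend on $x$.

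Next I would bring in the direct-sum structure of the thermal state. By Eq.~(\ref{rho_sep}), $\rho_x=\sum_{\bs m}p_x(\bs m)\,\ket{\bs m,g}\bra{\bs m,g}$ is diagonal in the eigenmode Fock basis, so with the rank-one projectors $\Pi_{\bs m}=\ket{\bs m,g}\bra{\bs m,g}$ of Eq.~(\ref{project_U}) one has $\bigoplus_{\bs m}\Pi_{\bs m}\rho_x\Pi_{\bs m}=\rho_x$; that is, $\rho_x$ has the direct-sum form required by Corollary~\ref{cor_sum} over the countable index set $\{\bs m\}$, and, because the eigenmodes are assumed parameter-independent, the $\Pi_{\bs m}$ do not depend on $x$ either. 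Applying Corollary~\ref{cor_sum} to the $x$-independent estimator $B'$ then yields the Rao-Blackwell estimator $\gce{\mc G}B'=\sum_{\bs m}\bra{\bs m,g}B'\ket{\bs m,g}\,\ket{\bs m,g}\bra{\bs m,g}$ of Eq.~(\ref{gce_fock}), still $x$-independent, with $\MSE_x''\le\MSE_x'\le\MSE_x$ for every $x$ by Theorem~\ref{thm_qrbt}. Reading off the associated von Neumann measurement, one counts the photons in the eigenmodes---the projectors $\{\Pi_{\bs m}\}$ being precisely the photon-number projectors of the modes $\{\hat g_j\}$---and reports $\bra{\bs m,g}B'\ket{\bs m,g}$ on outcome $\bs m$; this is an eigenmode-photon-counting strategy performing at least as well as the original for all $x$, which is the claim.

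Since the argument merely chains two established corollaries, I do not anticipate a substantive analytic obstacle; the care is confined to the bookkeeping. The three things to check are: $x$-independence at both Rao-Blackwellization steps; that the Jordan-product $\mc E$ obeys Eqs.~(\ref{E_sum}) and (\ref{E_project}) so that Lemma~\ref{lem_sum} and hence Corollary~\ref{cor_sum} apply (already asserted in the text); and that the infinite-dimensional caveat already flagged for this section can be accepted here, which is reasonable because only a countable family of projectors is involved. Equivalently one could compress the two steps into one by composing the ancilla-discarding and eigenmode-dephasing channels into a single sufficient $\mc G$ and computing its GCE once through the chain rule, Theorem~\ref{thm_chain}; I prefer the two-step route only because it keeps the role of each hypothesis---no information in the ancilla, fixed eigenmodes---visible.
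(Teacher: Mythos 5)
Your proposal is correct and follows essentially the same route as the paper: reduce to a von Neumann measurement on the optical Hilbert space via Corollary~\ref{cor_vn2} (you simply unpack its Naimark-dilation/Lemma~\ref{lem_sep} proof rather than citing it), then Rao-Blackwellize with the fixed fine-grained projectors $\Pi_{\bs m}$ via Corollary~\ref{cor_sum} to land on Eq.~(\ref{gce_fock}), i.e., eigenmode photon counting. The bookkeeping points you flag ($x$-independence, Eqs.~(\ref{E_sum})--(\ref{E_project}), and the infinite-dimensional caveat) match the paper's own treatment.
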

\begin{proof}
  Corollary~\ref{cor_vn2} means that only von Neumann measurements
  need to be considered.  Any estimator with any von Neumann
  measurement can be Rao-Blackwellized to become Eq.~(\ref{gce_fock}),
  which can be implemented by eigenmode photon
  counting. Corollary~\ref{cor_sum} then guarantees that the $\MSE_x'$
  achieved by Eq.~(\ref{gce_fock}) can do at least as well for all
  $x \in \mc X$.
\end{proof}
In this example, the family of density operators given by
Eq.~(\ref{rho_sep}) and the $\gce{\mc G}B$ given by
Eq.~(\ref{gce_fock}) happen to commute with one another, but the
original estimator $B$ need not commute with the others, unlike
Sinha's assumption in Ref.~\cite{sinha22}; see
Appendix~\ref{app_sinha} for a brief discussion of his theory. Take
homodyne detection for example. An estimator constructed from homodyne
detection can be framed as
\begin{align}
B &= b(\hat q),
\label{B_hom}
\end{align}
where $\hat q$ is a vectoral quadrature operator that is a linear
function of $\{\hat a_j\}$.  Equation~(\ref{B_hom}) does not commute
with Eq.~(\ref{project_U}) in general, but Corollary~\ref{cor_therm}
still applies to it.

To demonstrate the possible improvement through an even more specific
example, suppose that the spectrum is flat and
$a(x) = \lambda_{j,x} = x$, the mean photon number per mode, is the
parameter of interest.  With homodyne detection, Eq.~(\ref{B_hom}) is
an unbiased estimator of $x$ if
\begin{align}
B &= \frac{1}{J}\sum_j \hat q_j^2 - \frac{1}{2},
&
\hat q_j& = \frac{\hat g_j+\hat g_j^\dagger}{\sqrt{2}}.
\end{align}
The Rao-Blackwell estimator given by Eq.~(\ref{gce_fock}),
on the other hand, can be expressed as
\begin{align}
\gce{\mc G}B &= \frac{1}{J}\sum_j \hat g_j^\dagger \hat g_j.
\end{align}
With the thermal state, it is straightforward to show that
\begin{align}
\MSE_x &= \frac{2}{J} \bk{x + \frac{1}{2}}^2,
&
\MSE_x' &= \frac{1}{J} \bk{x^2+x},
\label{MSE_therm}
\end{align}
which are plotted in Fig.~\ref{rb_therm}, demonstrating
the domination of the Rao-Blackwell estimator.

\begin{figure}[htbp!]
\centerline{\includegraphics[width=0.48\textwidth]{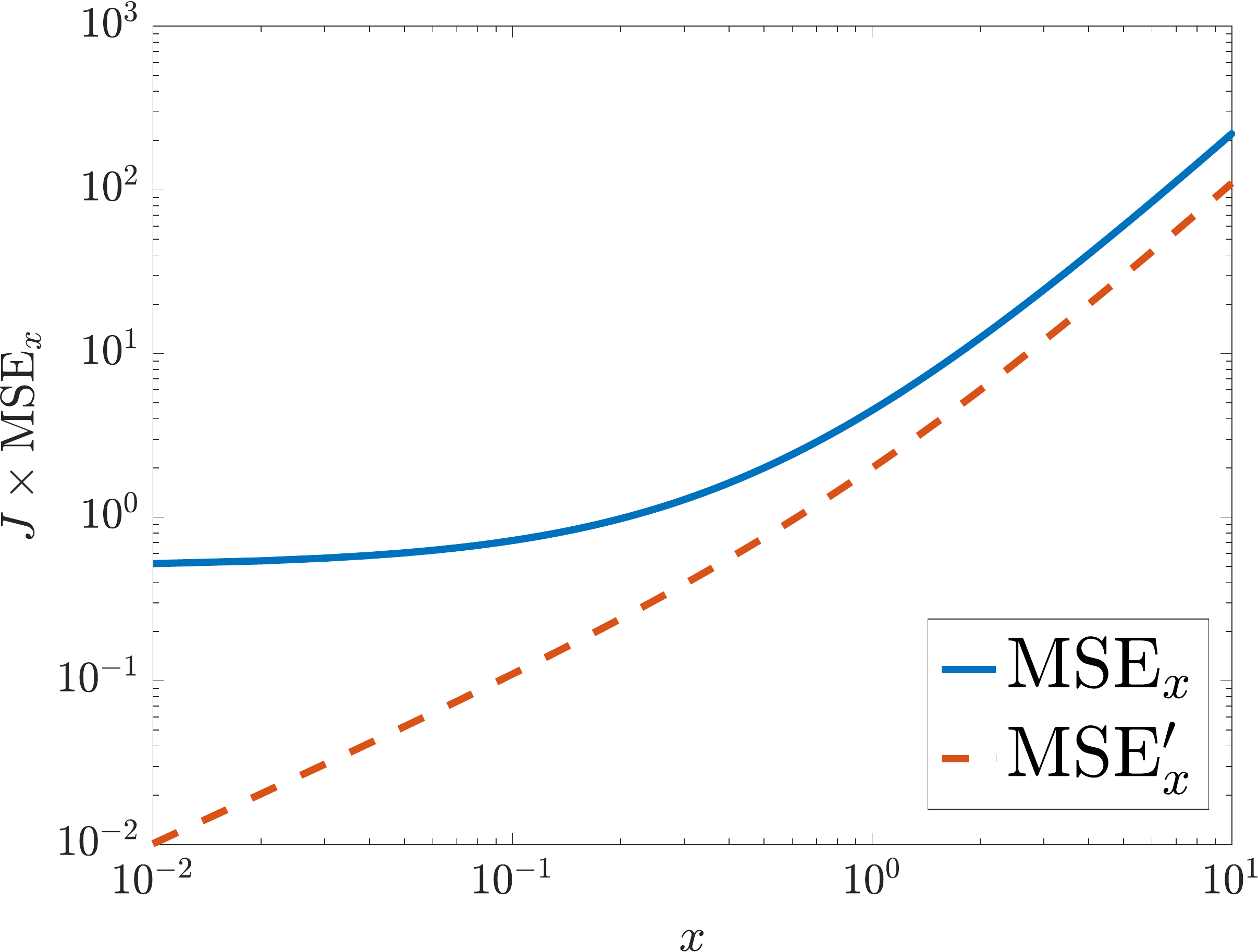}}
\caption{\label{rb_therm}Comparison of the mean-square error obtained
  by homodyne detection ($\MSE_x$) and that by photon counting
  ($\MSE_{x}'$) in estimating the mean photon number per mode $x$ of a
  thermal state. The plot is in log-log scale, both axes are
  dimensionless, and the errors are normalized with respect to $J$,
  the number of optical modes.  The improvement can be regarded as a
  result of the quantum Rao-Blackwellization. }
\end{figure}

Corollary~\ref{cor_therm} is reminiscent of the optimality of photon
counting for thermometry proved in Ref.~\cite{nair_tsang} in terms of
a quantum Cram\'er-Rao bound. Corollary~\ref{cor_therm} is more
general because it applies directly to the local mean-square error of
any biased or unbiased estimator and allows the parametrization of the
spectrum $\{\lambda_{j,x}\}$ and the parameter of interest $a(x)$ to
be general. The superiority of photon counting over homodyne detection
for random displacement models has also been noted in many other
contexts \cite{ng16,gorecki22,shi23,tsang23}, although those works,
like Ref.~\cite{nair_tsang}, rely on the quantum Cram\'er-Rao
  bound as well.

If the eigenmodes vary with $x$, as in the problems of stellar
interferometry \cite{stellar} and subdiffraction imaging
\cite{tnl,tsang19a,nair_tsang16,lupo}, then Eq.~(\ref{gce_fock}) may
not be a valid estimator, because $x$ is unknown and the measurement
may not be realizable. It is an interesting open question whether the
quantum Rao-Blackwell theorem can offer any insight about those
problems as well, beyond the optimality of the direct-sum form in
Corollary~\ref{cor_sum}. I speculate on two potential directions of
future research:
\begin{enumerate}
\item Even if the $\gce{\mc G}$ map may not be constant in general,
  $\gce{\mc G}B$ for a particular $B$ may happen to be constant and
  still a valid estimator.

\item Even if $\gce{\mc G}B$ varies with $x$, Eq.~(\ref{qrbt}) remains
  valid and can be used as a lower bound on $\MSE_x$, in which case
  $\MSE_x \ge \MSE_x'$ is an oracle inequality. An estimator that
  approximates $\gce{\mc G}B$, via an adaptive protocol for example
  \cite{wiseman_milburn}, may still enjoy an error close to $\MSE_x'$.
\end{enumerate}

\section{\label{sec_con}Conclusion}
This work cements the Jordan-product GCE and the associated divergence
as essential concepts in quantum metrology. In the Bayesian setting,
the GCE is found to relate the optimal estimators for a sequence of
channels. In the frequentist setting, the GCE is found to give a
quantum Rao-Blackwell theorem, which can improve a quantum estimator
in the same manner as the classical version does and reveal the
optimal forms of the estimators in common scenarios. In both settings,
the divergence is found to play a significant role in determining the
gap between the estimation errors before and after a channel is
applied. Given these operational meanings, even the purists
\cite{gough20,james21} can no longer dismiss the GCE and the
divergence as pointless concepts. For the more open minds, the
concepts have unveiled a new suite of methods for the study of
decoherence and the design of better measurements in quantum
metrology.

Many open problems remain. First, it should be possible to generalize
the theory here rigorously for infinite-dimensional Hilbert
spaces. Second, it may be possible to generalize the quantum
Rao-Blackwell theorem here for other convex loss functions beyond the
square loss, in the same manner as the classical version
\cite{lehmann98} or Sinha's versions \cite{sinha22}.  Third, there may
be a deeper relation between Petz's sufficiency and the constant GCE
condition desired here, beyond the specific examples in this
work. Fourth, there should be no shortage of further interesting
examples and applications of the theory here for quantum
metrology. Last but not the least, the strategy of using quantum
metrology to give operational meanings to GCEs may be generalizable
for other versions of GCEs and other metrological tasks, such as
multiparameter estimation, thus expanding the fundamental role of GCEs
in both quantum metrology and quantum probability theory.

\section*{Acknowledgment}
I acknowledge helpful discussions with Arthur Parzygnat.  This
research is supported by the National Research Foundation, Singapore,
under its Quantum Engineering Programme (QEP-P7).

\appendix

\section{\label{app_predict}Quantum prediction}
In analogy with Eq.~(\ref{gce_def}), $\mc F^*$, the Hilbert-Schmidt
adjoint of the CPTP map $\mc F$ that models a channel, obeys the
following interesting formula for any $\mc E$:
\begin{align}
\mc F^* B &= \argmin_{A \in L_2(\sigma)} D_{\sigma,\mc F}(A,B).
\label{predict}
\end{align}
In other words, $\mc F^*$ is the optimal prediction in the same way
$\gce{\mc F}$ is the optimal retrodiction.  The notations of $\mc F^*$
and $\gce{\mc F}$ also coincide with those of the pullback and the
pushforward in differential geometry \cite{lee03}, respectively, and
indeed $\mc F^*$ and $\gce{\mc F}$ behave like those operations.

With the adjoint relation between $\mc F^*$ and $\gce{\mc F}$ given by
Eq.~(\ref{gce}) in terms of the weighted inner product, $\mc F^*$ can
be shown to obey
\begin{align}
\mc E_{\sigma}\mc F^*B &= (\mc F_*)^* \mc E_{\mc F\sigma} B,
\label{predict_exp}
\end{align}
in analogy with Eq.~(\ref{gce_exp}), while the minimum divergence in
Eq.~(\ref{predict}) can be expressed as
\begin{align}
\mc D_{\sigma,\mc F}(\mc F^*B,B) &= 
\min_{A \in L_2(\sigma)} D_{\sigma,\mc F}(A,B)
\nonumber\\
&= \norm{B}_{\mc F\sigma}^2 -\norm{\mc F^*B}_{\sigma}^2,
\end{align}
in analogy with Eq.~(\ref{Dmin}).  The divergence given by
Eq.~(\ref{D}) can also be rewritten in a time-symmetric form as
\begin{align}
D_{\sigma,\mc F}(A,B) &= 
\Avg{A-\mc F^*B,A}_\sigma + \Avg{B-\gce{\mc F} A,B}_{\mc F\sigma}.
\end{align}
Together with Eqs.~(\ref{gce_def})--(\ref{Dmin}), these formulas
complete a satisfying time-symmetric theory of quantum inference.

\section{\label{app_ce}Classical conditional expectation}
The classical concepts in this Appendix are all special cases of the
quantum concepts in Sec.~\ref{sec_review} and
Appendix~\ref{app_predict}; see Table~\ref{tab_corr} for the
correspondence.

\begin{table*}[htbp!]
\begin{tabularx}{0.95\textwidth}{|l|X|X|}
  \hline
  Concept & Quantum & Classical \\
  \hline
  Prior state & Density operator $\sigma$ & 
Probability distribution $P_X(x)$ \\
  Channel & CPTP map $\mc F$ & $\sum_x P_{Y|X}(y|x) (\cdot)$ \\
  Unweighted inner product & Hilbert-Schmidt
  $\avg{A,B}_{\HS} \equiv \trace A^\dagger B$ &
  $\avg{a,b} \equiv \sum_\lambda [a(\lambda)]^* b(\lambda)$\\
  Weighted inner product &
  $\avg{A,B}_\rho \equiv \trace A^\dagger \mc E_\rho B$
  & $\avg{a,b}_{P}  \equiv \sum_\lambda [a(\lambda)]^* b(\lambda) P(\lambda)$\\
  Weighted norm & $\norm{A}_\rho \equiv \sqrt{\avg{A,A}_\rho}$ &
  $\norm{a}_P \equiv \sqrt{\avg{a,a}_P}$ \\
  Divergence between variables & $D_{\sigma,\mc F}(A,B)$
  (Definition~\ref{def_div}) &
  $\sum_{x,y}|a(x)-b(y)|^2 P_{XY}(x,y)$\\
  Optimal prediction & $\mc F^*$ (Hilbert-Schmidt adjoint of $\mc F$) &  
$\sum_y (\cdot) P_{Y|X}(y|x)$ \\
  Optimal retrodiction & $\mc F_*$ (Definition~\ref{def_gce}) &
  $\sum_x (\cdot) P_{X|Y}(x|y)$\\
\hline
\end{tabularx}

\caption{\label{tab_corr}Concepts in quantum inference and their
  classical versions.}
\end{table*}

Let $X$ and $Y$ be classical random variables with countable sample
spaces $\mc X$ and $\mc Y$, respectively.  Suppose that $X$ is
generated at an earlier time than $Y$. Let $P_X(x)$ be the probability
distribution of $X$ and $P_{Y|X}(y|x)$ be the probability distribution
of $Y$ conditioned on $X = x$. The unconditional distribution of $Y$
is given by
\begin{align}
  P_Y(y) &= \sum_x P_{Y|X}(y|x) P_X(x),
\label{markov_map}
\end{align}
which can be expressed as $P_Y = \mc F P_X$. The divergence between
two complex random variables $a(X)$ and $b(Y)$ can be defined as
\begin{align}
D_{P_X,\mc F}(a,b) &\equiv \sum_{x,y} \abs{a(x)-b(y)}^2 P_{Y|X}(y|x) P_X(x)
\label{Dc}\\
&= \norm{a}^2_{P_X} -  2 \Re \Avg{\mc F^* b,a}_{P_X} + \norm{b}^2_{P_Y},
\label{Dc2}
\end{align}
where the inner product and the norm are defined in
Table~\ref{tab_corr} and $\mc F^*$ is the adjoint of $\mc F$ with
respect to the unweighted inner product in Table~\ref{tab_corr}, such
that
\begin{align}
(\mc F^* b)(x) &= \sum_y b(y) P_{Y|X}(y|x).
\label{Fac}
\end{align}
In other words, $\mc F^*b$ is the conditional expectation of $b(Y)$
given $X$.

Let $L_2(P)$ be the Hilbert space of random variables in terms of the
inner product $\avg{\cdot,\cdot}_P$ defined in Table~\ref{tab_corr}.
The conditional expectation $\mc F^*b$ is the optimal predictor of
$b(Y)$ as a function of $X$, viz.,
\begin{align}
\mc F^* b &= \argmin_{a \in L_2(P_X)} D_{P_X,\mc F}(a,b).
\label{predict_c}
\end{align}
Similarly, the conditional expectation of $a(X)$ given $Y$ is the
optimal retrodictor of $a(X)$ as a function of $Y$, viz.,
\begin{align}
  \mc F_* a &= \argmin_{b \in L_2(P_Y)} D_{P_X,\mc F}(a,b),
  \\
  (\mc F_* a)(y) &= \sum_x a(x) P_{X|Y}(x|y),
  \\
  P_{X|Y}(x|y) &= \frac{P_{Y|X}(y|x) P_X(x)}{P_Y(y)}.
\label{bayes}
\end{align}
Equation~(\ref{bayes}) is, of course, the Bayes theorem. It is
straightforward to show that the formulas presented thus far are
special cases of the quantum formulas in Sec.~\ref{sec_review} and
Appendix~\ref{app_predict}; this is done by assuming the diagonal
forms
\begin{align}
\sigma &= \sum_x P_X(x)\ket{x}\bra{x},
\\
A &= \sum_x a(x)\ket{x}\bra{x},
\\
\mc F\sigma &= \sum_{y,x}P_{Y|X}(y|x) \bra{x}\sigma \ket{x}\ket{y} \bra{y},
\\
B &= \sum_y b(y)\ket{y}\bra{y},
\end{align}
where $\{\ket{x}:x \in \mc X\}$ is an orthonormal basis of $\mc H_1$
and $\{\ket{y}: y \in \mc Y\}$ is an orthonormal basis of $\mc H_2$.

Since the joint distribution $P_{XY}(x,y) = P_{Y|X}(y|x) P_X(x)$ is
another probability distribution, Eq.~(\ref{Dc}) can be written as the
squared norm
\begin{align}
D_{P_X,\mc F}(a,b) &= \norm{a-b}^2_{P_{XY}},
\end{align}
and since $L_2(P_X)$ and $L_2(P_Y)$ are subspaces of $L_2(P_{XY})$,
the Hilbert projection theorem implies that the conditional
expectation is a projection, viz.,
\begin{align}
\mc F^* b &= \argmin_{a \in L_2(P_X)} \norm{a-b}_{P_{XY}}^2 = \Pi[b|L_2(P_X)],
\\
\mc F_* a &= \argmin_{b \in L_2(P_Y)} \norm{a-b}_{P_{XY}}^2 = \Pi[a|L_2(P_Y)],
\end{align}
where $\Pi[u|V]$ is the projection of a Hilbert-space element $u$ into
a subspace $V$. In the quantum case, $\mc F^*$ and $\mc F_*$ are also
projections into appropriate Hilbert spaces if the equality condition
in Eq.~(\ref{D_dist}) holds.

Because $X$ is assumed to be generated earlier than $Y$ and
experiments follow an arrow of time, it is more straightforward to
obtain $P_{Y|X}$ by experiments than $P_{X|Y}$. This is the reason why
one commonly assumes that $P_{Y|X}$ rather than $P_{X|Y}$ is known in
an inference problem, and the prediction formula in terms of $P_{Y|X}$
looks simpler than the retrodiction formula in terms of the Bayes
theorem. Probability theory in itself is agnostic to the arrow of
time, and if $P_{X|Y}$ is given instead, the retrodiction formula
would look simpler and the prediction formula would require the Bayes
theorem. The same arrow of time is commonly assumed in quantum theory,
where the $\mc F$ map models the evolution of a quantum system forward
in time. It is therefore unsurprising that the prediction map
$\mc F^*$ in terms of $\mc F$ also looks simpler than the retrodiction
map $\mc F_*$ in terms of Eq.~(\ref{gce_exp}), which is a
generalization of the Bayes theorem. If the map $(\mc F_*)^*$ is given
instead, then the retrodiction map $\mc F_*$ is simply its
Hilbert-Schmidt adjoint, while the prediction map $\mc F^*$ is given
by the complicated Eq.~(\ref{predict_exp}).

\section{\label{app_gauss}GCE for Gaussian systems}
I first briefly review the theory of quantum Gaussian systems,
following Chap.~12 in Ref.~\cite{holevo19}. Let $\mc H_1$ be the
Hilbert space for $s$ bosonic modes.  On $\mc H_1$, define the
canonical observables as
\begin{align}
Q &\equiv \mqty(q_1& p_1 & \dots & q_s & p_s)^\top,
&
[q_j,p_k] &= i \delta_{jk},
\end{align}
and the Weyl operator as
\begin{align}
W(z) &\equiv \exp(iQ^\top z),
\\
z &\equiv \mqty(x_1 & y_1 & \dots & x_s & y_s)^\top \in \mathbb R^{2s},
\end{align}
where $\top$ denotes the transpose. If $\sigma$ is a Gaussian state,
its characteristic function can be expressed as
\begin{align}
\phi(z) &\equiv \trace \sigma W(z)
= \exp\qty(im^\top z - \frac{1}{2} z^\top \Sigma z),
\label{phi_gauss}
\end{align}
where $m \in \mathbb R^{2s}$ is the mean vector and
$\Sigma \in \mathbb R^{2s\times 2s}$ is the covariance matrix of the
Gaussian state.  $\Sigma$ is symmetric, positive-semidefinite, and
must observe an uncertainty relation that need not
concern us here.

Similar to the preceding definitions, let $\mc H_2$ be the Hilbert
space for $t$ bosonic modes and define $\tilde Q$ and
$\tilde W(\zeta)$ as the canonical observables and the Weyl operator
on $\mc H_2$, respectively. If
$\mc F:\mc O(\mc H_1)\to \mc O(\mc H_2)$ is a CPTP map that models a
Gaussian channel, it can be defined by
\begin{align}
\mc F^* \tilde W(\zeta) &= f(\zeta) W(F^\top\zeta) ,
\label{F_gauss}
\\
f(\zeta) &= \exp\qty(i l^\top \zeta - \frac{1}{2} \zeta^\top R \zeta),
\label{F_gauss2}
\end{align}
where $F\in \mathbb R^{2t\times 2s}$ is a transition matrix,
$l \in \mathbb R^{2t}$ is the mean displacement introduced by the
channel, and $R \in \mathbb R^{2s\times 2s}$ is the channel covariance
matrix.  $F$ and $R$ must obey a certain matrix inequality for the map
to be CPTP, but again the inequality need not concern us here. With
the Gaussian input state and the Gaussian channel, the output state
remains Gaussian and its characteristic function is given by
\begin{align}
\tilde\phi(\zeta) &\equiv \trace (\mc F\sigma) \tilde W(\zeta)
= f(\zeta)\phi(F^\top\zeta) 
\label{phi_tilde}
\\
&= \exp\bk{i \tilde m^\top \zeta - \frac{1}{2} \zeta^\top \tilde\Sigma \zeta},
\label{phi_tilde2}
\\
\tilde m &= F m + l,
\label{m_tilde}
\\
\tilde\Sigma &= F \Sigma F^\top + R.
\label{Sigma_tilde}
\end{align}
An explicit GCE formula can now be presented.
\begin{proposition}
\label{prop_gauss}
Assume a Gaussian state defined by Eq.~(\ref{phi_gauss}), a
Gaussian channel defined by Eqs.~(\ref{F_gauss}) and (\ref{F_gauss2}),
a quadrature operator given by
\begin{align}
A &= u^\top Q, & u &= \mqty(u_1 & \dots & u_{2s})^\top \in \mathbb R^{2s},
\label{A_quad}
\end{align}
and the $\mc E$ map given by the Jordan product in Eq.~(\ref{jordan}).
A solution to the GCE is
\begin{align}
\gce{\mc F} A &= u^\top \Bk{m + K\bk{\tilde Q - Fm- l}},
\label{gce_gauss}
\\
K &\equiv \Sigma F^\top \bk{F\Sigma F^\top + R}^{-1},
\label{kalman_gain}
\end{align}
while the divergence is
\begin{align}
D_{\sigma,\mc F}(A,\gce{\mc F} A)
&= u^\top \bk{\Sigma - KF\Sigma} u.
\label{D_gauss}
\end{align}
\end{proposition}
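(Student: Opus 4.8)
The plan is to verify directly that the operator $B \equiv u^\top[m + K(\tilde Q - Fm - l)]$ in Eq.~(\ref{gce_gauss}) represents the GCE equivalence class, by checking the defining characterization Eq.~(\ref{gce}), namely $\langle c, B\rangle_{\mc F\sigma} = \langle \mc F^* c, A\rangle_\sigma$ for all $c \in L_2(\mc F\sigma)$; the uniqueness part of Definition~\ref{def_gce} (via Riesz representation) then finishes the identification. Since $B$ is affine in the quadratures $\tilde Q$ on $\mc H_2$ and the Weyl operators $\{\tilde W(\zeta)\}$ form a total set, I would take $c = \tilde W(\zeta)$ and reduce the claim to an identity between two explicit Gaussian expressions.

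The one computational lemma needed is a formula for $\langle W(z), C\rangle_\rho$ when $\rho$ is Gaussian with characteristic function $\psi$ and $C = w^\top Q + c_0$ is affine in the quadratures. Writing $w^\top Q = -i\,\tfrac{d}{ds}W(sw)\big|_{s=0}$, expanding $\tfrac12\bk{\rho\, W(sw) + W(sw)\,\rho}$ with the Weyl braiding relation $W(a)W(b)=e^{-\frac{i}{2}a^\top\Omega b}W(a+b)$ (with $\Omega$ the symplectic form, $[Q_a,Q_b]=i\Omega_{ab}$) and using cyclicity of the trace, the two symplectic phases appear with opposite signs and cancel; differentiating at $s=0$ leaves $\langle W(z), C\rangle_\rho = -i\,w^\top\nabla\psi(-z) + c_0\,\psi(-z)$, which for a Gaussian $\psi$ equals $\psi(-z)\,[\,w^\top m_\rho + c_0 - i\,w^\top\Sigma_\rho z\,]$ in terms of the mean $m_\rho$ and covariance $\Sigma_\rho$ of $\rho$. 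This exact cancellation, special to the Jordan-product $\mc E$, is precisely what makes the Gaussian GCE mimic the classical linear-Gaussian estimator.

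With this in hand I would evaluate both sides. For the right side, $\mc F^*\tilde W(\zeta)=f(\zeta)W(F^\top\zeta)$ by Eq.~(\ref{F_gauss}), and since $\overline{f(\zeta)}\,\phi(-F^\top\zeta)=\tilde\phi(-\zeta)$ by Eqs.~(\ref{phi_tilde})--(\ref{Sigma_tilde}), the lemma gives $\langle \mc F^*\tilde W(\zeta), A\rangle_\sigma = \tilde\phi(-\zeta)\,[\,u^\top m - i\,u^\top\Sigma F^\top\zeta\,]$. For the left side, apply the lemma to $\rho = \mc F\sigma$, whose mean is $Fm+l$ and covariance $\tilde\Sigma=F\Sigma F^\top+R$, with $w^\top = u^\top K$ and $c_0 = u^\top[m-K(Fm+l)]$; the constant piece collapses to $u^\top m$ and the linear piece is $u^\top K\tilde\Sigma\,\zeta$. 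The two sides then coincide by the single algebraic identity $K\tilde\Sigma = \Sigma F^\top$, immediate from Eq.~(\ref{kalman_gain}). This proves Eq.~(\ref{gce_gauss}).

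For the divergence I would invoke Eq.~(\ref{Dmin}): $D_{\sigma,\mc F}(A,\gce{\mc F}A)=\norm{A}_\sigma^2-\norm{\gce{\mc F}A}_{\mc F\sigma}^2$. Each squared norm is the second moment of an affine quadrature observable in the relevant Gaussian state, $\norm{A}_\sigma^2=(u^\top m)^2+u^\top\Sigma u$ and $\norm{\gce{\mc F}A}_{\mc F\sigma}^2=(u^\top m)^2+u^\top(K\tilde\Sigma K^\top)u$; since $K\tilde\Sigma K^\top = \Sigma F^\top\tilde\Sigma^{-1}F\Sigma = KF\Sigma$ (again from Eq.~(\ref{kalman_gain})), subtracting yields $u^\top(\Sigma-KF\Sigma)u$, i.e., Eq.~(\ref{D_gauss}). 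The algebra is short; the real obstacle is the infinite-dimensional rigor that the excerpt explicitly sets aside—$\tilde W(\zeta)$ is unitary rather than trace class, the quadratures are unbounded elements of the $L_2$ spaces, and one must justify that testing Eq.~(\ref{gce}) against the Weyl operators (or a dense span of their smooth combinations) genuinely pins down the GCE class. Granting the excerpt's standing convention of deferring such rigor for the Gaussian example, the argument goes through as sketched.
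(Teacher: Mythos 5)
Your proposal is correct and takes essentially the same route as the paper's proof: both reduce the defining relation Eq.~(\ref{gce})/(\ref{gce_exp}) to an identity between Gaussian characteristic functions by testing against the Weyl operators $\tilde W(\zeta)$, rely on the same Jordan-product derivative formula $\trace(\mc E_\sigma A)W(z) = -i u^\top\nabla\phi(z)$ (which you rederive from the Weyl braiding relation rather than citing Eq.~(5.4.43) of Ref.~\cite{holevo11}), use the same Kalman-gain identity $K\tilde\Sigma = \Sigma F^\top$, and obtain Eq.~(\ref{D_gauss}) from the norm difference via $K\tilde\Sigma K^\top = KF\Sigma$. The only cosmetic difference is that you verify the stated estimator directly, whereas the paper posits the affine ansatz $v^\top\tilde Q + c$ and solves for $v$ and $c$.
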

See Appendix~\ref{app_proofs} for the proof.

It is interesting to note that Eqs.~(\ref{gce_gauss})--(\ref{D_gauss})
are identical to the formulas for the classical conditional
expectation $\expect(A|Y)$ and its mean-square error
$\expect\{ [\expect(A|Y) - A]^2\}$ when $A = u^\top X$, $Y = F X + Z$,
and $X \sim N(m,\Sigma)$ and $Z \sim N(l,R)$ are independent normal
random variables \cite{anderson}. Here, the canonical observables $Q$
and $\tilde Q$ play the roles of $X$ and $Y$, respectively. When
$\mc F$ is a measurement map, similar formulas have been derived in
Refs.~\cite{yanagisawa,smooth,smooth_pra1,tsang22b} and may be useful
for studying waveform estimation \cite{twc} beyond the stationary
assumption.

\section{\label{app_vn}von Neumann measurement}
Let $B$ be a self-adjoint operator on a Hilbert space $\mc H$.  Then
$B$ admits the spectral representation
\cite{reed_simon,holevo11}
\begin{align}
B &= \int b \Pi(db)
\label{pvm}
\end{align}
in terms of a projection-valued measure $\Pi$ on the Borel space
$(\mathbb R,\Sigma_{\mathbb R})$. A projection-valued measure, also
called an orthogonal resolution of identity, is a POVM on
$(\mc Y,\Sigma_{\mc Y})$ with the additional properties that $\Pi(C)$
for any $C \in \Sigma_{\mc Y}$ is a projection operator and
$\Pi(C_1)\Pi(C_2) = 0$ if $C_1 \cap C_2 = \varnothing$
\cite{holevo11}.  The projection-valued measure in Eq.~(\ref{pvm}) is
unique to $B$ in the sense that no other projection-valued measure can
satisfy Eq.~(\ref{pvm}) for a given $B$.  A von Neumann measurement of
$B$ is defined as a measurement with the projection-valued measure
$\Pi$ of $B$ as the POVM.

If $\mc H$ is finite-dimensional, $B$ has a finite number of
eigenvalues, and the spectral representation is much simpler and can
be written as
\begin{align}
B &= \sum_{b \in \Lambda_B} b \Pi(b),
\label{pvm_finite}
\end{align}
where $\Lambda_B \subset \mathbb R$ is the set of $B$'s eigenvalues
and $\{\Pi(b): b \in \Lambda_B\}$ are projection operators that obey
$\Pi(b) \Pi(c) = \delta_{bc}\Pi(b)$ and $\sum_b \Pi(b) = I$.  The set
of outcomes can then be restricted to $\Lambda_B$.

Consider now a converse situation, where the POVM is given by a
projection-valued measure $\Pi'$ on $(\mc Y,\Sigma_{\mc Y})$ and each
outcome $y \in \mc Y$ is processed by a function
$b:\mc Y \to \mathbb R$ to produce a final outcome $b(y)$.  The
measurement $\Pi'$ together with the data processing by $b(y)$ can be
interpreted as a von Neumann measurement of
\begin{align}
B &= \int b(y) \Pi'(dy).
\label{pvm_b}
\end{align}
This is because Eq.~(\ref{pvm_b}) can be expressed in the spectral
representation given by Eq.~(\ref{pvm}), with
\begin{align}
\Pi(C) &= \Pi'[b^{-1}(C)],
&
b^{-1}(C) &\equiv \BK{y: b(y) \in C},
\end{align}
and the probability $\trace \Pi'[b^{-1}(C)] \rho$ of each event
$C \in \Sigma_{\mathbb R}$ generated by $(\Pi',b)$ coincides with the
probability $\trace \Pi(C)\rho$ from a von Neumann measurement of
$B$. If $\mc Y$ is countable, then $B = \sum_y b(y) \Pi'(y)$ can be
expressed as Eq.~(\ref{pvm_finite}), with
\begin{align}
\Lambda_B &= \BK{b(y): y \in \mc Y},
&
\Pi(c) &= \sum_{y: b(y) = c} \Pi'(y).
\end{align}

\section{\label{app_dp}The dynamic-programming algorithm}
The dynamic-programming algorithm is based on the so-called principle
of optimality for Markov decision processes. For the process given by
Eqs.~(\ref{markov}) and (\ref{Dsum}), the principle states that, if
$(\tilde{\mc F}_1,\dots,\tilde{\mc F}_N)$ are the optimal ``controls''
that minimize the total ``cost'' given by Eq.~(\ref{Dsum}), then
$(\tilde{\mc F}_k,\dots,\tilde{\mc F}_N)$ must also be the optimal
controls that minimize the ``cost-to-go'' at time $k$, defined as
\begin{align}
J_k &\equiv \sum_{n=k}^N g(s_{n},\mc F_n).
\end{align}
Based on this principle, the algorithm starts by choosing the final
control $\mc F_N$ that minimizes the cost-to-go
$J_N = g(s_{N},\mc F_N)$. Let this minimum be
\begin{align}
\tilde J_{N}(s_{N}) &\equiv \min_{\mc F_N} J_N
= \min_{\mc F_N} g(s_{N},\mc F_N).
\end{align}
The optimal $\mc F_N$ obtained this way is a function
$\tilde{\mc F}_N(s_N)$ of the state $s_N$. Next, write
$\tilde J_{N}(s_{N}) = \tilde J_{N}[f(s_{N-1},\mc F_{N-1})]$ using
Eq.~(\ref{markov}) and find the control $\mc F_{N-1}$ that minimizes
the cost-to-go at time $k = N-1$, given by
\begin{align}
\tilde J_{N-1}(s_{N-1})
&\equiv 
\min_{(\mc F_{N-1},\mc F_N)} J_{N-1}
\\
&= \min_{\mc F_{N-1}} 
\Big\{g(s_{N-1},\mc F_{N-1}) + 
\nonumber\\&\quad 
\tilde J_{N}\Bk{f(s_{N-1},\mc F_{N-1})}\Big\}.
\end{align}
The optimal $\mc F_{N-1}$ is again a function
$\tilde{\mc F}_{N-1}(s_{N-1})$ of the state $s_{N-1}$.  This procedure
continues with
\begin{align}
\tilde J_{k}(s_{k}) &\equiv 
\min_{(\mc F_k,\dots,\mc F_N)} J_k
\\
&= \min_{\mc F_k} 
\left\{g(s_{k},\mc F_k) + 
\tilde J_{k+1}\Bk{f(s_{k},\mc F_k)}\right\},
\\
\tilde{\mc F}_k(s_k) &= \argmin_{\mc F_k} \BK{\dots}
\end{align}
for $k = N-2,N-3,\dots$ until $k = 1$, when the complete optimal
control law $(\tilde F_1(s_1),\dots,\tilde F_N(s_N))$ has been found
and $\tilde J_1(s_1)$ is the minimum total cost.

\section{\label{app_rb}Classical Rao-Blackwell theorem}
To derive the classical Rao-Blackwell theorem from the quantum theorem
given by Theorem~\ref{thm_qrbt}, assume the diagonal forms
\begin{align}
\rho_x &= \sum_y P_{Y|X}(y|x) \ket{y}\bra{y},
\\
B &= \sum_y b(y) \ket{y}\bra{y},
\\
\mc G \rho_x &= \sum_{z,y} P_{Z|Y}(z|y) \bra{y}\rho_x\ket{y}\ket{z}\bra{z}
\\
&= \sum_{z} P_{Z|X}(z|x) \ket{z}\bra{z},
\\
P_{Z|X}(z|x) &= \sum_y P_{Z|Y}(z|y) P_{Y|X}(y|x),
\end{align}
where $\{\ket{y}:y \in \mc Y\}$ is an orthonormal basis of $\mc H_2$,
$\{\ket{z}: z \in \mc Z\}$ is an orthonormal basis of $\mc H_3$, $X$,
$Y$, and $Z$ are classical variables, and $P_{O|O'}$ is the
probability distribution of $O$ conditioned on a value of $O'$.  For
the estimation problem, $X$ is the hidden parameter fixed at $X = x$,
$Y$ is the observation, and $Z$ is a statistic generated from $Y$
without knowing $X$, such that $P_{Z|Y,X}(z|y,x) = P_{Z|Y}(z|y)$.  The
parameter of interest $a(x)$ is assumed to be real for simplicity. The
local error given by Eq.~(\ref{MSE}) becomes
\begin{align}
\MSE_x &= \sum_y \Bk{b(y)-a(x)}^2 P_{Y|X}(y|x),
\label{MSE_c}
\end{align}
which agrees with the classical definition.  Equation~(\ref{gce_exp}) 
for the GCE $\gce{\mc G}B$ becomes
\begin{align}
\mc E_{\mc G\rho_x}\gce{\mc G}B &= \mc G \mc E_{\rho_x}B,
\end{align}
a solution of which is
\begin{align}
\gce{\mc G} B &= \sum_z c(z,x) \ket{z}\bra{z},
\\
c(z,x) &= \sum_{y} P_{Y|Z,X}(y|z,x) b(y),
\label{rb_c}
\\
P_{Y|Z,X}(y|z,x) &= \frac{P_{Z|Y}(z|y) P_{Y|X}(y|x)}{P_{Z|X}(z|x)}.
\label{PYZX}
\end{align}
$c(z,x)$ is the expectation of $b(Y)$ conditioned on $Z = z$ and
$X = x$. If $c(z,x) = c(z)$ does not depend on $x$, then it is a valid
estimator as a function of the statistic $Z$, and its error as per
Eq.~(\ref{MSE2}) becomes
\begin{align}
\MSE_x' &= \sum_z \Bk{c(z) - a(x)}^2 P_{Z|X}(z|x).
\label{MSE2_c}
\end{align}
It follows from Theorem~\ref{thm_qrbt}
and Eq.~(\ref{Dmin}) that the difference between Eqs.~(\ref{MSE_c})
and (\ref{MSE2_c}) is
\begin{align}
&\quad \MSE_x - \MSE_x' 
\nonumber\\
&= \sum_y [b(y)]^2 P_{Y|X}(y|x) - \sum_z [c(z)]^2 P_{Z|X}(z|x),
\label{crbt}
\end{align}
which is nonnegative and coincides with a form of the Rao-Blackwell
theorem (see, for example, Problem~1.7.9 on p.~73 in
Ref.~\cite{lehmann98}).

$Z$ is called a sufficient statistic for the estimation problem if the
distribution $P_{Y|Z,X}(y|z,x)$ given by Eq.~(\ref{PYZX}) does not
depend on $x$ \cite{lehmann98}. Then Eq.~(\ref{rb_c}) also does not
depend on $x$ for any $b(y)$ and is always a valid estimator.

\section{\label{app_compare}Comparison of the
Bayesian and frequentist settings}
Both the monotonicity of the Bayesian error given by
Corollary~\ref{cor_mono} and the error reduction due to the quantum
Rao-Blackwell theorem in Theorem~\ref{thm_qrbt} are unsurprising
results given their classical origins, but they may be confusing
in that they seem to say opposite things about the effect of a
channel. I offer a clarification here.

First, note that the Bayesian setting concerns the ``global'' error
$D_{\sigma,\mc F}(A,B)$ only, whereas the frequentist setting concerns
the local error $\MSE_x$ as a function of the unknown parameter $x$.
The global error is a cruder measure because it is only an average of
the local error given by
\begin{align}
D_{\sigma,\mc F}(A,B) &= \sum_x P_X(x) \MSE_x,
\label{bayes_freq}
\end{align}
assuming Eqs.~(\ref{MSE_bayes}) and (\ref{MSE}).

Second, the Bayesian results in Sec.~\ref{sec_bayes}, and
Corollary~\ref{cor_mono} in particular, concern only the estimators
$\gce{\mc F} A$ and $\gce{\mc G} \gce{\mc F} A$ that are
optimal with respect to the global error. Theorem~\ref{thm_qrbt} in
the frequentist setting, on the other hand, is about the local errors
of an estimator $B$ and its Rao-Blackwellization $\gce{\mc G}B$,
with no special assumptions about the original estimator $B$. The
theorem also says nothing about whether the Rao-Blackwell estimator is
optimal in the global sense, only that it is at least as good as the
original.

Third, note that the Personick estimators considered in the Bayesian
setting do not depend on the unknown parameter and are naturally
realizable, and Corollary~\ref{cor_mono} applies to any channel. In
the frequentist setting, the channel and the Rao-Blackwell estimator
must be parameter-independent for the measurement to be realizable, so
there is a stringent requirement on the $\mc G$ channel for the
improvement to be realizable, let alone significant.

In practice, the classical Rao-Blackwellization is typically used to
improve an initial estimator design that is not expected to be optimal
or even good in any sense; the derivation of the U-statistics
\cite{vaart} is a representative example. If the initial estimator is
already optimal in the Bayesian sense, then the Rao-Blackwellization
cannot offer any improvement almost everywhere with respect to the
prior $P_X$.

\begin{corollary}
\label{cor_bayes_rb}
Assume the Bayesian problem specified by Eqs.~(\ref{bayes1}) and
(\ref{bayes2}) and let $B = \gce{\mc F}^{\sigma} A$ be the Personick
estimator.  If another CPTP map $\mc G$ is applied and both $\mc G$
and the Rao-Blackwell estimator $\gce{\mc G}^{\rho_x}B$ in
Theorem~\ref{thm_qrbt} do not depend on $x$, then
$\gce{\mc G}^{\rho_x}B$ is a solution to the final Personick
estimator $\gce{\mc G}^{\mc F\sigma} B$.  Moreover, both the
Bayesian error and the local error remain the same after the $\mc G$
channel, in the sense of
\begin{align}
  D_{\sigma,\mc F}(A,\gce{\mc F}^{\sigma} A)
  &= \sum_x P_X(x) \MSE_x = \sum_x P_X(x) \MSE_x' 
    \nonumber\\
  &= D_{\sigma,\mc G\mc F}(A,\gce{\mc G}^{\mc F\sigma} \gce{\mc F}^{\sigma} A),
\label{bayes_suff}
\\
\MSE_x &= \MSE_x' \quad \textrm{almost everywhere $P_X$}.
\label{mse_support}
\end{align}
\end{corollary}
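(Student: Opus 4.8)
The plan is to first show that the Rao-Blackwell estimator $\gce{\mc G}^{\rho_x}B$ is a representative of the equivalence class $\gce{\mc G}^{\mc F\sigma}B$, and then to obtain the error identities by squeezing Theorem~\ref{thm_qrbt} against the monotonicity of Corollary~\ref{cor_mono}. For the first step I would start from the chain rule (Theorem~\ref{thm_chain}), which identifies the final Personick estimator as $\gce{(\mc G\mc F)}^\sigma A = \gce{\mc G}^{\mc F\sigma}\gce{\mc F}^\sigma A = \gce{\mc G}^{\mc F\sigma}B$, so it suffices to verify the implicit GCE relation Eq.~(\ref{gce_exp}) for the prior $\mc F\sigma$, namely $\mc E_{\mc G(\mc F\sigma)}(\gce{\mc G}^{\rho_x}B) = \mc G\,\mc E_{\mc F\sigma}B$. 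Writing $\mc F\sigma = \sum_x P_X(x)\rho_x$ from Eq.~(\ref{bayes2}) and using that $\mc G$ is linear and that the Jordan-product map $\mc E_\rho$ fixed throughout this section is linear in $\rho$, the right-hand side becomes $\sum_x P_X(x)\,\mc G\,\mc E_{\rho_x}B = \sum_x P_X(x)\,\mc E_{\mc G\rho_x}(\gce{\mc G}^{\rho_x}B)$, where the second equality is Eq.~(\ref{gce_exp}) applied with prior $\rho_x$; since $\gce{\mc G}^{\rho_x}B$ is assumed independent of $x$, it can be pulled back out of the sum through the linearity of $\mc E$ to give $\mc E_{\mc G(\mc F\sigma)}(\gce{\mc G}^{\rho_x}B)$, which matches the left-hand side. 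A minor hygiene check is that $B\in L_2(\rho_x)$ for each $x$ in the support of $P_X$, which follows from $\norm{B}_{\mc F\sigma}^2 = \sum_x P_X(x)\norm{B}_{\rho_x}^2$.

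Given that, I would handle Eq.~(\ref{bayes_suff}) as follows. Its outermost two equalities are just the Bayesian-to-local error decomposition $D_{\sigma,\mc F}(A,B)=\sum_x P_X(x)\MSE_x$ of Eqs.~(\ref{MSE_bayes}) and (\ref{MSE_bayes2}) (see also Eq.~(\ref{bayes_freq})), applied once with the estimator $B=\gce{\mc F}^\sigma A$ and once with the representative $\gce{\mc G}^{\rho_x}B$ of $\gce{\mc G}^{\mc F\sigma}\gce{\mc F}^\sigma A$ identified in the first step, so that its local errors $\norm{\gce{\mc G}^{\rho_x}B-a(x)I_3}_{\mc G\rho_x}^2$ coincide with the $\MSE_x'$ of Theorem~\ref{thm_qrbt}. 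The central equality $\sum_x P_X(x)\MSE_x=\sum_x P_X(x)\MSE_x'$ is then a sandwich: Theorem~\ref{thm_qrbt} gives $\MSE_x-\MSE_x'=D_{\rho_x,\mc G}(B,\gce{\mc G}^{\rho_x}B)\ge 0$ for every $x$, hence $\sum_x P_X(x)\MSE_x\ge\sum_x P_X(x)\MSE_x'$, while Corollary~\ref{cor_mono} (monotonicity of the Bayesian error under the extra channel $\mc G$) gives the reverse inequality $\sum_x P_X(x)\MSE_x'=D_{\sigma,\mc G\mc F}(A,\gce{(\mc G\mc F)}^\sigma A)\ge D_{\sigma,\mc F}(A,\gce{\mc F}^\sigma A)=\sum_x P_X(x)\MSE_x$. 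Finally Eq.~(\ref{mse_support}) is immediate once the two averages are equal: the quantities $\MSE_x-\MSE_x'=D_{\rho_x,\mc G}(B,\gce{\mc G}^{\rho_x}B)$ are nonnegative pointwise and have zero $P_X$-average, so they vanish $P_X$-almost everywhere.

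I expect the first step to be the only real obstacle: checking that a family of operators $\gce{\mc G}^{\rho_x}B$ that happens not to depend on $x$ actually solves the GCE equation for the mixed prior $\mc F\sigma$. The argument hinges on the linearity of both $\mc G$ and the $\mc E$ map in their respective slots, so it is specific to the Jordan product (or to $\mc E_\rho A=\rho A$) and would break for, e.g., the root product of Eq.~(\ref{root}); it is worth spelling out this dependence. Everything after that is bookkeeping with Theorem~\ref{thm_qrbt}, Corollary~\ref{cor_mono}, and the error decomposition already in hand.
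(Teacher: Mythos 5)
Your proposal is correct and follows essentially the same route as the paper: the paper likewise averages the $x$-wise GCE relation over the prior $P_X$ (using the $x$-independence of $\mc G$ and $\gce{\mc G}^{\rho_x}B$ and the bilinearity of the Jordan product, only phrased via the weak form Eq.~(\ref{gce}) with a test operator $c$ rather than the explicit form Eq.~(\ref{gce_exp})), then obtains Eq.~(\ref{bayes_suff}) by combining Theorem~\ref{thm_qrbt} with Corollary~\ref{cor_mono}, and Eq.~(\ref{mse_support}) from the pointwise nonnegativity of $\MSE_x-\MSE_x'$ with zero $P_X$-average (stated there as a proof by contradiction). No gaps; your side remark that the averaging step relies on the linearity of $\mc E_\rho$ in $\rho$ matches the paper's own appeal to the Jordan product's bilinearity.
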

See Appendix~\ref{app_proofs} for the proof.

\section{\label{app_sinha}Comparison with some prior works}
Although Sinha's formalism in Ref.~\cite{sinha22} is applicable to
infinite dimensions and any convex loss function, he makes heavy
assumptions about the commutativity of all the involved
operators. To explain, consider his definition of a
  sufficient-statistic observable $S \in \mc O(\mc H_2)$, which is key
  to his quantum Rao-Blackwell theorems.  Assuming that the spectrum
  of $S$, denoted by $\Lambda_S \subseteq \mathbb R$, is countable for
  simplicity, $S$ can be expressed in the spectral representation
\begin{align}
S &= \sum_{s \in \Lambda_S} s \Pi_s,
\label{S}
\end{align}
where $\Pi$ is a projection-valued measure. Consider any observable
$B$ that commutes with $S$. By the spectral theorem (see Chap.~VII in
Ref.~\cite{reed_simon}), $B$ commutes with all $\{\Pi_s\}$, meaning
that $B$ can be expressed as
\begin{align}
B &= \sum_s \Pi_s B \Pi_s = \bigoplus_s B_s,
\label{Bt}
\end{align}
where $B_s \in \mc O(\Pi_s \mc H_2)$ is equal to $\Pi_s B \Pi_s$ on
the subspace $\Pi_s \mc H_2$. Given a family of density operators
$\{\rho_x:x \in \mc X\}$, Sinha defines a sufficient $S$ by the
existence of a positive function
$\varphi:\Lambda_S\times \mc X \to \mathbb R_+$ and a
parameter-independent positive-semidefinite operator
$\sigma_s \in \mc O(\Pi_s \mc H_2)$ such that, for any $B$ that
commutes with $S$,
\begin{align}
\trace \rho_x B &= \sum_s \varphi(s,x) 
\trace \sigma_s B_s.
\label{sinha}
\end{align}
Since Sinha considers only POVMs
that commute with $S$ in his Rao-Blackwell theorems (see Theorems~4.4
and 5.2 in Ref.~\cite{sinha22}), Eq.~(\ref{sinha}) implies that each
$\rho_x$ can be factorized as
\begin{align}
\rho_x &= \bigoplus_s \varphi(s,x) \sigma_s,
\label{sinha_family}
\end{align}
meaning that $\{\rho_x\}$ all commute with $S$ (see Remark~3.3 in
Ref.~\cite{sinha22}). In fact, Eq.~(\ref{sinha_family}) also implies
that $\{\rho_x\}$ commute with one another. Such assumptions are
extremely restrictive, as noncommutativity is precisely what
distinguishes quantum probability theory from the classical version,
and Sinha's restrictions to it are simply unprecedented in quantum
metrology \cite{helstrom,holevo11,hayashi}. This work, on the other
hand, does not impose any commutativity requirements on the
operators. The key advance here is the use of the GCE formalism in
Secs.~\ref{sec_review} and \ref{sec_prop} that generalizes classical
probability theory from the ground level for noncommuting operators,
so that Sinha's commutativity assumptions are never necessary.

Sinha also avoids any explicit use of quantum conditional expectations
(see Remark~5.3(i) in Ref.~\cite{sinha22}) or even CPTP maps. The use
of a GCE in this work, on the other hand, makes Theorem~\ref{thm_qrbt}
a more natural generalization of the classical theorem. As the
conditional expectation is a standard and crucial step in the
classical Rao-Blackwellization, the GCE can be similarly instrumental
for the quantum case, as demonstrated by the corollaries and examples
in this paper.

It is straightforward to show that a measurement of Sinha's
sufficient-statistic observable is a special case of a sufficient
channel.
\begin{proposition}
\label{prop_sinha}
Given Eqs.~(\ref{Bt}) and (\ref{sinha_family})
and assuming the measurement map
\begin{align}
\mc G \rho_x &= \sum_s (\trace \Pi_s \rho_x \Pi_s) \Pi_s,
\label{map_sinha}
\end{align}
which represents the state of the classical random variable obtained
by a measurement of $S$, a solution to any GCE is
\begin{align}
\gce{\mc G} B &= \sum_s\frac{\trace \sigma_s B_s}{\trace \sigma_s} \Pi_s,
\label{GCE_sinha}
\end{align}
which is independent of $x$.
\end{proposition}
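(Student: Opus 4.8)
The plan is to verify directly that the operator $\gce{\mc G} B = \sum_s c_s\Pi_s$, with the real scalars $c_s \equiv \trace(\sigma_s B_s)/\trace\sigma_s$, satisfies the defining identity Eq.~(\ref{gce_exp}) of the GCE for the prior $\rho_x$; as in the other lemmas of this section it suffices to exhibit one solution, so uniqueness need not be addressed, and once the formula is in hand its independence of $x$ is immediate because $\sigma_s$, $B_s$, and $\Pi_s$ are all $x$-independent. Throughout I would use the block structure furnished by the hypotheses: by Eqs.~(\ref{Bt}) and (\ref{sinha_family}), both $B = \bigoplus_s B_s$ and $\rho_x = \bigoplus_s \varphi(s,x)\sigma_s$ are block-diagonal relative to the fixed resolution of the identity $\{\Pi_s\}$, so in particular $\Pi_s\rho_x\Pi_s = \varphi(s,x)\sigma_s$.

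First I would compute the left-hand side of Eq.~(\ref{gce_exp}), namely $\mc E_{\mc G\rho_x}(\gce{\mc G} B)$. From Eq.~(\ref{map_sinha}) one gets $\mc G\rho_x = \sum_s \varphi(s,x)(\trace\sigma_s)\Pi_s$, which is a linear combination of the orthogonal projections $\Pi_s$ and therefore commutes with $\gce{\mc G} B = \sum_s c_s\Pi_s$; hence Eq.~(\ref{ax1}) gives $\mc E_{\mc G\rho_x}(\gce{\mc G} B) = (\mc G\rho_x)(\gce{\mc G} B) = \sum_s \varphi(s,x)\,\trace(\sigma_s B_s)\,\Pi_s$, the denominator $\trace\sigma_s$ in $c_s$ being cancelled by the $\trace\sigma_s$ coming from $\mc G\rho_x$.

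Then I would compute the right-hand side, $\mc G(\mc E_{\rho_x} B)$. Because $\rho_x$ and $B$ are block-diagonal, the direct-sum compatibility of $\mc E$, Eqs.~(\ref{E_sum})--(\ref{E_project}) (or, for the Jordan product, simply the block structure of $\tfrac12(\rho_x B + B\rho_x)$), gives $\mc E_{\rho_x} B = \bigoplus_s \mc E_{\varphi(s,x)\sigma_s} B_s$. Applying $\mc G$ from Eq.~(\ref{map_sinha}) then yields $\mc G(\mc E_{\rho_x} B) = \sum_s \trace[\mc E_{\varphi(s,x)\sigma_s} B_s]\,\Pi_s$, and the last step is the scalar identity $\trace[\mc E_{\varphi(s,x)\sigma_s} B_s] = \varphi(s,x)\,\trace(\sigma_s B_s)$, which follows from the law of total expectation (Lemma~\ref{lem_tot}, i.e.\ $\trace\mc E_\omega A = \trace\omega A$) together with linearity of $\mc E$ in its state argument; this makes the right-hand side $\sum_s \varphi(s,x)\,\trace(\sigma_s B_s)\,\Pi_s$, matching the left-hand side and completing the verification.

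The computation has no genuine obstacle; the points calling for care are bookkeeping ones: keeping $\sigma_s$ as an unnormalized positive operator supported on $\Pi_s\mc H_2$, allowing the $\Pi_s$ to be higher-rank while regarding the output of $\mc G$ as the classical register carrying the label $s$, and checking that $\sum_s c_s\Pi_s$ is an admissible representative in $L_2(\mc G\rho_x)$. An equivalent, more modular route would be to factor $\mc G$ as the block-dephasing channel of Lemma~\ref{lem_sum} (under which $\rho_x$ is invariant and, since $B$ already commutes with $S$, the GCE of $B$ is $B$ itself) followed by a trace-out-within-each-block channel, recognize the latter block by block as the one-dimensional-survivor case of Lemma~\ref{lem_sep}, and reassemble with the chain rule (Theorem~\ref{thm_chain}) and Eqs.~(\ref{E_sum})--(\ref{E_project}).
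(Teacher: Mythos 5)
Your overall strategy---exhibiting $\sum_s c_s\Pi_s$ with $c_s=\trace(\sigma_s B_s)/\trace\sigma_s$ and verifying the defining relation Eq.~(\ref{gce_exp})---is essentially the paper's own verification (the paper packages it as the weak-value-type formula Eq.~(\ref{GCE_sinha2}) whose numerator and denominator it then evaluates), and your computation of the left-hand side via commutativity and Eq.~(\ref{ax1}) is exactly right. The weak point is your right-hand side: to write $\mc E_{\rho_x}B=\bigoplus_s \mc E_{\varphi(s,x)\sigma_s}B_s$ you invoke Eqs.~(\ref{E_sum})--(\ref{E_project}) together with ``linearity of $\mc E$ in its state argument.'' Those direct-sum properties are \emph{extra} assumptions introduced only in Sec.~\ref{sec_sum} (satisfied by the products (\ref{jordan})--(\ref{root}), but not consequences of the general axioms (\ref{ax1})--(\ref{ax4})), and linearity in the state argument is not an axiom at all---indeed additivity fails for the root product, though the positive-homogeneity you actually need does hold for the listed examples. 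Since the proposition asserts a solution to \emph{any} GCE, your argument as written proves a strictly narrower statement; your alternative ``modular'' route through Lemma~\ref{lem_sum} inherits the same restriction.

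The gap is easily repaired, and the repair is what the paper does: you never need the operator-level block decomposition of $\mc E_{\rho_x}B$, only its diagonal-block traces, and these follow from the self-adjointness of $\mc E_{\rho_x}$ and Eq.~(\ref{ax1}) alone (using that $\rho_x=\bigoplus_s\varphi(s,x)\sigma_s$ commutes with each $\Pi_s$):
\begin{align}
\trace \Pi_s (\mc E_{\rho_x}B)\Pi_s
&= \Avg{\Pi_s,\mc E_{\rho_x}B}_{\HS}
= \Avg{\mc E_{\rho_x}\Pi_s,B}_{\HS}
\nonumber\\
&= \Avg{\rho_x\Pi_s,B}_{\HS}
= \varphi(s,x)\trace\sigma_s B_s.
\end{align}
Substituting this into $\mc G(\mc E_{\rho_x}B)=\sum_s[\trace\Pi_s(\mc E_{\rho_x}B)\Pi_s]\Pi_s$ reproduces your right-hand side $\sum_s\varphi(s,x)\,\trace(\sigma_s B_s)\,\Pi_s$ for any $\mc E$ obeying (\ref{ax1})--(\ref{ax4}); the rest of your verification (the cancellation of $\trace\sigma_s$, the matching with the left-hand side, and the $x$-independence of the result) then goes through unchanged and at the full generality claimed.
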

See Appendix~\ref{app_proofs} for the proof.

Equation~(\ref{GCE_sinha}) is used implicitly in Sinha's Rao-Blackwell
theorems and Theorem~\ref{thm_qrbt} here also applies to it.
Physically, Proposition~\ref{prop_sinha} means that, when the density
operators are given by Eq.~(\ref{sinha_family}) and the original
estimator is given by Eq.~(\ref{Bt}), one can measure the
sufficient-statistic observable $S$ given by Eq.~(\ref{S}), and the
Rao-Blackwell estimator as a function of the outcome $s$ is
$(\trace \sigma_s B_s)/\trace\sigma_s$.

Another relevant prior work is Ref.~\cite{luczak15} by \L{}uczak,
which studies a concept of sufficiency in von Neumann algebra for
minimum-variance unbiased estimation in Sec.~5 of
Ref.~\cite{luczak15}. His Theorem~5.1 states that a subalgebra with a
special property called completeness is sufficient for the estimation
if and only if there exists a constant GCE in terms of the Jordan
product that projects onto the subalgebra. He makes no commutativity
assumptions like Sinha's, but the completeness assumption is
unfortunately rather restrictive, as is well known in classical
statistics \cite{lehmann98} and recognized by \L{}uczak himself
\cite{luczak15}. Even in classical statistics, completeness is
difficult to check, and not many models are known to satisfy it. It is
unclear what quantum models beyond the known classical cases can
satisfy the property.  Theorem~\ref{thm_qrbt} here, on the other hand,
does not require the unbiasedness and completeness assumptions.

Lastly, it is worth mentioning that Refs.~\cite{shmaya05,chefles09}
by Shmaya and Chefles concern a quantum generalization of another
Blackwell theorem, which, to my knowledge, has no relation to the
Rao-Blackwell theorem, apart from Blackwell's name being attached to
both.

\section{\label{app_U}Quantum U-statistics}
The goal here is to compute the GCE given by Eq.~(\ref{haar_perm}) for
an operator in the form of Eq.~(\ref{B_perm}).  A few definitions are
necessary before I can proceed. Let
\begin{align}
\BK{e(u): u \in \mc U}
\end{align}
be an orthonormal basis of $\mc O(\mc H_1)$ and
\begin{align}
\BK{E(\bs u) \equiv e(u_1)\otimes \dots \otimes e(u_n): \bs u \in \mc U^n}
\label{E}
\end{align}
be an orthonormal basis of $\mc O(\mc H_1^{\otimes n})$, where $\bs u$
is a column vector and the orthonormality relations are
\begin{align}
\Avg{e(u),e(v)}_{\HS} &= \delta_{uv},
&
\Avg{E(\bs u),E(\bs v)}_{\HS} &= \delta_{\bs u\bs v}.
\end{align}
For example, one can assume the matrix units
$e(u) = \ket{u'}\bra{u''}$ with $u = (u',u'')$.  Any
$B \in \mc O(\mc H_1^{\otimes n})$ can be expressed as
\begin{align}
B &= \sum_{\bs u} B(\bs u) E(\bs u),
&
B(\bs u) &= \Avg{E(\bs u),B}_{\HS}.
\end{align}
Define the permutation matrix $\hat\pi$ on a column vector as
\begin{align}
\hat\pi_{jk} &\equiv \delta_{j \pi(k)},
&
(\hat\pi\bs u)_j &= u_{\pi^{-1}j}.
\end{align}
Then
\begin{align}
U_\pi E(\bs u)U_\pi^\dagger &= E(\hat\pi\bs u),
\end{align}
and the symmetrization map given by Eq.~(\ref{gce_ultimate}) becomes
\begin{align}
\frac{1}{n!} \sum_\pi U_\pi B U_\pi^\dagger
&= \frac{1}{n!} \sum_\pi \sum_{\bs u} B(\bs u) E(\hat\pi \bs u)
\\
&= \sum_{\bs u} \tilde B(\bs u) E(\bs u),
\\
\tilde B(\bs u) &= \frac{1}{n!} \sum_\pi B(\hat\pi\bs u),
\label{gce_perm_mat}
\end{align}
which boils down to a symmetrization of $B(\bs u)$.  In general, a
symmetric operator on $\mc H_1^{\otimes m}$ is defined by
\begin{align}
U_{\pi_m} C U_{\pi_m}^\dagger &= C,
&
C(\hat\pi_m\bs v) &= C(\bs v)
\quad
\forall \pi_m \in S_m.
\end{align}
Given any operator on $\mc H_1^{\otimes m}$, a symmetric version can
be obtained by applying the symmetrization map.

Define a projection matrix $\Pi_{\bs j}:\mc U^n \to \mc U^{\dim \bs j}$
by
\begin{align}
\Pi_{\bs j}\bs u =  \mqty(u_{j_1}\\ \vdots\\ u_{j_m}),
\label{project_j}
\end{align}
where $\bs j = (j_1,\dots,j_m) \in \mc J_m$ is a vector of indices
with $1 \le m \le n$ and $\mc J_m$ is the set of $m$-permutations of
$\{1,\dots,n\}$ (ordered sampling without replacement). Define also
$\{\bs j\}$ for a $\bs j \in \mc J_m$ as the vector of indices sorted
in ascending order and define the set of all such vectors as
\begin{align}
\mc K_m &\equiv \BK{\bs k \in \mc J_m: \bs k = \{\bs k\}},
\label{combs}
\end{align}
which is equivalent to the set of $m$-combinations of $\{1,\dots,n\}$
(unordered sampling without replacement).

A formula for the symmetrization can now be presented.
\begin{proposition}
\label{prop_U}
Suppose that $B \in \mc O(\mc H_1^{\otimes n})$ can be decomposed as
\begin{align}
B &= C \otimes C',
\label{B_decomp}
\end{align}
where $C \in \mc O(\mc H_1^{\otimes m})$ applies to the first $m$
Hilbert subspaces in $\mc H_1^{\otimes n}$ and
$C' \in \mc O(\mc H_1^{\otimes (n-m)})$ applies to the rest.  Assume
that both $C$ and $C'$ are symmetric.  Then the symmetrized $B$ is
given by
\begin{align}
\frac{1}{n!} \sum_\pi U_\pi B U_\pi^\dagger
&= \mqty(n\\ m)^{-1}\sum_{\bs k \in \mc K_{m}} (C\otimes C')_{\bs k},
\nonumber\\
(C\otimes C')_{\bs k} &\equiv 
\sum_{\bs u} C(\Pi_{\bs k}\bs u)C'(\Pi_{\bs k'}\bs u) E(\bs u),
\label{U}
\end{align}
where $\{E(\bs u)\}$ is an orthonormal basis of
$\mc O(\mc H_1^{\otimes n})$ given by Eq.~(\ref{E}), $C(\bs v)$ and
$C'(\bs w)$ are the components of $C$ and $C'$ with respect to the
same basis, the projection matrix $\Pi$ is defined by
Eq.~(\ref{project_j}), $\mc K_m$ is the $m$-combinations of
$\{1,\dots,n\}$, and for each $\bs k$, $\bs k'$ is defined as the rest
of the indices in $\{1,\dots,n\}$.
\end{proposition}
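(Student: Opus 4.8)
The plan is to reduce the operator identity to a combinatorial statement about the scalar components $B(\bs u) = \Avg{E(\bs u),B}_{\HS}$ and then perform a counting argument. By Eq.~(\ref{gce_perm_mat}), the symmetrization map acts componentwise by $\tilde B(\bs u) = \frac{1}{n!}\sum_\pi B(\hat\pi\bs u)$, so it suffices to evaluate this symmetrized component when $B = C\otimes C'$ and reassemble $\sum_{\bs u}\tilde B(\bs u)E(\bs u)$. Because $C$ acts on the first $m$ tensor factors and $C'$ on the last $n-m$, the components factorize as $B(\bs u) = C(\Pi_{\bs j_0}\bs u)\,C'(\Pi_{\bs j_0'}\bs u)$, where $\Pi$ is the projection matrix of Eq.~(\ref{project_j}), $\bs j_0 = (1,\dots,m)$, and $\bs j_0' = (m+1,\dots,n)$.

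The next step is to push the permutation through. Using $(\hat\pi\bs u)_j = u_{\pi^{-1}j}$, one gets $B(\hat\pi\bs u) = C(\Pi_{\bs j}\bs u)\,C'(\Pi_{\bs j'}\bs u)$ with $\bs j = (\pi^{-1}1,\dots,\pi^{-1}m)$ and $\bs j'$ the complementary tuple $(\pi^{-1}(m+1),\dots,\pi^{-1}n)$. Here the symmetry hypotheses enter: since $C(\hat\pi_m\bs v) = C(\bs v)$ for all $\pi_m \in S_m$, the factor $C(\Pi_{\bs j}\bs u)$ depends on $\bs j$ only through the underlying $m$-set $\{\bs j\}\in\mc K_m$ (notation of Eq.~(\ref{combs})), and likewise $C'(\Pi_{\bs j'}\bs u)$ depends only on the complementary $(n-m)$-set. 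Thus, as $\pi$ ranges over $S_n$, the summand depends only on the pair $(\bs k,\bs k')$ with $\bs k = \{\bs j\}$. A straightforward stabilizer count shows that exactly $m!(n-m)!$ permutations $\pi$ produce a given $\bs k$, namely those obtained by permuting the $m$ preimages among themselves and the remaining $n-m$ among themselves. Hence $\tilde B(\bs u) = \frac{m!(n-m)!}{n!}\sum_{\bs k \in \mc K_m} C(\Pi_{\bs k}\bs u)C'(\Pi_{\bs k'}\bs u)$, and multiplying by $E(\bs u)$, summing over $\bs u$, and using $\binom{n}{m}^{-1} = m!(n-m)!/n!$ gives Eq.~(\ref{U}) with $(C\otimes C')_{\bs k}$ exactly as defined there.

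I expect no serious obstacle; the work is entirely bookkeeping. The two points that need care are (i) keeping straight that $\hat\pi$ places $u_{\pi^{-1}j}$ in the $j$th slot, so that $C$ receives the arguments indexed by the \emph{preimage} set $\{\pi^{-1}1,\dots,\pi^{-1}m\}$ rather than an image set, and (ii) the orbit-counting step, which I would phrase as: the map $\pi \mapsto \{\pi^{-1}1,\dots,\pi^{-1}m\}$ from $S_n$ onto $\mc K_m$ has fibers of constant size $m!(n-m)!$. If one later wants to be fully rigorous for the infinite-dimensional thermal-state application, one additionally needs to note that for trace-class (or finite-rank) $B$ only finitely many $B(\bs u)$ are nonzero, legitimizing the rearrangements of the sums; in the finite-dimensional setting of the proposition this is automatic.
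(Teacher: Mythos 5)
Your proof is correct and follows essentially the same route as the paper's: reduce to the scalar components via Eq.~(\ref{gce_perm_mat}), use $(\hat\pi\bs u)_j = u_{\pi^{-1}j}$ (equivalently the paper's identity $\Pi_{\bs j}\hat\pi\bs u = \Pi_{\pi^{-1}\bs j}\bs u$) together with the symmetry of $C$ and $C'$ so that only the preimage $m$-set matters, and then count the $m!(n-m)!$ permutations in each fiber to produce the binomial coefficient. No gaps; the infinite-dimensional remark is extra but harmless.
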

See Appendix~\ref{app_proofs} for the proof.

Each $(C\otimes C')_{\bs k}$ in Eqs.~(\ref{U}) is an application of
$C$ on the $m$ Hilbert subspaces in $\mc H_1^{\otimes n}$ indexed by
$\bs k = (k_1,\dots,k_m)$ and an application of $C'$ on the other
$n-m$ Hilbert subspaces.  If $C$ is not symmetric, it can be
symmetrized first before Proposition~\ref{prop_U} is used. This is
because the left Haar measure is also the right Haar measure for the
permutation group, making the symmetrization map invariant to any
prior permutation as well. One is therefore free to symmetrize $C$ in
$C\otimes C'$ first before the total symmetrization in
Proposition~\ref{prop_U}. The same goes for $C'$.  If $B$ is in the
general form of $\otimes_n C_n$, Proposition~\ref{prop_U} can be
applied recursively to produce a generalized multinomial form of
Eqs.~(\ref{U}).

Proposition~\ref{prop_U} gives the quantum U-statistics in
Ref.~\cite{guta10} if Eq.~(\ref{B_perm}), a special case of
Eq.~(\ref{B_decomp}) with $C' = I_1^{\otimes(n-m)}$, is assumed.  The
classical U-statistics \cite{lehmann98,vaart} are obtained by assuming
$\{e(u) = \ket{u}\bra{u}\}$, where $\{\ket{u}\}$ is an orthonormal
basis of $\mc H_1$, and $C'(\Pi_{\bs k'}\bs u) = 1$, such that the
estimator in terms of the classical variable $\bs u$ becomes
\begin{align}
\tilde B(\bs u) &= \mqty(n\\ m)^{-1}
\sum_{\bs k \in \mc K_{m}} C(\Pi_{\bs k}\bs u).
\end{align}

\section{\label{app_proofs}Proofs}
\begin{proof}[Proof of Lemma~\ref{lem_tot}]
  To prove the first and last equalities in Eq.~(\ref{mean}), write
\begin{align}
\Avg{I_1,A}_\sigma &= \Avg{I_1,\mc E_\sigma A}_{\HS}
= \Avg{\mc E_\sigma I_1,A}_{\HS}
\nonumber\\
&= \Avg{\sigma,A}_{\HS} = \trace \sigma A,
\end{align}
where the self-adjoint property of $\mc E_\sigma$ and Eq.~(\ref{ax1})
have been used. To prove the second equality in Eq.~(\ref{mean}), plug
$c = I_2$ into Eq.~(\ref{gce}) and use the unital property
$\mc F^* I_2 = I_1$.
\end{proof}

\begin{proof}[Proof of Lemma~\ref{lem_var}]
Write
\begin{align}
\norm{A - a I_1}_\sigma^2 &= \norm{A}_\sigma^2 - 2 \Re\Bk{a^*\Avg{I_1,A}_\sigma}
+ |a|^2,
\label{err1}
\\
\norm{\gce{\mc F} A - a I_2}_{\mc F\sigma}^2 &= \norm{\gce{\mc F} A}_{\mc F\sigma}^2 
- 2 \Re\Bk{a^*\Avg{I_2,\gce{\mc F} A}_{\mc F\sigma}}
\nonumber\\&\quad + |a|^2.
\label{err2}
\end{align}
Lemma~\ref{lem_tot} gives
$\avg{I_1,A}_\sigma = \avg{I_2,\gce{\mc F} A}_{\mc F\sigma}$.  Then
Eq.~(\ref{err1}) minus Eq.~(\ref{err2}) gives Eq.~(\ref{RB}) via
Eq.~(\ref{Dmin}).
\end{proof}

\begin{proof}[Proof of Lemma~\ref{lem_sep}]
  Equation~(\ref{gce}) gives, for any $c \in L_2(\sigma_x)$,
\begin{align}
\Avg{c,\gce{\mc G}B}_{\sigma_x} &= \Avg{\mc G^* c,B}_{\rho_x}
\\
&= \trace \Bk{c^\dagger \trace_0 \bk{\mc E_{\sigma_x\otimes\tau} B}}
\\
&= \trace \Bk{(c\otimes I_0)^\dagger \mc E_{\sigma_x\otimes\tau} B}
\\
&= \Avg{c\otimes I_0,\mc E_{\sigma_x\otimes\tau}  B  }_{\HS}
\\
&=  \Avg{\mc E_{\sigma_x\otimes\tau}(c\otimes I_0),  B }_{\HS}
\\
&= \Avg{(\mc E_{\sigma_x}c)\otimes (\mc E_\tau I_0),  B }_{\HS}
\\
&= \Avg{(\mc E_{\sigma_x}c)\otimes \tau,  B }_{\HS}
\\
&= \trace\BK{[(\mc E_{\sigma_x} c)^\dagger\otimes \tau]  B }
\\
&= \trace \BK{(\mc E_{\sigma_x} c)^\dagger \trace_0[(I_1\otimes \tau)  B ]}
\\
&= \Avg{\mc E_{\sigma_x} c,\trace_0[(I_1\otimes \tau) B ]}_{\HS}
\\
&= \Avg{c,\trace_0[(I_1\otimes \tau)B ]}_{\sigma_x},
\label{lem_sep_proof}
\end{align}
where the self-adjoint property of $\mc E$ and Eqs.~(\ref{ax1}) and
(\ref{ax3}) have been used at various steps.
Equation~(\ref{lem_sep_proof}) means that
$\trace_0[(I_1\otimes \tau) B]$ is a solution to the GCE
$\gce{\mc G}B$.
\end{proof}

\begin{proof}[Proof of Corollary~\ref{cor_vn}]
  Corollary~\ref{cor_vn2} states that, given the local error $\MSE_x$
  for any POVM $M$ and any estimator $b$, there exists an
  operator-valued estimator on $\mc H_2$ with an error $\MSE_x'$ that
  satisfies $\MSE_x \ge \MSE_x'$ for all $x \in \mc X$.  The average
  error of $(M,b)$ is then also bounded as
\begin{align}
\sum_x P_X(x) \MSE_x \ge \sum_x P_X(x) \MSE_x' \ge
D_{\sigma,\mc F}(A,\gce{\mc F} A),
\end{align}
where the last inequality follows from the optimality of the Personick
estimator $\gce{\mc F} A$ among all observables on $\mc H_2$, as per
Definition~\ref{def_gce}.

\end{proof}

\begin{proof}[Proof of Lemma~\ref{lem_invar}]
\begin{align}
\mc G \mc E_{\rho_x} B &= 
\int \mu(dz) U_z \bk{\mc E_{\rho_x} B} U_z^\dagger
\\
&= \int \mu(dz)
 \mc E_{U_z \rho_x U_z^\dagger}(U_z B U_z^\dagger)
\\
&= \mc E_{\rho_x}\int \mu(dz) U_z B U_z^\dagger,
\label{lem_perm_proof}
\end{align}
where Eqs.~(\ref{ax2}) and (\ref{rho_invar}) have been used.  The
interchange of $\mc E_{\rho_x}$ and the Bochner integral is valid
because $\mc E_{\rho_x}$ is a linear map on a finite-dimensional
operator space (more assumptions would be needed for
infinite-dimensional operator spaces; see Corollary~2 on p.~134 in
Ref.~\cite{yosida}). By Eq.~(\ref{gce_exp}),
Eq.~(\ref{lem_perm_proof}) is equal to
$\mc E_{\mc G\rho_x} \gce{\mc G} B = \mc E_{\rho_x} \gce{\mc G}
B$, resulting in a solution to the GCE given by Eq.~(\ref{gce_invar}).
\end{proof}

\begin{proof}[Proof of Lemma~\ref{lem_sum}]
Assuming Eq.~(\ref{gce_sum}) and using Eq.~(\ref{E_sum}), one obtains
\begin{align}
\mc E_{\rho_x}  \gce{\mc G} B &= \bigoplus_n \mc E_{\sigma_x^{(n)}}\bk{\Pi_n B \Pi_n},
\end{align}
which is equal to
\begin{align}
\mc G \mc E_{\rho_x} B &= \bigoplus_n \Pi_n \bk{\mc E_{\rho_x}B}\Pi_n
= \bigoplus_n \mc E_{\sigma_x^{(n)}} \bk{\Pi_n B \Pi_n},
\end{align}
by virtue of Eq.~(\ref{E_project}). It follows that
Eq.~(\ref{gce_sum}) is a solution to the GCE, as per
Eq.~(\ref{gce_exp}).
\end{proof}
Note that Lemmas~\ref{lem_tot}--\ref{lem_sum} apply to classes of GCEs
and not just the Jordan version.  Note also that the GCEs for any
sequence of the channels can be computed by chaining the individual
GCEs in a manner reminiscent of calculus.

\begin{proof}[Proof of Proposition~\ref{prop_gauss}]
  The GCE defined by Eq.~(\ref{gce_exp}) can be solved by the operator
  Fourier transform
\begin{align}
\trace \bk{\mc E_{\mc F\sigma} \gce{\mc F} A} \tilde W(\zeta)
&= \trace \bk{\mc F \mc E_\sigma A} \tilde W(\zeta).
\label{gce_char}
\end{align}
The right-hand side can be expressed as
\begin{align}
\trace \bk{\mc F \mc E_\sigma A} \tilde W(\zeta)
&= \trace (\mc E_\sigma A) \mc F^* \tilde W(\zeta)
\\
&= f(\zeta) \trace (\mc E_\sigma A) W(F^\top \zeta)
\label{step_F}
\\
&= f(\zeta) \Bk{-i u^\top \nabla \phi(z)}_{z = F^\top\zeta}
\label{weyl_diff}
\\
&= u^\top\bk{ m +i \Sigma F^\top\zeta}
\tilde\phi(\zeta),
\label{step_char}
\end{align}
where Eq.~(\ref{step_F}) has used the Gaussian-channel definition
given by Eq.~(\ref{F_gauss}), Eq.~(\ref{weyl_diff}) has used
Eq.~(5.4.43) in Ref.~\cite{holevo11} with
$\nabla \equiv \mqty(\pdv*{}{x_1}& \pdv*{}{y_1} & \dots & \pdv*{x_s} &
\pdv*{y_s})^\top$, and Eq.~(\ref{step_char}) has used the Gaussian
$\phi(z)$ given by Eq.~(\ref{phi_gauss}) and the output
$\tilde\phi(\zeta)$ given by Eq.~(\ref{phi_tilde}).  With similar
steps and the ansatz
\begin{align}
\gce{\mc F} A &= v^\top \tilde Q + c,
&
v &\in \mathbb R^{2t},
&
c &\in \mathbb R,
\label{gce_ansatz}
\end{align}
the left-hand side of Eq.~(\ref{gce_char}) can be expressed as
\begin{align}
\trace \bk{\mc E_{\mc F\sigma} \gce{\mc F} A} \tilde W(\zeta)
&= \Bk{v^\top (\tilde m + i \tilde\Sigma \zeta) + c}\tilde\phi(\zeta),
\label{lhs_char}
\end{align}
where Eqs.~(\ref{phi_tilde}) and (\ref{phi_tilde2}) are assumed.
Equating Eq.~(\ref{step_char}) with Eq.~(\ref{lhs_char}) leads to
\begin{align}
v^\top &= u^\top\Sigma F^\top \tilde\Sigma^{-1},
\label{v}
\\
c &= u^\top m  - v^\top\tilde m.
\label{c}
\end{align}
Equations~(\ref{v}) and (\ref{c}) can then be substituted into
Eq.~(\ref{gce_ansatz}) to give Eqs.~(\ref{gce_gauss}) and
(\ref{kalman_gain}) via Eqs.~(\ref{m_tilde}) and (\ref{Sigma_tilde}).

To derive Eq.~(\ref{D_gauss}), use Lemmas~\ref{lem_tot} and \ref{lem_var}
to write
\begin{align}
a &= \trace \sigma A = \trace (\mc F\sigma) (\gce{\mc F} A),
\\
D_{\sigma,\mc F}(A,\gce{\mc F} A)
&= \norm{ A - a I_1}_\sigma^2 - \norm{\gce{\mc F} A - a I_2}_{\mc F\sigma}^2
\\
&= u^\top \Sigma u - v^\top \tilde\Sigma v,
\label{D_step}
\end{align}
where the last step has used the fact that $A$ and $\gce{\mc F} A$
are both quadrature operators and their variances are determined by
the covariance matrices of the Gaussian states. Substituting
Eqs.~(\ref{v}) and (\ref{Sigma_tilde}) into Eq.~(\ref{D_step}) leads to
Eq.~(\ref{D_gauss}).
\end{proof}

\begin{proof}[Proof of Corollary~\ref{cor_bayes_rb}]
  Let $c$ be any operator on $\mc H_3$. By the definition of
  $\gce{\mc G}^{\rho_x} B$ given by Eq.~(\ref{gce}),
\begin{align}
\Avg{c,\gce{\mc G}^{\rho_x} B}_{\mc G\rho_x} &= \Avg{\mc G^* c,B}_{\rho_x}
\quad
\forall x \in \mc X.
\end{align}
Taking the expectation of this equation with respect to $P_X(x)$, 
one obtains
\begin{align}
\sum_x P_X(x) \Avg{c,\gce{\mc G}^{\rho_x} B}_{\mc G\rho_x} &= 
\sum_x P_X(x) \Avg{\mc G^* c,B}_{\rho_x},
\\
\Avg{c,\gce{\mc G}^{\rho_x} B}_{\mc G\mc F\sigma} &= 
\Avg{\mc G^* c,B}_{\mc F\sigma},
\label{gce_rb_bayes}
\end{align}
where Eq.~(\ref{gce_rb_bayes}) has used the facts that $c$, $B$,
$\mc G$, and $\gce{\mc G}^{\rho_x}B$ all do not depend on $x$, the
trace and $\mc G$ are linear, the Jordan product is bilinear, and
$\sum_x P_X(x)\rho_x = \mc F\sigma$. Equation~(\ref{gce_rb_bayes}) means
that $\gce{\mc G}^{\rho_x}B$ satisfies the definition of the final
Personick estimator $\gce{\mc G}^{\mc F\sigma} B$ as per
Eq.~(\ref{gce}).

Equation~(\ref{bayes_suff}) can be proved by combining the
monotonicity of the Bayesian error (Corollary~\ref{cor_mono}) and the
quantum Rao-Blackwell theorem (Theorem~\ref{thm_qrbt}).

Equation~(\ref{mse_support}) can be proved by contradiction: assume
that there exists a $x \in \mc X$ with $P_X(x) > 0$ such that
$\MSE_x > \MSE_x'$.  Since $\MSE_x \ge \MSE_x'$ by
Theorem~\ref{thm_qrbt}, the assumption would imply
$\sum_x P_X(x) \MSE_x > \sum_x P_X(x) \MSE_x'$, which contradicts
Eq.~(\ref{bayes_suff}).  It follows that the assumption cannot hold
and one must have Eq.~(\ref{mse_support}).
\end{proof}

\begin{proof}[Proof of Proposition~\ref{prop_sinha}]
Given Eq.~(\ref{map_sinha}), a solution to any GCE is
\begin{align}
\gce{\mc G} B &= 
\sum_s \frac{\trace \Pi_s \mc E_{\rho_x} B \Pi_s}{\trace \Pi_s \rho_x \Pi_s}\Pi_s.
\label{GCE_sinha2}
\end{align}
The numerator can be expressed as
\begin{align}
\trace \Pi_s \mc E_{\rho_x} B \Pi_s
&= \avg{\Pi_s,\mc E_{\rho_x} B}_{\HS}
= \avg{\mc E_{\rho_x}\Pi_s, B}_{\HS}
\\
&= \avg{\rho_x \Pi_s,B}_{\HS}
= \varphi(s,x) \trace \sigma_s B_s,
\end{align}
where the self-adjoint property of $\mc E_{\rho_x}$, the commutativity
between $\rho_x$ and $\Pi_s$, and Eqs.~(\ref{ax1}), (\ref{Bt}), and
(\ref{sinha_family}) have been used at various steps. Similarly, the
denominator in Eq.~(\ref{GCE_sinha2}) can be expressed as
\begin{align}
\trace \Pi_s \rho_x \Pi_s &= \varphi(s,x) \trace \sigma_s.
\end{align}
Equation~(\ref{GCE_sinha}) then follows.
\end{proof}

\begin{proof}[Proof of Proposition~\ref{prop_U}]
With Eq.~(\ref{B_decomp}), $B(\bs u)$ becomes
\begin{align}
B(\bs u) &= C(\Pi_{[1,m]}\bs u) C'(\Pi_{[m+1,n]}\bs u),
\\
[1,m] &\equiv (1,\dots,m),
\\
[m+1,n] &\equiv (m+1,\dots,n).
\end{align}
With the identity
\begin{align}
\Pi_{\bs j}\hat\pi\bs u &= \Pi_{\pi^{-1}\bs j}\bs u,
\end{align}
$B(\hat\pi\bs u)$ in Eq.~(\ref{gce_perm_mat}) becomes
\begin{align}
B(\hat\pi\bs u) &= C(\Pi_{[1,m]}\hat\pi\bs u) 
C'(\Pi_{[m+1,n]}\hat\pi\bs u)
\\
&= C(\Pi_{\pi^{-1}[1,m]}\bs u) C'(\Pi_{\pi^{-1}[m+1,n]}\bs u).
\end{align}
The symmetry of $C$ and $C'$ implies
\begin{align}
C(\Pi_{\bs j}\bs u)  &= C(\Pi_{\{\bs j\}}\bs u),
\quad
\forall \bs j \in \mc J_m,
\\
C'(\Pi_{\bs j}\bs u) &= C'(\Pi_{\{\bs j\}}\bs u),
\quad
\forall \bs j \in \mc J_{n-m},
\\
B(\hat\pi\bs u) &= C(\Pi_{\{\pi^{-1}[1,m]\}}\bs u) C'(\Pi_{\{\pi^{-1}[m+1,n]\}}\bs u).
\end{align}
The $n!$ summands in Eq.~(\ref{gce_perm_mat}) with respect to $\pi$
can now be divided into subsets indexed by Eq.~(\ref{combs}). Each
subset, indexed by a $\bs k \in \mc K_m$, contains $m!(n-m)!$ terms
all equal to $C(\Pi_{\bs k}\bs u)C'(\Pi_{\bs k'}\bs u)$ with
\begin{align}
\bs k &= \BK{\pi^{-1}[1,m]},
&
\bs k' &= \BK{\pi^{-1}[m+1,n]}.
\end{align}
The sum in Eq.~(\ref{gce_perm_mat}) becomes
\begin{align}
\frac{1}{n!}\sum_\pi B(\hat\pi\bs u)
&= 
\mqty(n\\ m)^{-1}\sum_{\bs k \in \mc K_{m}} C(\Pi_{\bs k}\bs u)C'(\Pi_{\bs k'}\bs u),
\label{perm_decomp}
\end{align}
where $\mqty(n\\ m) \equiv n!/m!(n-m)! =|\mc K_m|$ is the binomial
coefficient.  The proposition hence follows.
\end{proof}

\bibliographystyle{unsrtnat}
\bibliography{research2}



\end{document}